\definecolor{cite}{rgb}{0.29,0,0.51} %
\definecolor{link}{rgb}{0.5,0,0} %
\definecolor{url}{rgb}{0,0.4,0.4} %
\theoremstyle{plain}
\newtheorem{theorem}{Theorem}[section]
\newtheorem{proposition}[theorem]{Proposition}
\newtheorem{lemma}[theorem]{Lemma}
\newtheorem{corollary}[theorem]{Corollary}
\newcommand{\defn}[1]{\textbf{\emph{\boldmath #1}}}
\newcommand{\plusclue}{\CLUE+ clue\xspace}
\newcommand{\vertclue}{\CLUE| clue\xspace}
\newcommand{\horzclue}{\CLUE- clue\xspace}
\newcommand{\plusclues}{\CLUE+ clues\xspace}
\newcommand{\vertclues}{\CLUE| clues\xspace}
 \gdef\xxxmark{%
   \expandafter\ifx\csname @mpargs\endcsname\relax %
     \expandafter\ifx\csname @captype\endcsname\relax %
       \marginpar{xxx}%
     \else
       xxx %
     \fi
   \else
     xxx %
   \fi}
 \gdef\xxx{\@ifnextchar[\xxx@lab\xxx@nolab}
 \long\gdef\xxx@lab[#1]#2{\textbf{[\xxxmark #2 ---{\sc #1}]}}
 \long\gdef\xxx@nolab#1{\textbf{[\xxxmark #1]}}
 \long\gdef\xxx@lab[#1]#2{}\long\gdef\xxx@nolab#1{}%
\def\andlinebreak{\end{tabular}\linebreak\begin{tabular}[t]{c}}
\def\CLUE#1{\edef\CLUEarg{#1}%
  \hspace{.1em}%
  \if+\CLUEarg
    \includegraphics[height=1.75ex]{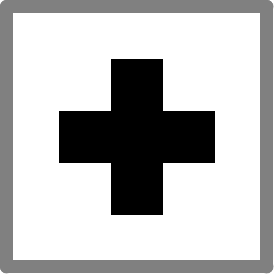}%
  \else
    \if-\CLUEarg
      \includegraphics[height=1.75ex]{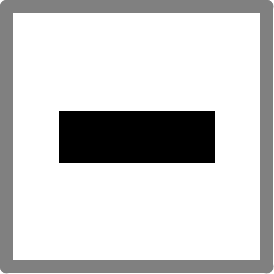}%
    \else
      \if|\CLUEarg
        \includegraphics[height=1.75ex]{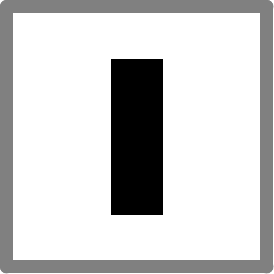}%
      \fi
    \fi
  \fi
  \hspace{.1em}}
\newlength\WIDTH
\edef\SCALE{\calc{\WIDTH / 2178.75 * 72 / 72.27}} %
\title{Tatamibari is NP-complete}
\author{%
  Aviv Adler%
    \thanks{MIT Computer Science and Artificial Intelligence Laboratory,
      32 Vassar St., Cambridge, MA 02139, USA.
      \protect\url{{adlera,jbosboom,edemaine,quanquan,jaysonl}@mit.edu}}
\and
  Jeffrey Bosboom\footnotemark[1]
\and
  Erik D. Demaine\footnotemark[1]
\andlinebreak
  Martin L. Demaine\footnotemark[1]
\and
  Quanquan C. Liu\footnotemark[1]
\and
  Jayson Lynch\footnotemark[1]
}
\date{}
\begin{document}

\maketitle

\begin{abstract}
  In the Nikoli pencil-and-paper game Tatamibari, a puzzle consists of an
  $m \times n$ grid of cells, where each cell possibly contains a clue among
  \CLUE+, \CLUE-, \CLUE|.  The goal is to partition the grid into disjoint
  rectangles, where every rectangle contains exactly one clue,
  rectangles containing \CLUE+ are square, rectangles containing \CLUE-
  are strictly longer horizontally than vertically, rectangles containing
  \CLUE| are strictly longer vertically than horizontally, and
  no four rectangles share a corner.
  We prove this puzzle NP-complete, establishing a Nikoli gap of 16 years.
  Along the way, we introduce a gadget framework for proving hardness of
  similar puzzles involving area coverage, and show that it applies to an
  existing NP-hardness proof for Spiral Galaxies.
  We also present a mathematical puzzle font for Tatamibari.
\end{abstract}

\section{Introduction}

Nikoli is perhaps the world leading publisher of pencil-and-paper logic puzzles,
having invented and/or popularized hundreds of different puzzles through their
\emph{Puzzle Communication Nikoli} magazine and hundreds of books.
Their English website \cite{nikoli-puzzles} currently lists 38 puzzle types,
while their ``omopa list'' \cite{omopalist} currently lists 456 puzzle types
and their corresponding first appearance in the magazine.

Nikoli's puzzles have drawn extensive interest by theoretical computer
scientists (including the FUN community): whenever a new puzzle type gets
released, researchers tackle its computational complexity.
For example, the following puzzles are all NP-complete:
Bag / Corral \cite{Friedman-2002-corral},
Country Road \cite{YajilinandCountryRoad},
Fillomino \cite{Yato-2003},
Hashiwokakero \cite{Hashiwokakero},
Heyawake \cite{Holzer-Ruepp-2007},
Hiroimono / Goishi Hiroi \cite{Andersson-2007},
Hitori \cite[Section~9.2]{GPC},
Kakuro / Cross Sum \cite{Yato-Seta-2003},
Kurodoko \cite{Kurodoko},
Light Up / Akari \cite{McPhail-2005-LightUp},
LITS \cite{McPhail-2007-LITS},
Masyu / Pearl \cite{Friedman-2002-pearl},
Nonogram / Paint By Numbers \cite{Ueda-Nagao-1996},
Numberlink \cite{NumberlinkNP,ADDORVS15},
Nurikabe \cite{McPhail-2003,Holzer-Klein-Kutrib-2004},
Shakashaka \cite{Shakashaka,NumberlessShakashaka_CCCG2015},
Slitherlink \cite{Yato-Seta-2003,Yato-2003,Witness_FUN2018},
Spiral Galaxies / Tentai Show \cite{SpiralGalaxies},
Sudoku \cite{Yato-Seta-2003,Yato-2003},
Yajilin \cite{YajilinandCountryRoad},
and
Yosenabe \cite{Yosenabe}.

Allen et al.~\cite{Nikoli-gap} defined the \defn{Nikoli gap} to be the amount
of time between the first publication of a Nikoli puzzle and a hardness result
for that puzzle type.
They observed that, while early Nikoli puzzles have a gap of 10--20 years,
puzzles released within the past ten years tend to have a gap of $<5$ years.

In this paper, we prove NP-completeness of a Nikoli puzzle first published
in 2004 \cite{tatamibari-original} (according to \cite{omopalist}),
establishing a Nikoli gap of 16 years.%
\footnote{While this gap may be caused by the puzzle being difficult to prove
  hard or simply overlooked (or both), we can confirm that it took us nearly
  six years to write this paper.}
Specifically, we prove NP-completeness of the Nikoli puzzle \defn{Tatamibari}
(\begin{CJK}{UTF8}{min}タタミバリ\end{CJK}), named after Japanese tatami mats.
A Tatamibari \defn{puzzle} consists of a rectangular $m \times n$ grid of
unit-square cells, some $k$ of which contain one of three different kinds of
clues: \CLUE+, \CLUE|, and \CLUE-.
(The remaining $m \cdot n - k$ cells are empty, i.e., contain no clue.)
A \defn{solution} to such a puzzle is a set of $k$ grid-aligned rectangles
satisfying the following constraints:
\begin{enumerate}
\item The rectangles are disjoint.
\item The rectangles together cover all cells of the puzzle.
\item Each rectangle contains exactly one symbol in it.
\item The rectangle containing a \CLUE+ (``square'') symbol is a square, i.e., has equal width (horizontal dimension) and height (vertical dimension).
\item The rectangle containing a \CLUE- (``horizontal'') symbol has greater width than height.
\item The rectangle containing a \CLUE| (``vertical'') symbol has greater height than width.
\item No four rectangles share the same corner (\defn{four-corner constraint}).
\end{enumerate}

To prove our hardness result, we first introduce in \cref{sec:framework}
a general ``gadget area hardness framework'' for arguing about (assemblies of)
local gadgets whose logical behavior is characterized by area coverage.
Then we apply this framework to prove Tatamibari NP-hard in
\cref{sec:main-section}.
In \cref{sec:spiral-galaxies}, we show that our framework applies to at least
one existing NP-hardness proof, for the Nikoli puzzle Spiral Galaxies
\cite{SpiralGalaxies}.

We also present in \cref{sec:font} a mathematical puzzle font \cite{Fonts_TCS}
for Tatamibari, consisting of 26 Tatamibari puzzles whose solutions draw each
letter of the alphabet.  This font enables writing secret messages, such as
the one in \cref{fig:FUN}, that can be decoded by solving the Tatamibari
puzzles.  This font complements a similar font for another Nikoli puzzle,
Spiral Galaxies \cite{SpiralGalaxies_MOVES2017}.

\def\FUNscale{1.4}

\begin{figure}
  \centering
  \includegraphics[scale=\FUNscale]{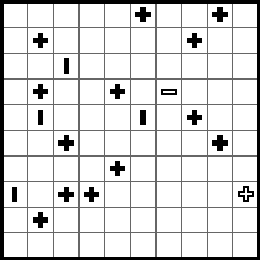}\qquad
  \includegraphics[scale=\FUNscale]{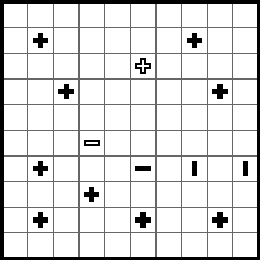}\qquad
  \includegraphics[scale=\FUNscale]{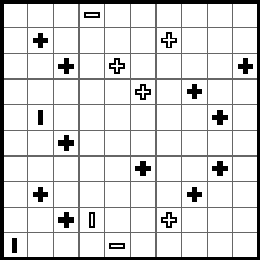}
  \caption{What do these Tatamibari puzzles spell when solved and the dark
    clues' rectangles are filled in?  \cref{fig:FUN solved} gives a solution.}
  \label{fig:FUN}
\end{figure}

\section{Gadget Area Hardness Framework} \label{sec:framework}

\paragraph{Puzzles.}
The \defn{gadget area hardness framework} applies to a general
\defn{puzzle type} (e.g., Tatamibari or Spiral Galaxies)
that defines puzzle-specific mechanics.
In general, a \defn{subpuzzle} is defined by an embedded
planar graph, whose finite faces are called \defn{cells}, together with
an optional \defn{clue} (e.g., number or symbol) in each cell.
The puzzle type defines which subpuzzles are valid \defn{puzzles},
in particular, which clue types and planar graphs are permitted,
as well as any additional \textbf{properties} guaranteed by a hardness reduction
producing the puzzles.

We will use the unrestricted notion of subpuzzles to define gadgets.
Define an \defn{area} of a puzzle to be a connected set of cells.
An \defn{instance} of a subpuzzle in a puzzle is an area of the puzzle
such that the restriction of the puzzle to that area (discarding all cells
and clues outside the area) is exactly the subpuzzle.

\paragraph{Solutions.}
An \defn{area assignment} (potential solution) for a (sub)puzzle
is a mapping from clues to areas such that
(1)~the areas disjointly partition the cells of the (sub)puzzle,
and (2)~each area contains the cell with the corresponding clue.
The puzzle type defines when an area assignment is an actual \defn{solution}
to a puzzle or a \defn{local solution} to a subpuzzle.

\xxx{Running example of Tatamibari?}

\paragraph{Gadgets.}
A \defn{gadget} is a subpuzzle plus a partition of its \defn{entire} area
(all of its cells)
into one \defn{mandatory} area and two or more \defn{optional} areas,
where all clues are in the mandatory area.
A hardness reduction using this framework should compose puzzles from instances
of gadgets that overlap only in optional areas, and provide a
\defn{filling algorithm} that defines which clues are in the areas exterior
to all gadgets.
Each gadget thus defines the entire set of clues of the puzzle
within the gadget's (mandatory) area.

For a given gadget,
a \defn{gadget area assignment} is an area assignment for the subpuzzle
that satisfies three additional properties:
\begin{enumerate}
\item the assigned areas cover the gadget's mandatory area;
\item every optional area is either fully covered or fully uncovered
      by assigned areas; and
\item every assigned area lies within the gadget's entire area.
\end{enumerate}
A \defn{gadget solution} is a gadget area assignment that is a local solution
as defined by the puzzle type.

\paragraph{Profiles.}
A \defn{profile} of a gadget is a subset of the gadget's entire area.
A profile is \defn{proper} if it satisfies two additional properties:
\begin{enumerate}
\item the profile contains the mandatory area of the gadget; and
\item every optional area of the gadget is either fully contained or
      disjoint from the profile.
\end{enumerate}
Every gadget area assignment induces a proper profile, namely,
the union of the assigned areas.

A profile is \defn{locally solvable} if there is a gadget solution
with that profile.
A profile is \defn{locally impossible} if, in any puzzle containing an
instance of the gadget, there is no solution to the entire puzzle
such that the union of the areas assigned to the clues of the gadget instance
is that profile.
These notions might not be negations of each other because of differences
between local solutions of a subpuzzle and solutions of a puzzle.

Each gadget is characterized by a \defn{profile table} (like a truth table)
that lists all profiles that are locally solvable, and for each such profile,
gives a gadget solution.
A profile table is \defn{proper} if it contains only proper profiles.
A profile table is \defn{complete} if every profile not in the table
is locally impossible.
A hardness reduction using this framework should prove that
the profile table of each gadget is proper and complete,
in particular, that any improper profile is locally impossible.

Given a puzzle containing some gadget instances,
a \defn{profile assignment} specifies a profile for each gadget such that
the profiles are pairwise disjoint and the union of the profiles covers the
union of the entire areas of the gadgets.
In particular, such an assignment decides which overlapping optional areas
are covered by which gadgets.
A profile assignment is \defn{valid} if every gadget is locally solvable
with its assigned profile, i.e., every assigned profile is in the profile table
of the corresponding gadget.

A hardness reduction using this framework should prove that
every valid profile assignment can be extended to a solution
of the entire puzzle by giving a \defn{composition algorithm}
for composing local solutions from the profile tables of the gadgets,
possibly modifying these local solutions to be globally consistent,
and extending these solutions to assign areas to clues
from the filling algorithm (exterior to all gadgets).

\xxx{contrast from ``local puzzle constraints'' with finite set of stamps around each clue?}

\section{Tatamibari is NP-hard} \label{sec:main-section}

In this section, we prove Tatamibari NP-hard by a reduction from
planar rectilinear monotone 3SAT.
In \cref{sec:prm-3sat}
 we briefly discuss a more constrained (but still NP-hard) variant of the classic 3SAT problem from which we will make our reduction; in \cref{sec:wires}, \cref{sec:variables}, and \cref{sec:clauses}, we describe the gadgets (wires, variables, and clauses) from which we build the reduction; in \cref{sec:filler}, we discuss how the spaces between the gadgets are filled; and in \cref{sec:finale} we use everything to show the main result.

\subsection{Reduction Overview} \label{sec:prm-3sat}

We reduce from \emph{planar rectilinear monotone 3SAT}, proved NP-hard in~\cite{DK10}.  An instance of planar rectilinear monotone 3SAT comes with a planar rectilinear drawing of the clause-variable graph in which each variable is a horizontal segment on the $x$-axis and each clause is a horizontal segment above or below the axis, with rectilinear edges connecting variables to the clauses in which they appear.  Each clause contains only positive or negative literals (i.e., is monotone); clauses containing positive (negative) literals appear above (below) the variables.  We can always lengthen the variable and clause segments to remove bends in the edges, so we assume the edges are vertical line segments.
We can further assume that each clause consists of exactly three (not necessarily distinct) literals: if a clause has $k < 3$ literals, we can just duplicate one of the clauses $3-k$ times, which is easy to do while preserving the tri-legged rectilinear layout.

We create and arrange our gadgets directly following the drawing, possibly after scaling it up; see \cref{fig:filler-diagram}.  Edges between variables and clauses are represented by \emph{wire gadgets} that communicate a truth value in the parity of their covering.  For each variable, we create a \emph{variable gadget}, which is essentially a block of wires forced to have the same value, and place it to fill the variable's line segment in the drawing.  For each clause, we create a \emph{clause gadget} with three wire connection points and place it to fill the clause's line segment.  Negative clauses and wires representing negative literals are mirrored vertically.  Both the variable and clause gadgets can telescope to any width to match the drawing; unused wires from the variable gadgets are terminated at a \emph{terminator}.  By our assumption that the edges are vertical segments, we do not need a turn gadget.

\begin{figure}
\centering
\includegraphics[width=0.9\linewidth]{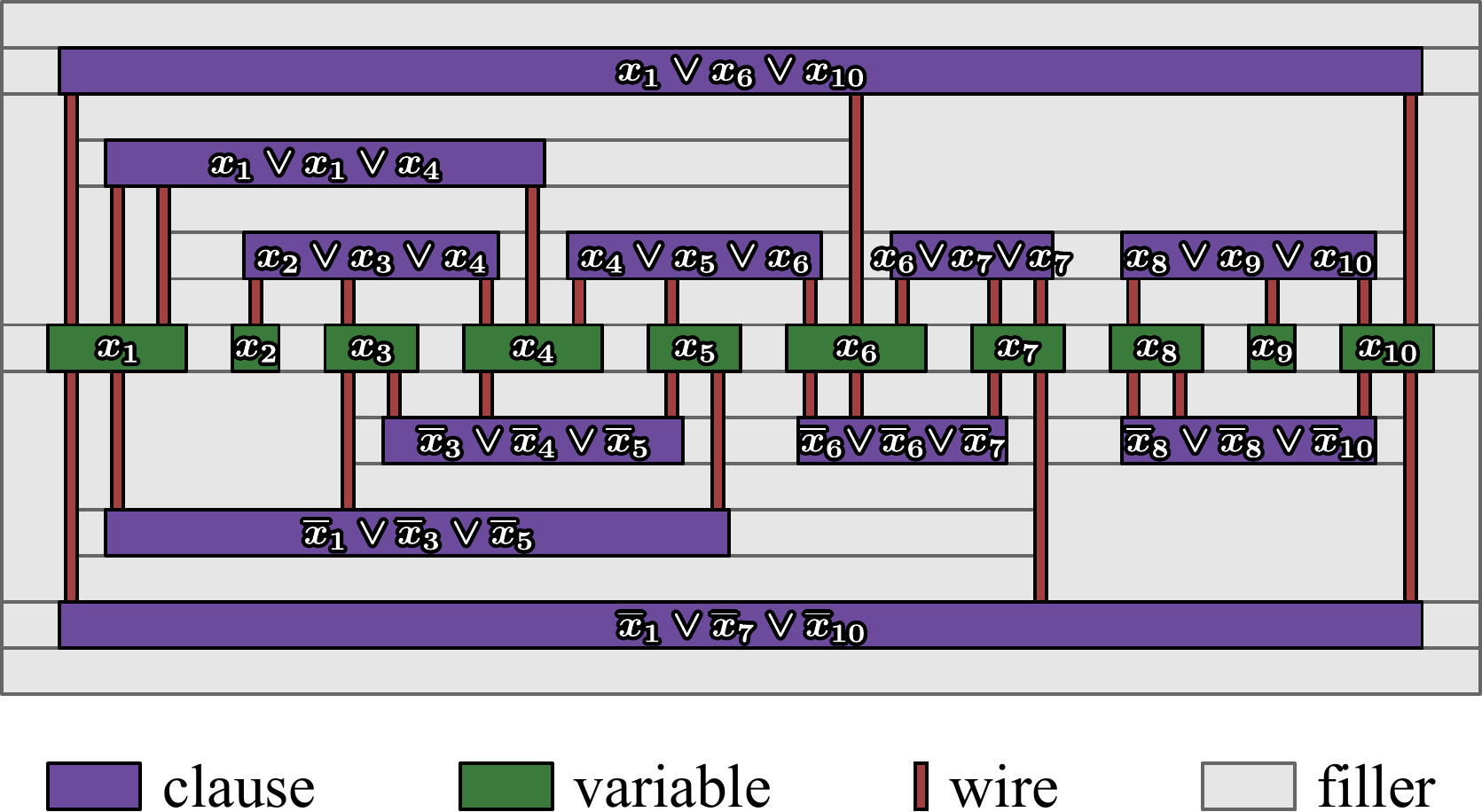}
\caption{The overall layout of the Tatamibari puzzles produced by our reduction follows the input planar rectilinear monotone 3SAT instance.
Clause, variable, and wire gadgets are represented by
purple, green, and red rectangles.
Not drawn are terminator gadgets at the base of all unused copies of variables.
Grey rectangles correspond to individual filler clues.
Figure inspired by \cite[Figure 2]{DK10}.}
\label{fig:filler-diagram}
\end{figure}

Covering a clause gadget without double-covering or committing a four-corner violation requires at least one of its attached wires to be covered with the satisfying parity (the true parity for positive clauses and the false parity for negative clauses).

To ensure clues in one gadget do not interfere with other gadgets, the wire gadget is surrounded on its left and right sides by sheathing of \vertclue rectangles and the clause gadget is surrounded on three sides by a line of \CLUE+ clues forced to form $1 \times 1$ rectangles.  Wire sheathing also ensures neighboring wires do not constrain each other, except in variable gadgets where the sheathing is deliberately punctured.

In our construction, gadgets will not overlap in their mandatory areas,
so in the intended solutions, the mandatory area will be fully covered by
rectangles satisfying the gadget's clues.
Also in our construction, every optional area will belong to exactly two
gadgets, and in the intended solutions, such an area
will be covered by clues in exactly one of those gadgets.

To apply the gadget area hardness framework, we define
a \defn{local solution} of a subpuzzle to be a disjoint set of rectangles
satisfying the gadget's clues and the property that
no four of these rectangles share a corner.
(At the boundary of the subpuzzle, there is no constraint.)
Our composition algorithm will combine these local solutions by
staggering rectangles to avoid four-corner violations on the boundary of
and exterior to gadgets.
We will prove that valid profile assignments correspond one-to-one to
satisfying truth assignments of the 3SAT instance.

We developed our gadgets using a Tatamibari solver based on the SMT solver Z3~\cite{DBLP:conf/tacas/MouraB08}.  The solver and machine-readable gadget diagrams are available \cite{github}\xxx{put them on GitHub (in a sanitized repo)}.
Unfortunately, the solver can only verify the correctness of
constant-size instances of the gadgets, but the variable and clause gadgets must telescope to arbitrary width.  Thus we still need to give manual
proofs of correctness.

\subsection{Wire Gadgets and Terminators} \label{sec:wires}
The wire gadget, shown in \cref{fig:wire}, consists of a column of \plusclues surrounded by \vertclues which encodes a truth value in the parity of whether the squares are oriented with the \plusclues in their upper left or lower left corners. We will call this the \emph{wire parity} or \emph{wire value}. In this construction, only vertical wires are needed, and thus we do not give a turn gadget or horizontal wire. We call the column containing the \plusclues and the empty column next to it the \emph{inner wire}.  The inner wire is covered by columns of alternating \vertclues, called the \emph{(inner) sheathing}.  In the overall reduction, \vertclues in columns just outside the wire at its ends (in the variable gadget and either the clause or terminator gadget) add a further layer of sheathing (called the \emph{outer sheathing}) outside the wire gadget, ensuring neighboring wires do not constrain each other.

The following lemmas will show that the \plusclues in the wire must be covered by $2\times 2$ squares, the squares must all have the same parity, and the wire does not impart any significant constraints onto the surrounding region.  These lemmas assume that no rectangle from a \vertclue can reach the cells to the right of the top and bottom \plusclues in the wire, a property which we call \defn{safe placement}.  We discharge this assumption in Section~\ref{sec:filler} by showing all wire gadgets produced by our reduction are safely placed.

\begin{figure}
\centering
\begin{subfigure}[t]{0.45\linewidth}
  \centering
  \includegraphics[scale=\SCALE]{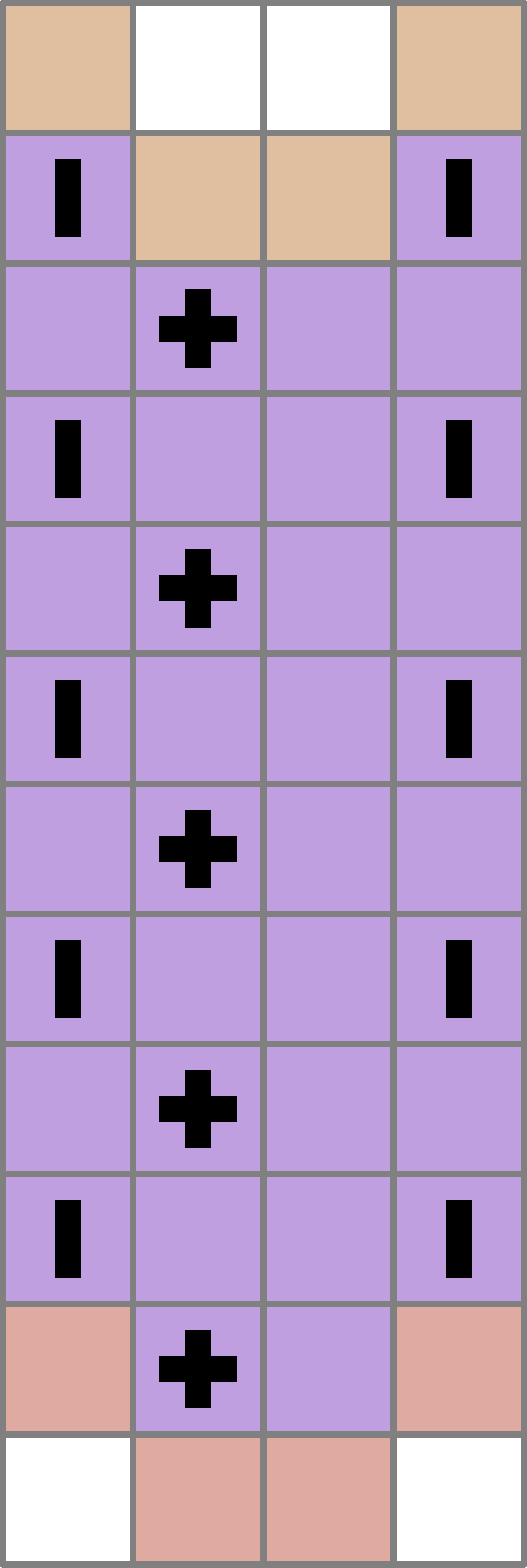}
  \caption{An unsolved wire gadget. Mandatory area is purple and optional areas are brown.}
  \label{fig:wire-unsolved}
\end{subfigure}\hfill
\begin{subfigure}[t]{0.25\linewidth}
  \centering
  \includegraphics[scale=\SCALE,trim=0 -75 0 0]{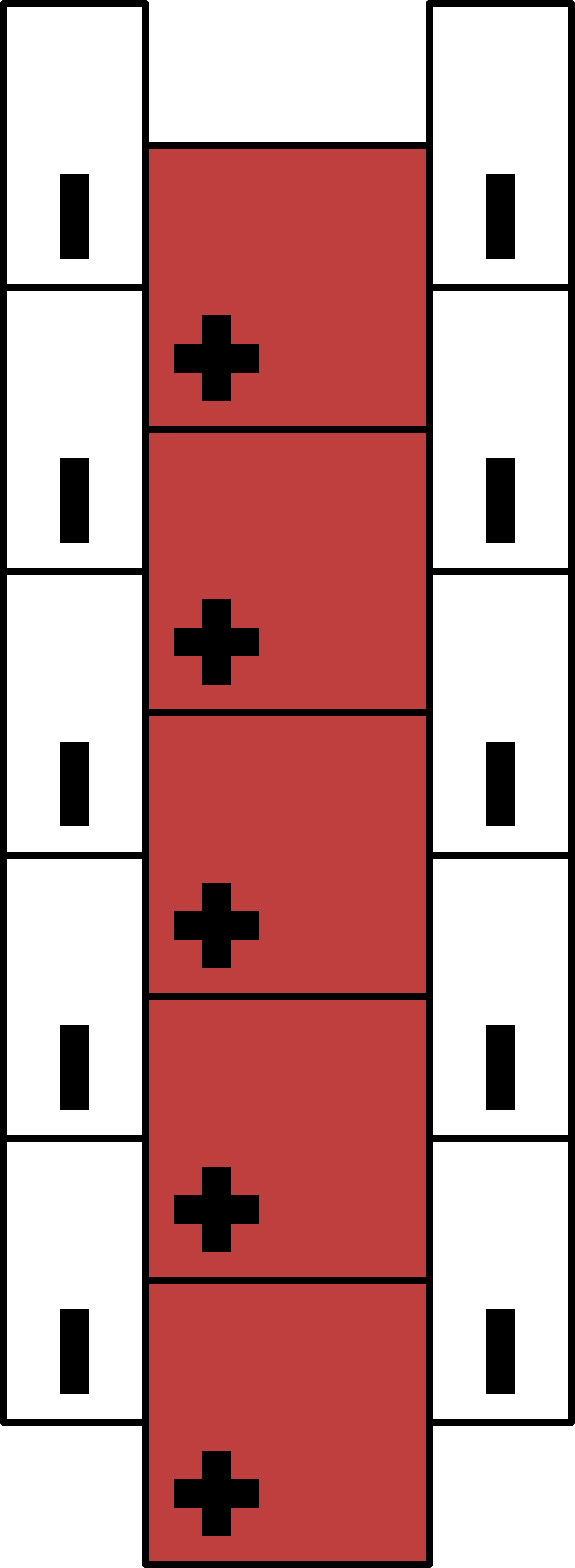}
  \caption{Wire communicating false}
  \label{fig:wire-0}
\end{subfigure}\hfill
\begin{subfigure}[t]{0.25\linewidth}
  \centering
  \includegraphics[scale=\SCALE]{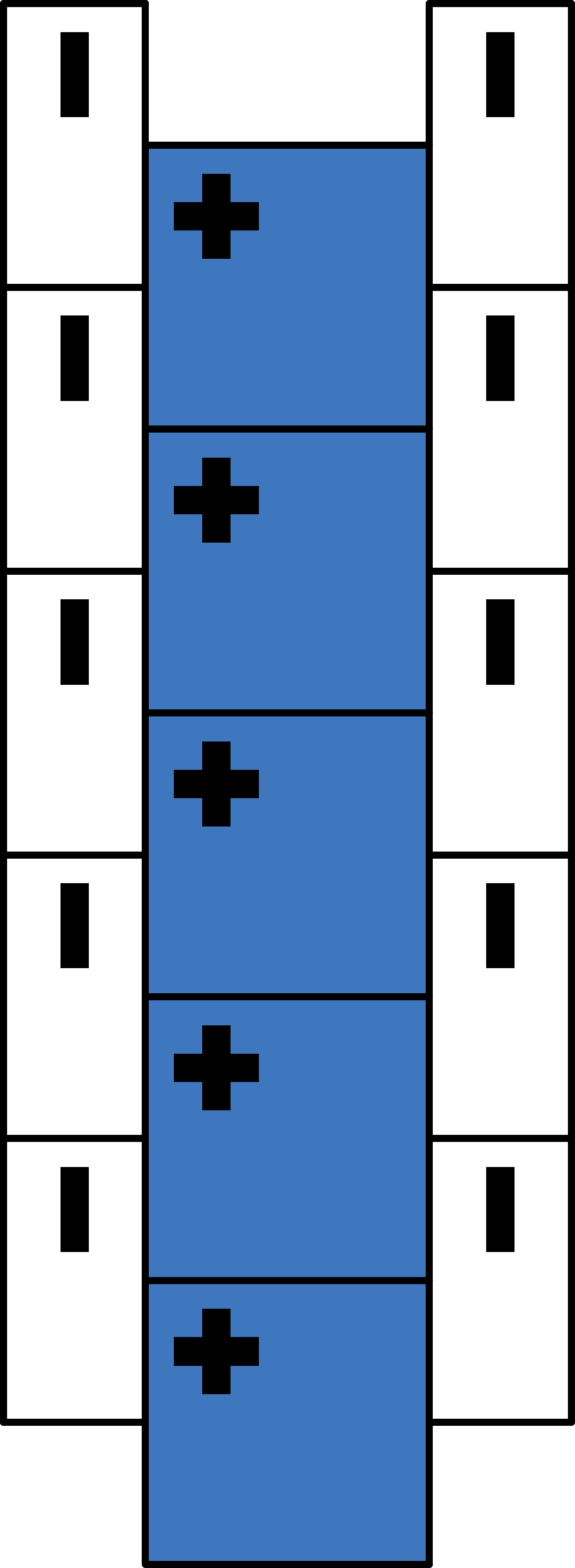}
  \caption{Wire communicating true}
  \label{fig:wire-1}
\end{subfigure}
\caption{Wire gadget and its profile table.  The wire can be extended to arbitrary height by repeating rows. Note that between figures (b) and (c), the clues stay in the same place (and the rectangles shift to represent the different values of the wire).}
\label{fig:wire}
\end{figure}

\begin{figure}
\centering
\begin{subfigure}[t]{0.45\linewidth}
  \centering
  \includegraphics[scale=\SCALE]{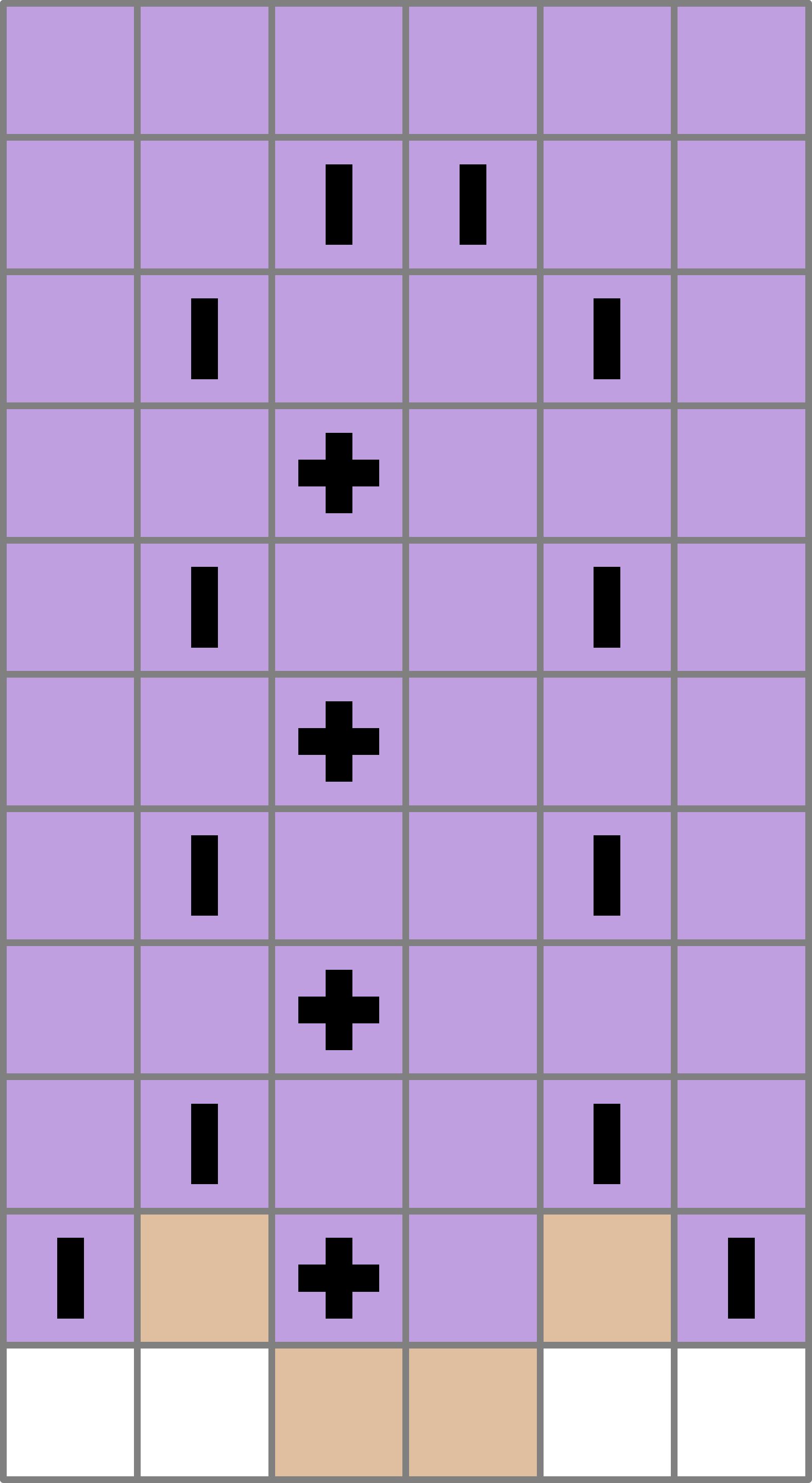}
  \caption{An unsolved terminator gadget. Mandatory area is purple and optional area is brown.}
\end{subfigure}\hfill
\begin{subfigure}[t]{0.25\linewidth}
  \centering
  \includegraphics[scale=\SCALE,trim=0 -75 0 0]{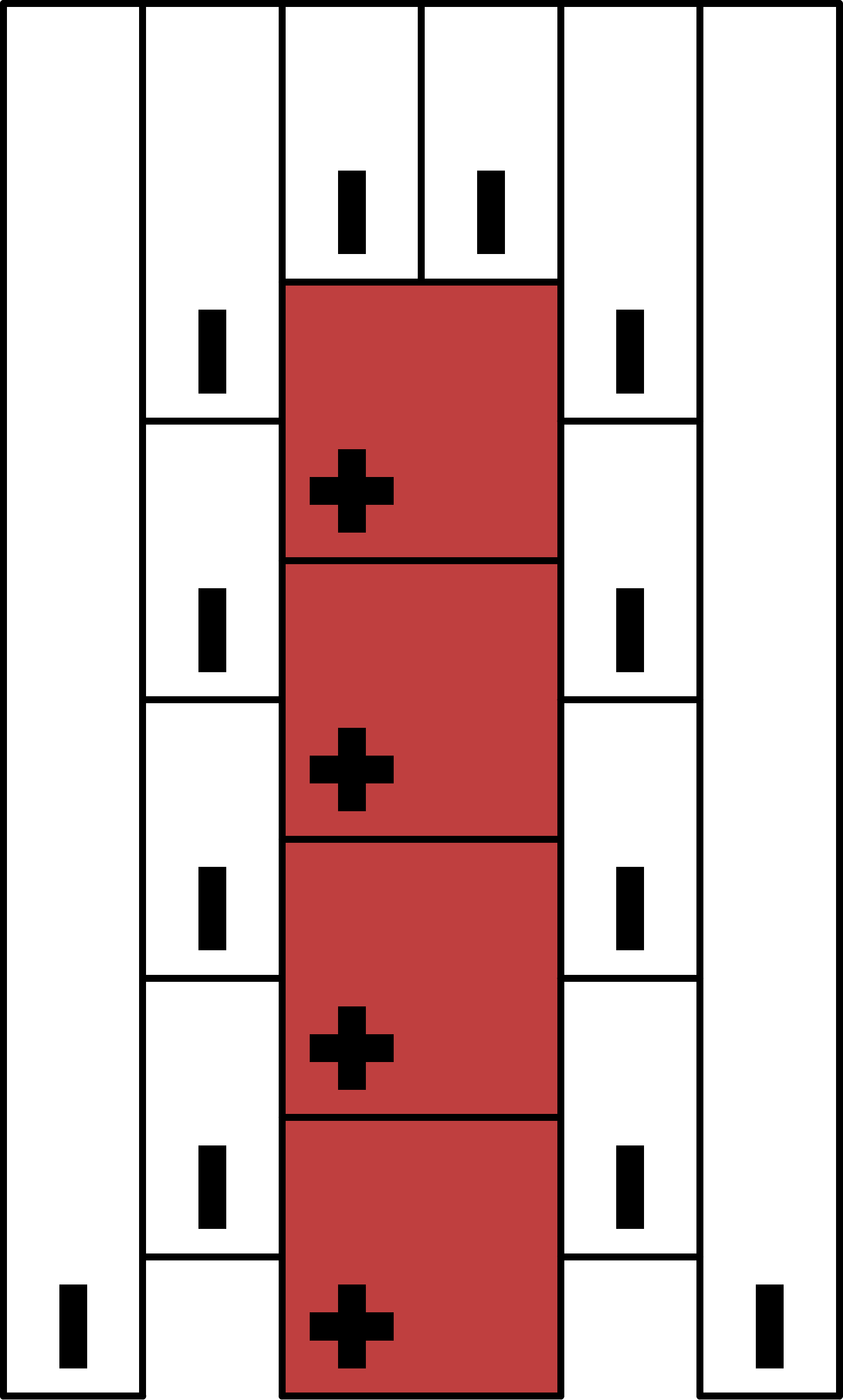}
  \caption{Terminating a false wire}
  \label{fig:terminate-0}
\end{subfigure}\hfill
\begin{subfigure}[t]{0.25\linewidth}
  \centering
  \includegraphics[scale=\SCALE]{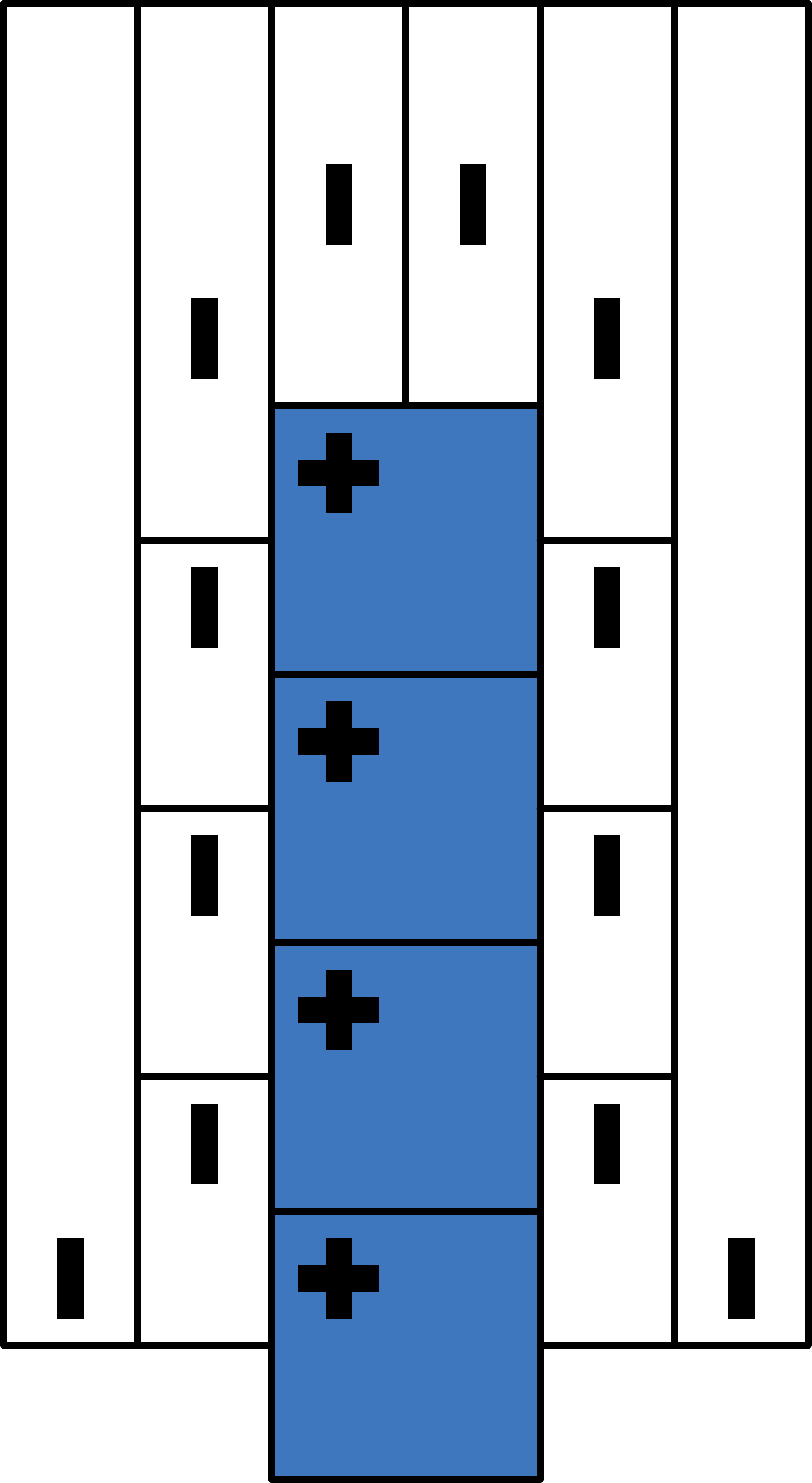}
  \caption{Terminating a true wire}
  \label{fig:terminate-1}
\end{subfigure}
\caption{Terminator gadget and its profile table.}
\label{fig:terminator}
\end{figure}

\begin{lemma} \label{lem:wire2x2}
  Each \CLUE+ in a safely placed wire covers a $2 \times 2$ square in the wire.
\end{lemma}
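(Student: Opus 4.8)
The plan is to pin down the side length $s$ of each \CLUE+ square from two directions. By constraint~4 the rectangle covering a given \CLUE+ is an $s \times s$ square with $s \ge 1$ whose cells include that clue's cell; I will show $s \le 2$ and $s \ge 2$, and along the way that the square sits in the clue column together with the adjacent empty column of the inner wire. Throughout I take the \plusclues to lie in a single column with the empty column immediately to its right (so the clue is on the left edge of its square, matching the upper-left/lower-left orientations) and the two sheathing columns of \vertclues flanking the inner wire.

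For the upper bound $s \le 2$, I would argue that a square of side $s \ge 3$ spans at least three columns and hence must include one of the flanking sheathing columns; since those columns carry \vertclues packed densely enough that any window of $s \ge 3$ consecutive rows meets one, such a square would contain two clues, contradicting constraint~3. The same density argument shows a width-$2$ square cannot extend left into the sheathing (two rows of that column already contain a clue), so the only admissible $2 \times 2$ placement is the one extending rightward into the empty column; this simultaneously fixes the horizontal position once $s = 2$ is established. For the lower bound $s \ge 2$, I rule out $s = 1$ by a covering argument on the empty column: by constraint~2 every cell of that column must be covered, and the only rectangles that can reach it are \CLUE+ squares extending right from the clue column or \vertclue rectangles extending left from the right sheathing, since the empty column itself carries no clue. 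If a \CLUE+ were a $1 \times 1$ square, the empty-column cell immediately to its right would have to be covered by such a \vertclue rectangle. For the top and bottom \plusclues this is precisely forbidden by the safe-placement hypothesis, so those two clues are forced to $2 \times 2$ at once. I then propagate inward by induction: once the \CLUE+ directly above (or below) is known to be a $2 \times 2$ square, the empty-column cells it already covers, together with the four-corner constraint (constraint~7) at the grid corner where that square, the next square, and the sheathing meet, block any leftward-bulging \vertclue from reaching the next empty-column cell, forcing the next \CLUE+ to be $2 \times 2$ as well.

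The main obstacle is exactly this last step: excluding a \vertclue rectangle from the right sheathing that bulges one column left into the empty column to cover an interior clue's right neighbor. Unlike the extreme \plusclues, interior clues are not directly shielded by safe placement, so the argument must combine the already-forced $2 \times 2$ structure above a given clue, the strictly-taller-than-wide requirement (constraint~6, which makes any such width-$2$ bulging rectangle at least three cells tall and thus forces it past cells already claimed), and the four-corner constraint to reach a contradiction. Finally, because the gadget admits two wire parities (top-anchored versus bottom-anchored), I would run the induction from whichever end the boundary pins down; by the top--bottom reflective symmetry of the wire this is the same argument mirrored, so it suffices to treat one parity in detail.
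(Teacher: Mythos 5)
There is a genuine gap in your treatment of the $1\times 1$ case: your enumeration of what can cover the empty-column cell to the right of a hypothetical $1\times 1$ \plusclue lists only \plusclue squares extending right and \vertclue rectangles bulging left from the right sheathing, justified by ``the empty column itself carries no clue.'' That justification only excludes rectangles \emph{originating} in the empty column; it does not exclude a horizontal rectangle from a \horzclue \emph{outside} the gadget entering the wire from the right. Because the sheathing \vertclues and the \plusclues sit in alternating rows, the right sheathing column has no clue in a \plusclue's row, so a \horzclue rectangle can slip through that sheathing cell and cover the empty-column cell next to the \plusclue without containing a second clue. The safe-placement hypothesis is of no help here: it is stated only for rectangles from \vertclues. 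This is precisely the case the paper spends the final third of its proof on, showing that such a horizontal intrusion forces the \plusclue to be $1\times 1$, forces the cell above it to be taken by the next \plusclue's square, and thereby creates a four-corner violation with the two \plusclues and the left sheathing --- except at the bottom of the wire, where instead the intruding rectangle reduces the bottom-right sheathing clue to an unsatisfiable $1\times 1$. Without this case your lower bound $s \ge 2$ does not go through, even for the top and bottom clues.

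Structurally you also diverge from the paper: you anchor the top and bottom clues via safe placement and propagate inward by induction, whereas the paper argues locally about an arbitrary $1\times 1$ \plusclue (the cells above and below it must be taken by the neighboring \plusclues' squares, stranding the cell to its right). Your induction step --- excluding a width-$2$, height-$3$ \vertclue bulge by overlap with the already-forced square above --- is plausible but would need the same care for both parities and does not remove the need to handle the external \horzclue case; I recommend adopting the paper's direct local argument and adding the missing horizontal-intrusion analysis.
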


\begin{proof}
  There is no $3\times 3$ square in the wire that contains a \plusclue but does not contain any other clue. Thus we cannot cover the \plusclue by squares larger than $2\times 2$.

  Now suppose we cover a \plusclue by a $1\times 1$ square. Now the cells immediately above and below this clue must be covered. The \vertclues must be taller than they are long, so they cannot cover these cells. Thus we must cover them by squares containing the \plusclues above and below. This leaves the cell directly to the right of the $1\times 1$ uncovered. It is easy to see that this cannot be covered by the nearby \plusclues or \vertclues.  The cells next to the top and bottom clues cannot be covered by a \vertclue from outside the gadget by our assumption that the wire is safely placed.

The only remaining possibility is a \horzclue from outside the gadget extending into the wire gadget.  Such a rectangle cannot extend entirely through the wire because the \vertclues in the sheathing and the \plusclues inside the wire are in alternating rows.  If the external horizontal rectangle enters the wire from the right and covers a cell next to the \plusclue, that \plusclue is forced to be a $1 \times 1$ rectangle and the cell above it must be covered by the next \plusclue above.  This results in a four-corner violation involving the two \plusclues and the left sheathing except when the \plusclue is at the bottom of the wire.  In that case, the external horizontal rectangle blocks the bottom-right sheathing clue, making it $1 \times 1$ and unsatisfied.
\end{proof}

\begin{corollary}\label{cor:wire-parity}
	Satisfied safely placed wires must have all of their $2\times 2$ squares with the \plusclues in the lower left corner or all in the upper left corner.
\end{corollary}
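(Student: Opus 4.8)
The plan is to build directly on \cref{lem:wire2x2}, which already forces each \plusclue to be covered by a $2 \times 2$ square, and then to show that all of these squares are forced into a single common orientation. First I would record a geometric observation: each \plusclue sits in the left column of the two-column-wide inner wire, and its $2 \times 2$ square cannot extend left into the sheathing without also covering one of the sheathing \vertclues (the sheathing occupies alternate rows, so any two consecutive rows of it contain a clue), which would put two clues in a single rectangle. Hence the square must occupy \emph{both} inner-wire columns, and its only remaining freedom is vertical: it either covers the clue's row together with the row below it (so the \plusclue sits in the \emph{upper-left} corner) or the clue's row together with the row above it (so the \plusclue sits in the \emph{lower-left} corner). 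I would call this binary choice the square's \emph{orientation}.

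The heart of the argument is a local propagation between vertically consecutive \plusclues. Because the \plusclues occupy alternating rows of the inner wire, any two consecutive clues are separated by a single empty ``gap'' row of the inner wire. I would show that this gap can be covered \emph{only} by the $2 \times 2$ squares of the two flanking \plusclues: a \vertclue rectangle reaching into the inner wire would either have to swallow a \plusclue (two clues in one rectangle, forbidden) or be a $2 \times 2$ block (hence not taller than wide, violating the \vertclue constraint), and a stray \horzclue rectangle from outside is excluded in the interior by the alternating-row structure of the sheathing and, in the top and bottom rows, by safe placement, exactly as in the proof of \cref{lem:wire2x2}. Given this, covering the gap becomes a parity count on the two orientations: if both flanking squares point \emph{toward} the gap they overlap on the gap row, and if both point \emph{away} the gap is left uncovered; the only admissible possibilities leave both squares pointing the same way (both downward or both upward). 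Hence consecutive squares always share an orientation.

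Finally, chaining this equality along the entire column of \plusclues shows that all of the $2 \times 2$ squares have one common orientation, i.e.\ every \plusclue lies in the upper-left corner of its square or every \plusclue lies in the lower-left corner, which is precisely the claim. I expect the main obstacle to be the case analysis in the middle step---rigorously ruling out every non-square covering of the gap cells (width-two \vertclue rectangles and stray \horzclue rectangles entering from outside)---since the clean parity count only applies once those alternatives are eliminated. The top and bottom rows of the wire are the most delicate, and there I would lean on the safe-placement hypothesis already invoked in \cref{lem:wire2x2}.
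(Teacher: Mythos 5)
Your proof is correct and takes essentially the same route as the paper's: reduce to the binary orientation of each forced $2\times 2$ square and observe that mismatched orientations of consecutive squares either overlap or leave the intervening two-cell row uncovered, with those leftover cells ruled out by the same arguments as in \cref{lem:wire2x2}. The paper's version is terser (it only mentions the blank-row case explicitly, leaving the overlap case to disjointness), but the substance is identical.
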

\begin{proof}
	By \cref{lem:wire2x2} all \plusclues must be covered by $2\times 2$ squares. To change whether the \plusclues are in the lower left or upper left, we will end up leaving a row of two cells between clues blank. By the same arguments in \cref{lem:wire2x2} these cannot be covered by the nearby \plusclues or \vertclues.
\end{proof}

\begin{lemma}\label{thm:sheathing-parity}
  The \vertclues making up the inner sheathing of satisfied safely placed wires must be covered by $1 \times 2$ rectangles of opposite parity to the wire's squares.
\end{lemma}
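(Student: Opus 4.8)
The plan is to exploit the rigid interior structure already in hand. By \cref{lem:wire2x2} and \cref{cor:wire-parity}, every \plusclue is covered by a $2 \times 2$ square, and all of these squares share a single parity, so they tile the two columns of the inner wire as a vertical stack of height-$2$ pairs. I would then treat each inner-sheathing column---the single column immediately to the left or to the right of the inner wire---separately and symmetrically, first fixing the shape of its covering rectangles and then their vertical offset.

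To fix the shape, note that each sheathing clue is a \vertclue, so its rectangle is strictly taller than wide. On the side facing the inner wire, the neighboring column is entirely occupied by the wire's $2 \times 2$ squares, so the rectangle cannot extend inward; extending into the opposite (outer) column would force a taller-than-$1$-wide rectangle that swallows a second \vertclue of the outer sheathing or collides with neighboring structure, so the width is forced to $1$. With width $1$, a short counting argument finishes the shape: the sheathing clues sit in alternating rows, hence consecutive clues are exactly two rows apart, and a single column that must be partitioned into vertical (height $\ge 2$) rectangles, each holding exactly one clue, has room for precisely one $1 \times 2$ rectangle per clue. Any taller rectangle would either contain a second clue, violating the one-clue rule, or leave an isolated cell that no \vertclue rectangle can legally cover.

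It remains to show these $1 \times 2$ rectangles carry the parity opposite to the wire's squares, and here the four-corner constraint is decisive. Along the vertical line separating a sheathing column from the inner wire, the stacked $2 \times 2$ wire squares meet at a horizontal edge between each consecutive pair. If a sheathing rectangle's own horizontal edge coincided with one of these wire-square edges, then the two wire squares just inside and the two sheathing rectangles just outside would all meet at a single grid point, a forbidden four-corner crossing. Thus every sheathing-rectangle edge must sit one row away from every wire-square edge; since both families are height-$2$ rectangles anchored at fixed clue cells, the only consistent way to stagger them is to give the sheathing the opposite parity, exactly as claimed (and as depicted in \cref{fig:wire}).

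The step I expect to be the main obstacle is the bookkeeping at the top and bottom ends of the wire, where the sheathing column starts and stops: there the counting and four-corner arguments must be checked against the true endpoints rather than in the periodic interior, and it is precisely here that the safe-placement hypothesis earns its keep, ruling out an external \horzclue or a stray \vertclue that could otherwise reach in and break the $1 \times 2$ tiling. Away from the ends everything follows from the two arguments above; the ends are where I would concentrate the care.
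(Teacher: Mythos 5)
Your proposal is correct and follows essentially the same route as the paper's proof: first pin the sheathing rectangles down to width $1$ and height $2$ using the alternating placement of the \vertclues, then use the four-corner constraint against the stack of $2 \times 2$ wire squares to force the opposite parity. The only real difference is in one sub-step---you exclude taller rectangles by a cell-counting/partition argument within the column, whereas the paper notes that any odd-height rectangle must have an end flush with a square's edge (again forcing a four-corner violation or uncovered cell)---but both rest on the same two facts, and your closing remark that the wire's endpoints are where safe placement does the work matches where the paper's surrounding lemmas concentrate their care.
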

\begin{proof}
	By \cref{cor:wire-parity} the wire has one of two parities of squares. If a vertical rectangle ends with the same $y$ coordinate as an adjacent square, then we will have three right angles at a single corner, forcing a four-corner violation or uncovered cell. Because the squares are $2\times 2$, a vertical rectangle of odd height guarantees one of the ends will share a $y$-coordinate with one of the squares. The \vertclues occur every other cell, so the vertical rectangles cannot be of length greater than $3$. This forces them to be of length $2$ and staggered with respect to the squares.
\end{proof}

\begin{theorem}
The safely placed wire gadget's profile table is proper and complete.
\end{theorem}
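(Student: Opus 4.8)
The plan is to assemble the three preceding results into the two required claims — properness and completeness — treating the parity of the wire's covering (\cref{cor:wire-parity}) as the single binary degree of freedom, so that the profile table has exactly the two entries shown in \cref{fig:wire-0,fig:wire-1}.

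For properness, I would check directly that the two tabulated profiles are proper. By definition this means verifying that each contains the gadget's mandatory area (the purple region of \cref{fig:wire-unsolved}, which holds all the clues) and that each optional area (the brown regions) is either fully contained in or fully disjoint from the profile. Since the two local solutions differ only by a one-row shift of the entire covering, the true-parity solution covers the top optional area while leaving the bottom uncovered, and the false-parity solution does the reverse; in both cases every optional area is covered wholly or not at all, so both profiles are proper, and hence the table — consisting of exactly these two entries — is proper. Along the way I would confirm that each listed profile is locally solvable by pointing to the exhibited gadget solutions, noting that within the subpuzzle the $2\times 2$ squares and the staggered $1\times 2$ sheathing rectangles share no common corner, so each is a genuine local solution.

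For completeness, I would show that every profile other than these two is locally impossible by arguing that the covering of the gadget's clues is forced up to the parity choice. Consider any puzzle containing a safely placed instance of the wire together with any global solution. By \cref{lem:wire2x2} each \plusclue is covered by a $2\times 2$ square; by \cref{cor:wire-parity} all of these squares share one parity, leaving exactly two possibilities; and by \cref{thm:sheathing-parity} the inner-sheathing \vertclues are then forced to be $1\times 2$ rectangles of the opposite parity, staggered against the squares. Hence the union of the areas assigned to the gadget's clues is completely determined by the binary parity and equals one of the two tabulated profiles. Any other profile therefore cannot arise as this union in any global solution, i.e., is locally impossible, which is precisely completeness.

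The genuine content lives in \cref{lem:wire2x2,cor:wire-parity,thm:sheathing-parity}, so the theorem itself is mostly an assembly; the main obstacle I anticipate is making the optional-area bookkeeping airtight. Specifically, I must confirm that flipping the parity moves coverage exactly between the top and bottom optional areas and nowhere else, so that the two forced coverings coincide precisely with the two proper profiles rather than merely agreeing on the mandatory region — and I should take care that ``locally impossible'' is the correct (global) notion being established here, since it is not simply the negation of ``locally solvable.''
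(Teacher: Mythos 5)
Your proposal is correct and follows essentially the same route as the paper: properness is read off from the two exhibited solutions (the paper phrases it via \cref{lem:wire2x2} forcing each optional area to be wholly covered or uncovered, but the content is the same), and completeness is obtained by chaining \cref{lem:wire2x2}, \cref{cor:wire-parity}, and \cref{thm:sheathing-parity} to force the profile up to the single parity choice. Your closing caveat about ``locally impossible'' being a global notion is a fair point of care, but the forcing lemmas already apply to any solution of a containing puzzle, so no extra work is needed.
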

\begin{proof}
By \cref{lem:wire2x2}, each optional area must be fully covered or fully uncovered by the neighboring \plusclue, so the profile table is proper.  \cref{cor:wire-parity} fixes the \plusclue parity and \cref{thm:sheathing-parity} fixes the sheathing parity, so all other profiles are locally impossible, so the profile table is complete.
\end{proof}

We also have a terminator gadget to terminate unused wires regardless of their parity.  The terminator gadget is shown in \cref{fig:terminator}.

\begin{lemma} \label{terminal-wire independence}
  The terminator does not constrain the wire parity.
\end{lemma}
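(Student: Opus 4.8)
The plan is to show that the terminator's profile table contains a locally solvable profile for \emph{each} of the two wire parities, so that neither parity is ruled out at the terminator; this is exactly the statement that the terminator does not constrain the wire. By \cref{cor:wire-parity}, a satisfied safely placed wire has all of its \plusclues in the lower-left (the false parity) or all in the upper-left (the true parity) corner of their $2\times2$ squares, and by \cref{thm:sheathing-parity} the inner sheathing is forced to the opposite parity. In particular, the orientation of the \emph{bottom} square and its adjacent sheathing rectangles determines which cells of the shared optional area at the wire--terminator interface are already covered from above and which are left for the terminator. I would therefore split into the two parity cases and, in each, exhibit the explicit terminator solution drawn in \cref{fig:terminate-0,fig:terminate-1}.

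In the false-wire case the bottom $2\times2$ square and the staggered bottom sheathing occupy one configuration of interface cells, and in the true-wire case they are shifted by one row, leaving a differently shaped remainder. For each case I would verify four things about the rectangles placed by the terminator's (mandatory) clues together with whatever the wire leaves: that they exactly cover the terminator's mandatory area and the remaining cells of each relevant optional area; that every optional area is fully covered or fully uncovered (so the induced profile is proper); that each rectangle satisfies the shape constraint of its clue (constraints (4)--(7), e.g.\ a \vertclue rectangle being strictly taller than wide); and that no four of these rectangles share a corner. Since both of the depicted profiles then lie in the terminator's profile table, the composition algorithm is free to pair the terminator with a wire of either parity, which is the claim.

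The main obstacle I anticipate is the four-corner constraint along the interface row, where the terminator's rectangle abuts both the wire's bottom square and the bottom sheathing rectangles. The key point to check carefully is that the \emph{same fixed} set of terminator clues admits a valid, suitably staggered covering under \emph{both} parities rather than forcing one: the terminator rectangle must be able to absorb the extra interface cell in one parity (e.g.\ by lengthening or offsetting a \vertclue) while still staggering to avoid three right angles meeting at a single lattice point in the other. Once this staggering is confirmed in both cases, everything else reduces to a direct check of the clue-shape constraints and the safe-placement hypothesis, and the lemma follows.
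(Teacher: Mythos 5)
Your proposal matches the paper's proof: the paper likewise establishes the lemma by exhibiting the two terminator solutions in \cref{fig:terminate-0,fig:terminate-1}, one for each wire parity, and appealing to the wire-correctness arguments to rule out other configurations. Your more explicit checklist (coverage, proper profile, clue shapes, four-corner staggering at the interface) is just a careful unpacking of the same verification the paper leaves to the figures.
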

\begin{proof}
    \cref{fig:terminate-0,fig:terminate-1} show solutions of the terminator with both parities.  The same arguments about wire correctness show this gadget does not allow any additional wire solutions nor constrain other gadgets.
\end{proof}

\begin{theorem}
The terminator gadget's profile table is proper and complete.
\end{theorem}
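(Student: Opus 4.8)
The plan is to follow the same template as the wire gadget's profile-table theorem, exploiting the fact that the terminator is a constant-size gadget whose clues are of the same \CLUE+ and \CLUE| types, arranged just as at the end of a wire. Because the terminator does not telescope, its profile table could in principle be confirmed outright by the Z3-based solver; nonetheless, I would record a manual argument that simply reuses the local case analysis already established for the wire.

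For properness, I would show that every optional area of the terminator is covered in an all-or-nothing fashion by the gadget's own clues. The key point is that the reasoning behind \cref{lem:wire2x2} and \cref{thm:sheathing-parity} is entirely local: a \CLUE+ in this configuration is still forced into a $2 \times 2$ square (a larger square would swallow a second clue, and the $1 \times 1$ and external-\horzclue cases are excluded exactly as before), and each \CLUE| in the sheathing is still forced into a staggered $1 \times 2$ rectangle. These forced shapes leave no way to cover only part of an optional area, so every profile arising from a gadget solution is proper.

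For completeness, I would invoke \cref{terminal-wire independence}: \cref{fig:terminate-0,fig:terminate-1} exhibit gadget solutions realizing the two profiles corresponding to the false and true wire parities, so both of these profiles lie in the table. It then remains to argue that no further profile is locally solvable. Since the terminator is of fixed size, only finitely many candidate profiles exist, and I would eliminate each one by exhibiting the same local obstruction used for the wire --- an uncovered cell, a clue forced into an unsatisfiable $1 \times 1$ rectangle, or a four-corner violation --- while handling any external \horzclue intruding from the attached wire exactly as in the proof of \cref{lem:wire2x2}.

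The hard part will be the completeness direction, specifically the seam where the terminator meets the wire. There I must check that the forced $2 \times 2$ parity of the wire's extreme square propagates into the terminator without opening up a spurious covering of the terminator's optional area, and that the terminator's own clues cannot be satisfied in some unintended way that would slip past the two tabulated profiles. As in the wire argument, I expect the safe-placement assumption together with the alternating-row structure of the sheathing to rule out these stray possibilities, so that the two parity profiles are provably the only locally solvable ones.
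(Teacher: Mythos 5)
Your completeness argument has a genuine gap at the external-\horzclue case. You propose to handle a \horzclue intruding from outside ``exactly as in the proof of \cref{lem:wire2x2},'' but that argument is specific to the interior of the wire: it relies on the alternating rows of \plusclues and sheathing \vertclues to show that a horizontal rectangle entering the wire forces an unsatisfiable $1\times 1$ or a four-corner violation. At the top row of the terminator this obstruction disappears. As the paper observes, a rectangle from an external \horzclue could cover part of the top row of the terminator (or the entire top row, in the true-parity case) while every clue inside the gadget remains satisfied and the rest of the gadget is still coverable --- so this configuration is \emph{not} locally impossible by any wire-style argument, and it would yield a profile outside the two you tabulate. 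The paper closes this hole not locally but globally: the only \horzclues in the construction live in clause gadgets, and the layout places all clause gadgets strictly above (positive) or strictly below (negative) all terminators, so no \horzclue rectangle can ever reach a terminator's top row. Recall that local impossibility is defined relative to \emph{every} puzzle containing an instance of the gadget as produced by the reduction, which is exactly what licenses this appeal to the global layout; without it, your case analysis cannot be completed.

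Relatedly, you identify the ``hard part'' as the seam where the terminator meets the wire, but that seam is handled by the same parity propagation as in \cref{cor:wire-parity} and is not where the difficulty lies; the problematic boundary is the closed end of the terminator (its top row), which faces the exterior of the gadget. Your properness argument is fine, though heavier than needed --- properness only requires checking that the two profiles listed in \cref{fig:terminator} are proper, which is immediate by inspection.
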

\begin{proof}
The profile table in \cref{fig:terminator} contains only proper profiles, so the profile table is proper.  By the same arguments in \cref{lem:wire2x2}, the two local solutions shown are the only way to cover the wire part of the gadget.  A horizontal rectangle from a \CLUE- outside the gadget could cover part of the top row of the gadget (or the entire top row when terminating a true wire) while leaving the clues in the gadget satisfied and covering the remaining area.  We prevent this through the global layout: all clause gadgets (the only gadget containing \CLUE- clues) appear strictly above (for positive clauses) or strictly below (for negative clauses) all terminator gadgets, so it is not possible for any \CLUE- rectangles to cover area in the clause gadget.  Thus all other profiles are locally impossible, so the profile table is complete.
\end{proof}

\subsection{Variable Gadgets} \label{sec:variables}

The variable gadget is essentially a series of wires placed next to each other with devices we call \defn{couplers} in between. Each coupler acts essentially as an ``equality'' constraint between neighboring wires, thus forcing all the wires connected via a series of couplers to represent the same variable; this collection of wires then forms the \defn{variable gadget} of the reduction.

Each coupler takes two columns, and consists of (i) a \plusclue which interacts with the inner sheathing of the wires to force equality, and (ii) eight \vertclues (two above and two below the \plusclue on each column), which prevent the inner sheathings of the neighboring wires from constraining each other (except through the \plusclue itself). See \cref{fig:variable} for an example with two wires; additional wires can be added to either side of variable by using more couplers (see \cref{fig:variable-wide}).

First, notice that both wires are constrained to have their squares in one of two parities by the inner sheathing, as in \cref{cor:wire-parity}. It is also important that wires do not constrain each other outside the couplers, either directly (if they happen to be adjacent) or indirectly (through the space in between); we address this in \cref{sec:filler}.

Now we have to show that two wires separated by the coupler must be in the same configuration. This happens because the wire parity forces the parity of the inner sheathing, which forces the parity of the coupler, which then forces the partiy of the inner sheathing and the wire parity of the next wire over.

\begin{lemma}\label{lem:coupler-configs}
	The coupler has only two valid coverings of its \plusclue.
\end{lemma}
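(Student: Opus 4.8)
The plan is to pin down the square covering the coupler's \plusclue by the same style of local case analysis used for the wire, and then count the surviving placements. Since the puzzle rules force a \CLUE+ to be covered by a square, it suffices to determine that square's size and then enumerate its possible positions.

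First I would rule out large squares exactly as in \cref{lem:wire2x2}: any square of side $\geq 3$ containing the coupler's \plusclue would necessarily swallow one of the eight surrounding \vertclues or one of the \plusclues in the two adjacent wires, violating the requirement that each rectangle contain exactly one clue. Hence the covering square has side $1$ or $2$. Next I would eliminate the $1 \times 1$ option: if the \plusclue were covered by a $1 \times 1$ square, the cells orthogonally adjacent to it must be covered by other rectangles; the cells directly to its left and right cannot be taken by any \vertclue (which must be strictly taller than wide), and tracing the forced coverage of the neighboring cells dead-ends in an uncovered cell or a four-corner violation, precisely as in the $1 \times 1$ case of \cref{lem:wire2x2}. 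The eight \vertclues and the surrounding wire sheathings ensure that no rectangle entering from outside the coupler can rescue this case. So the square has side exactly $2$.

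It then remains to enumerate the placements of a $2 \times 2$ square containing the \plusclue cell. Horizontally the square must occupy both columns of the coupler: extending it into a neighboring wire column is blocked by that wire's inner-sheathing \vertclues (and \plusclues), by the same reasoning as above. Vertically the square may sit one row up or one row down from the clue, and I would verify against the placement of the eight \vertclues (two above and two below the \plusclue on each column) that both of these placements avoid a second clue and create no four-corner violation inside the coupler, while no third placement survives. These two placements correspond exactly to the two parities of \cref{cor:wire-parity}, which is what makes the coupler act as an equality constraint.

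The main obstacle is the geometric bookkeeping in the last step: confirming that the eight \vertclues leave precisely the two vertical $2 \times 2$ placements, neither fewer (accounting for boundary effects at the top and bottom of the coupler) nor more (excluding any placement that strays into a neighboring wire), and checking that the $1 \times 1$ case truly dead-ends rather than being resolved by a rectangle from an adjacent gadget. I expect this to reduce to a short, finite case check of the cells immediately around the \plusclue, mirroring the arguments already established for the wire.
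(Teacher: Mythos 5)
Your overall strategy matches the paper's: the eight \vertclues cap the square at $2 \times 2$, the $1 \times 1$ case is excluded, and only the two vertical positions of the $2 \times 2$ remain. The one place where your argument diverges---and where it has a genuine gap---is the exclusion of the $1 \times 1$ case. You import this ``precisely as in the $1 \times 1$ case of \cref{lem:wire2x2},'' but that argument depends on the column of \plusclues above and below the clue in question: the cells above and below the $1 \times 1$ are forced to be covered by the neighboring $2 \times 2$ squares, which is what strands the cell to its right. The coupler's \plusclue has no such neighboring \plusclues (only the eight \vertclues), so the analogy does not carry over. The paper instead excludes the $1 \times 1$ by invoking \cref{cor:wire-parity}: the inner sheathing of the wires adjacent to the coupler is already forced into $1 \times 2$ rectangles in one of two parities, and a $1 \times 1$ at the coupler's \plusclue then produces a four-corner violation against that sheathing. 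You gesture at four-corner violations and at the surrounding sheathing, and you honestly flag this subcase as unfinished bookkeeping, but the specific mechanism---the already-determined parity of the adjacent inner sheathing---is the missing ingredient; without it the $1 \times 1$ case is not actually closed.
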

\begin{proof}
	The location of the eight \vertclues around the \plusclue ensure that it cannot be larger than $2 \times 2$. By \cref{cor:wire-parity} we know that the wire gadgets next to the coupler must have their inner sheathing as $2\times1$ rectangles in either the up or down position. If the \plusclue is covered by a $1 \times 1$ it will create a four-corner violation with the inner sheathing. Thus it must be one of the two possible positions for a $2 \times 2$ square. If both inner sheathings have the same parity, as in \cref{fig:variable} then the constraints can be locally satisfied.
\end{proof}

\begin{lemma} \label{lem: variable gadget}
  All wires in a variable gadget must have the same value (i.e. upwards branches must have the same orientation).
\end{lemma}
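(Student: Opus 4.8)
The plan is to reduce the lemma to a single, per-coupler statement and then propagate it along the chain of couplers that defines the variable gadget. Since a variable gadget is a sequence of wires $W_1, W_2, \dots, W_t$ with a coupler between each consecutive pair, it suffices to prove that each coupler forces its two neighboring wires to take the same value; transitivity along the chain then forces all $t$ wires to agree. So I would fix a single coupler sitting between a left wire $W_L$ and a right wire $W_R$ and analyze it in isolation. The eight \vertclues framing the coupler's \plusclue guarantee by design that the two inner sheathings interact only through that \plusclue, so a purely local analysis around the coupler is justified.

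Next I would set up the local picture using the results already established. By \cref{cor:wire-parity}, each of $W_L$ and $W_R$ has a well-defined value, and by \cref{thm:sheathing-parity} the inner sheathing adjacent to the coupler on each side is covered by $1 \times 2$ vertical rectangles whose vertical staggering (parity) is exactly the opposite of that wire's $2 \times 2$ squares. By \cref{lem:coupler-configs} the coupler's \plusclue is covered by a $2 \times 2$ square in one of exactly two vertical positions, with the eight \vertclues forced into $1 \times 2$ rectangles that buffer this square above and below on both columns. The heart of the argument is then a single claim: the coupler's $2 \times 2$ square must be \emph{vertically aligned} (same parity) with the sheathing rectangles on both its left and its right side simultaneously.

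To prove that claim I would argue by contradiction about a parity mismatch, reusing the four-corner reasoning from \cref{lem:wire2x2} and \cref{thm:sheathing-parity}. Suppose the coupler square is staggered differently from, say, the left sheathing. Then at the point where the square's vertical edge meets the end of a sheathing $1 \times 2$ rectangle, three rectangle corners coincide; completing the local picture with the forced $2 \times 2$ squares of $W_L$ (all committed by \cref{lem:wire2x2}) and the buffering \vertclues yields either a four-corner violation or a single cell that no remaining clue can legally cover, since the neighboring \vertclues are too tall to fill it horizontally and every nearby \plusclue is already committed to a $2 \times 2$ square. Hence the coupler square's parity must match the left sheathing, and by the symmetric argument it must also match the right sheathing; therefore both sheathings share the same parity. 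By \cref{thm:sheathing-parity} (each sheathing is opposite in parity to its wire), equal sheathing parities force $W_L$ and $W_R$ to have the same value.

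Finally I would assemble these per-coupler equalities: walking along the chain $W_1, \dots, W_t$, each adjacent pair is forced equal, so all wires in the variable gadget share a single value, which is exactly the claim that the upwards branches all have the same orientation. I expect the main obstacle to be the alignment claim of the third paragraph—pinning down, case by case, that a parity mismatch at the coupler is genuinely infeasible rather than merely producing a locally awkward but fillable configuration. This requires carefully enumerating how the forced $2 \times 2$ squares of the two wires, the eight framing \vertclues, and the candidate coupler square meet along the shared grid lines, and verifying in each case that a four-corner violation or an uncoverable cell arises; an overlay of the coupler showing both a matching and a mismatching placement (as in \cref{fig:variable}) would make this enumeration transparent.
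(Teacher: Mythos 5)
Your overall architecture matches the paper's: reduce to a per-coupler equality, use \cref{lem:coupler-configs} to pin the coupler's \plusclue to a $2\times 2$ square in one of two positions, relate that square's vertical position to the two adjacent inner sheathings via the four-corner constraint, and propagate equality along the chain of couplers. However, your key alignment claim has the sign backwards, and the contradiction argument you build on it would not go through. You assert that the coupler's $2\times 2$ square must be \emph{aligned} (same parity) with the neighboring $1\times 2$ sheathing rectangles and that a \emph{staggered} placement yields three coinciding corners and hence a violation. It is exactly the opposite: when a $1\times 2$ sheathing rectangle shares both its $y$-coordinates with the adjacent $2\times 2$ square, the endpoint of their shared edge is a corner of the square, of the sheathing rectangle, and of the two rectangles stacked above (or below) them, producing the four-corner violation; when they are staggered by one row, the sheathing rectangle's corner lands in the middle of the square's edge, a harmless T-junction. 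This is precisely the mechanism of \cref{thm:sheathing-parity} (``if a vertical rectangle ends with the same $y$ coordinate as an adjacent square, \dots''), and the paper's proof of this lemma states the corresponding fact for the coupler: the inner sheathing of both wires must be of \emph{different} parity from the coupler square.

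Your final conclusion happens to survive the sign error, because whichever fixed relation (aligned or staggered) the coupler square bears to \emph{both} sheathings, the two sheathings end up with the same parity as each other, and \cref{thm:sheathing-parity} then transfers this to the wires. But as written, the case analysis you propose in your third paragraph would fail: you would be trying to derive a contradiction from the staggered configuration, which is the legal one, and you would find no violation there. Fix the claim to ``the coupler square must be staggered with respect to the sheathing on each side,'' prove it by the same three-corners-at-a-shared-edge-endpoint argument you already cite from \cref{thm:sheathing-parity}, and the rest of your proof goes through and coincides with the paper's.
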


\begin{proof}
  We know the coupler has at most two ways to satisfy its constraints, corresponding to a $2 \times 2$ square in either the up or down position. Notice that the inner sheathing of both wires must be of different parity from the square or they will cause a four-corner violation. Thus the inner sheathing must have the same parity, ensuring that the wires themselves must have the same parity. If multiple wires are all connected by couplers, then they will all be forced to have the same parity by the same local argument.
\end{proof}

\begin{lemma} \label{lem: variable valid}
  The variable gadget is locally solvable with a given profile if and only if the profile satisfies (i) all upwards branches have the same orientation, (ii) all downwards branches have the same orientation, and (iii) upwards and downwards branches have opposite orientations from each other.
\end{lemma}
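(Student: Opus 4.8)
The statement is an ``if and only if,'' so I would prove necessity and sufficiency separately, and in both directions the real work is the bridge between the logical \emph{value} carried by a wire and the physical \emph{orientation} of its $2\times2$ squares, since downward branches are placed under a vertical mirror. I would first record, as a preliminary observation, that a profile of the variable gadget is exactly a choice of orientation for each branch (encoded by which terminal optional cells it covers), and that under the mirror convention the three listed conditions are \emph{equivalent} to the single statement ``all wires carry a common value $v$'': an upward branch is unmirrored, so its orientation equals its value, whereas a downward branch is mirrored, so its orientation is the negation of its value; chasing these definitions turns (i) and (ii) into ``all up (resp.\ down) branches share a value'' and turns the opposite-orientation condition (iii) into ``up and down branches share the \emph{same} value.''

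For necessity, suppose the gadget is locally solvable with profile $P$. Then \cref{lem: variable gadget} already forces every wire to carry the same value, and the preliminary equivalence immediately repackages this as conditions (i)--(iii). The only content beyond \cref{lem: variable gadget} here is making the mirroring bookkeeping explicit and confirming that $P$ records precisely the branch orientations, so that ``same value'' and ``$P$ satisfies (i)--(iii)'' are genuinely the same assertion.

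For sufficiency, given a profile $P$ satisfying (i)--(iii), I would exhibit a gadget solution realizing it. Using the equivalence, (i)--(iii) means all wires should carry a common value $v$, so for each wire I select the wire local solution of value $v$ guaranteed by \cref{cor:wire-parity} together with the wire profile-table theorem. Each coupler is then covered using one of its two valid \plusclue placements from \cref{lem:coupler-configs}: because the two wires flanking a coupler now present inner sheathing of the same parity, the coupler placement of the matching parity is consistent with both. The remaining step is to check that stitching these local solutions together creates no four-corner violation at the wire--coupler junctions, which is exactly the local configuration already analyzed in the proofs of \cref{lem:coupler-configs,lem: variable gadget}, and to extend the argument from the two-wire picture to an arbitrary telescoped chain of wires and couplers by repeating the same local check at every junction.

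The main obstacle is not a hard computation but the careful bookkeeping of the mirror: one must fix, once and for all, which rotation of the squares represents ``value true'' for upward versus downward branches, so that condition (iii) really emerges as the \emph{opposite}-orientation relation rather than the same-orientation one. I expect the coupler--wire interface to be the place a naive argument could slip, so I would treat the shared corner cells there explicitly, relying on the staggering established in \cref{thm:sheathing-parity} to keep the $1\times2$ sheathing rectangles offset from the $2\times2$ squares and thereby rule out four rectangles meeting at any interior corner.
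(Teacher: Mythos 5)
Your proposal is correct and follows essentially the same route as the paper: the ``only if'' direction is delegated to \cref{lem: variable gadget} (all wires forced to a common value, with the mirror convention translating this into conditions (i)--(iii)), and the ``if'' direction assembles the known local solutions of the wires and couplers and checks that the junctions and sheathing meeting points can be staggered to avoid four-corner violations. The only difference is one of emphasis: you make the orientation-versus-value bookkeeping for mirrored downward branches explicit (which the paper leaves implicit), while the paper additionally cites the terminator's solvability in both parities, a point that is not strictly needed for the gadget's local solvability.
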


\begin{proof}
The ``only if'' direction follows from \cref{lem: variable gadget}  and  \cref{cor:wire-parity} (each wire individually must have opposite orientation for upwards and downwards branches due to the couplers, and all wires in the gadget must have the same upward orientation).

The ``if'' direction follows from \cref{terminal-wire independence}, the individual solvability of each wire and terminator in both orientations (as shown in \cref{fig:wire-0}, \cref{fig:wire-1}, \cref{fig:terminate-0}, and \cref{fig:terminate-1}), and the solvability of the couplers given that adjacent wires have the same orientation (\cref{fig:variable}). Neighboring wires (within the variable gadget) do not conflict with each other (outside of the coupler) because of the ``outer sheathing'' columns separating them; the meeting points of the two clues in each ``outer sheathing'' column can be adjusted to avoid four-corner violations with each other, as well as avoiding four-corner violations with the neighboring ``inner sheathing''.
\end{proof}
Note that this lemma is what we want from a variable gadget: it is locally solvable if and only if its profile corresponds to a specific value for the variable it represents.

\begin{figure}
\centering
\includegraphics[scale=\SCALE]{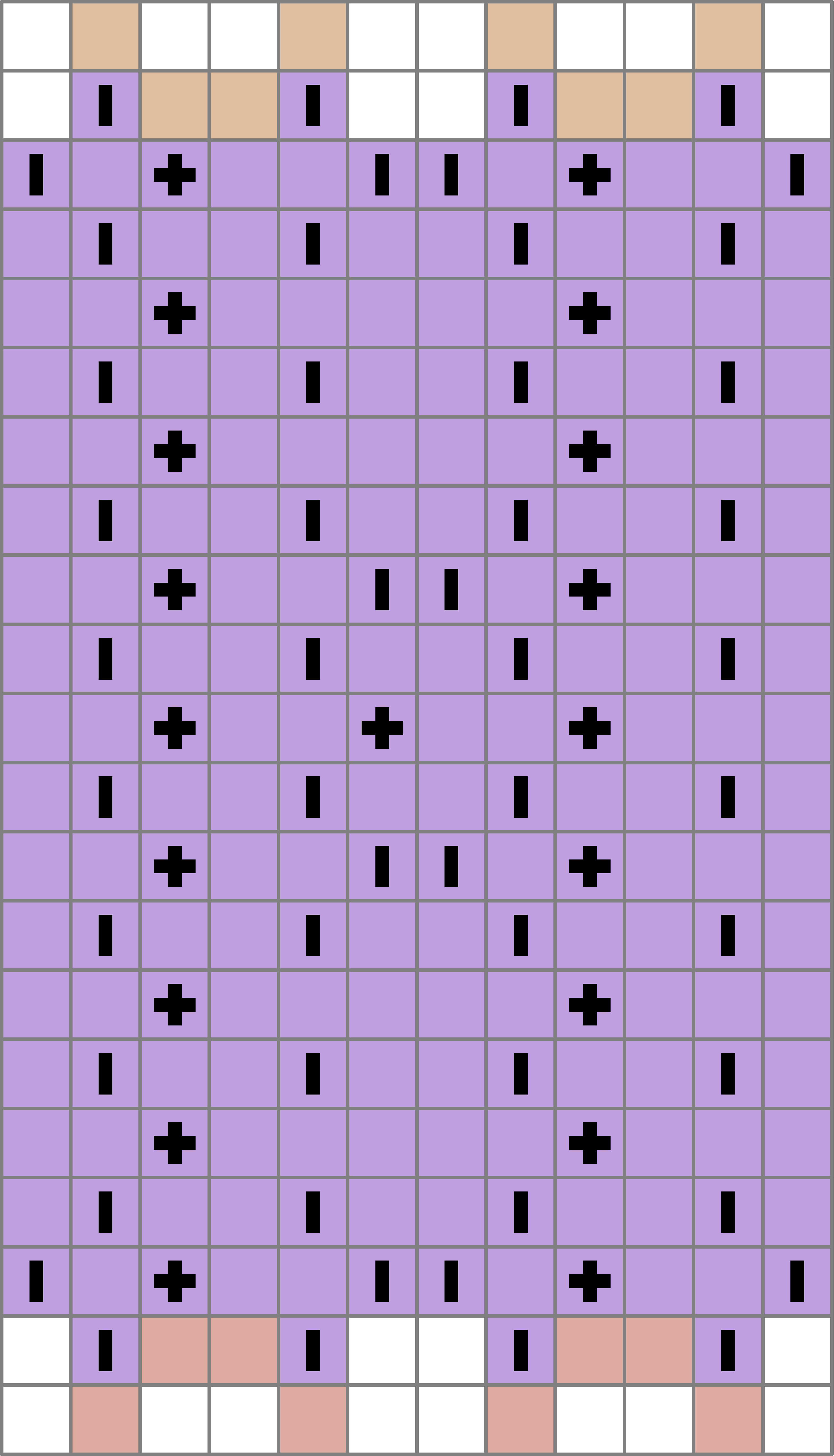}
\caption{The variable gadget.  Mandatory area is purple and optional areas are brown.}
\end{figure}

\begin{figure}%
\centering
\includegraphics[scale=\SCALE]{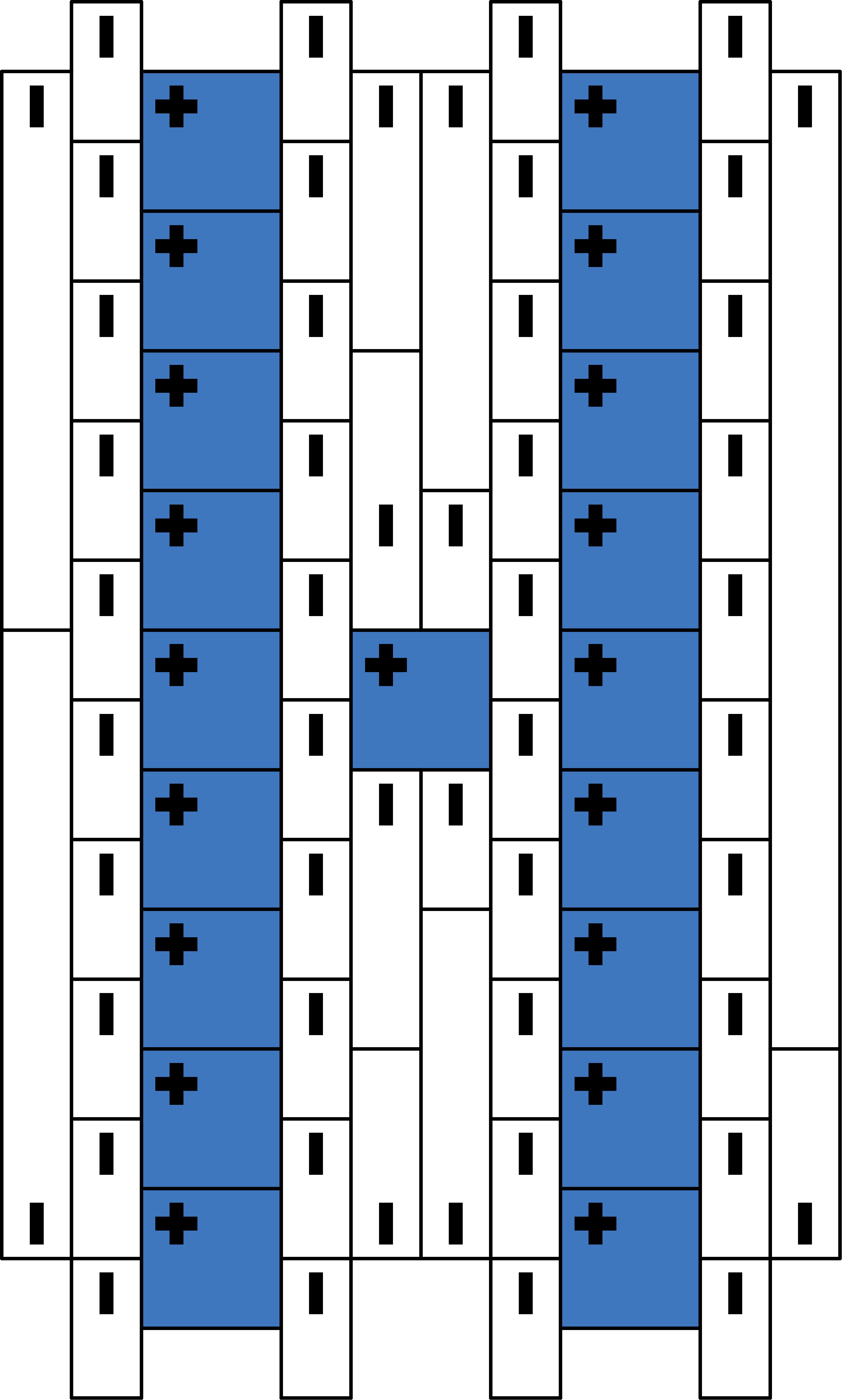}\hfil\hfil
\includegraphics[scale=\SCALE,trim=0 -75 0 0]{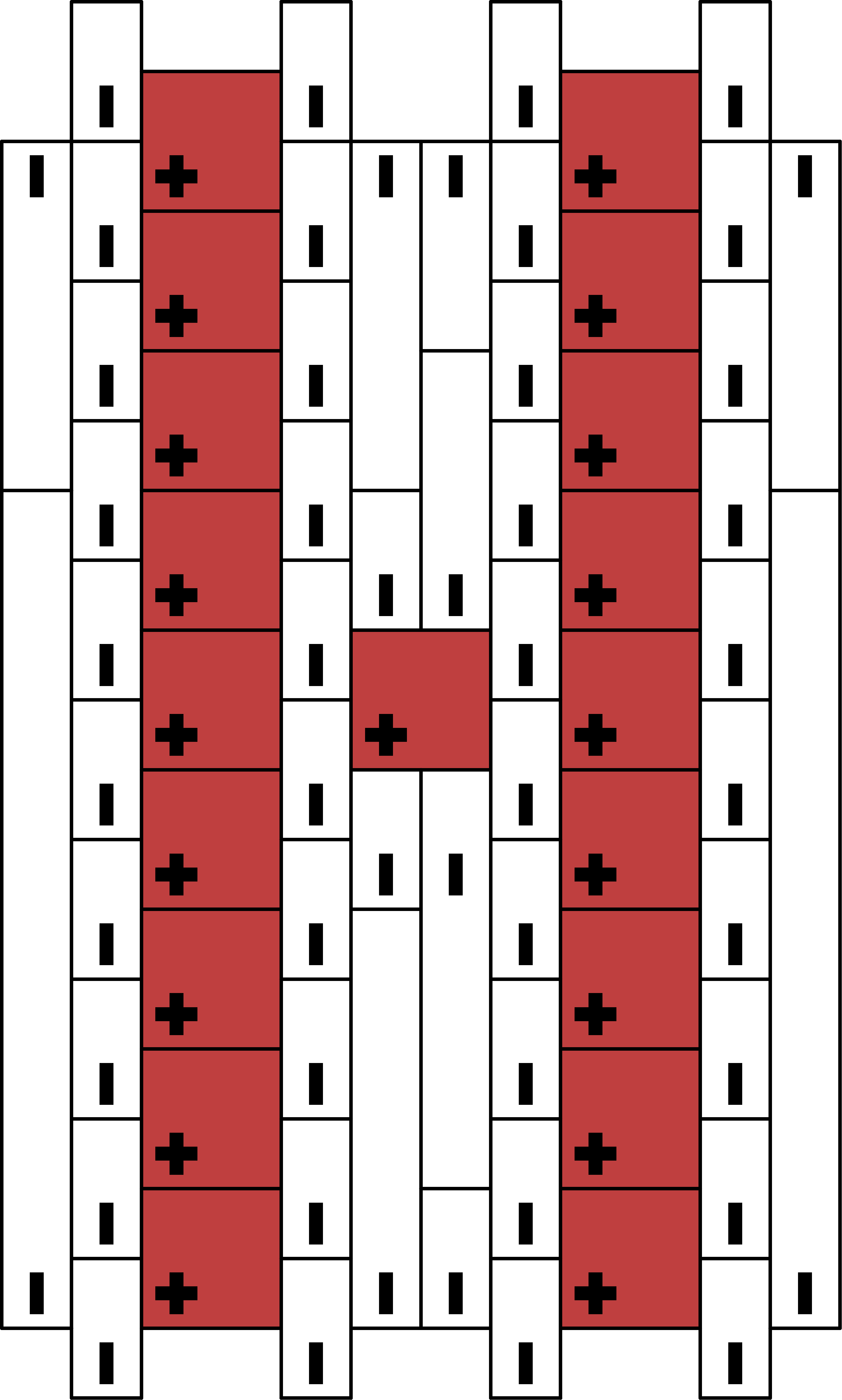}
\caption{The variable gadget's profile table. Left: variable set to true. Right: variable set to false.}\label{fig:variable}
\end{figure}

\begin{figure}
\centering
\includegraphics[scale=\SCALE]{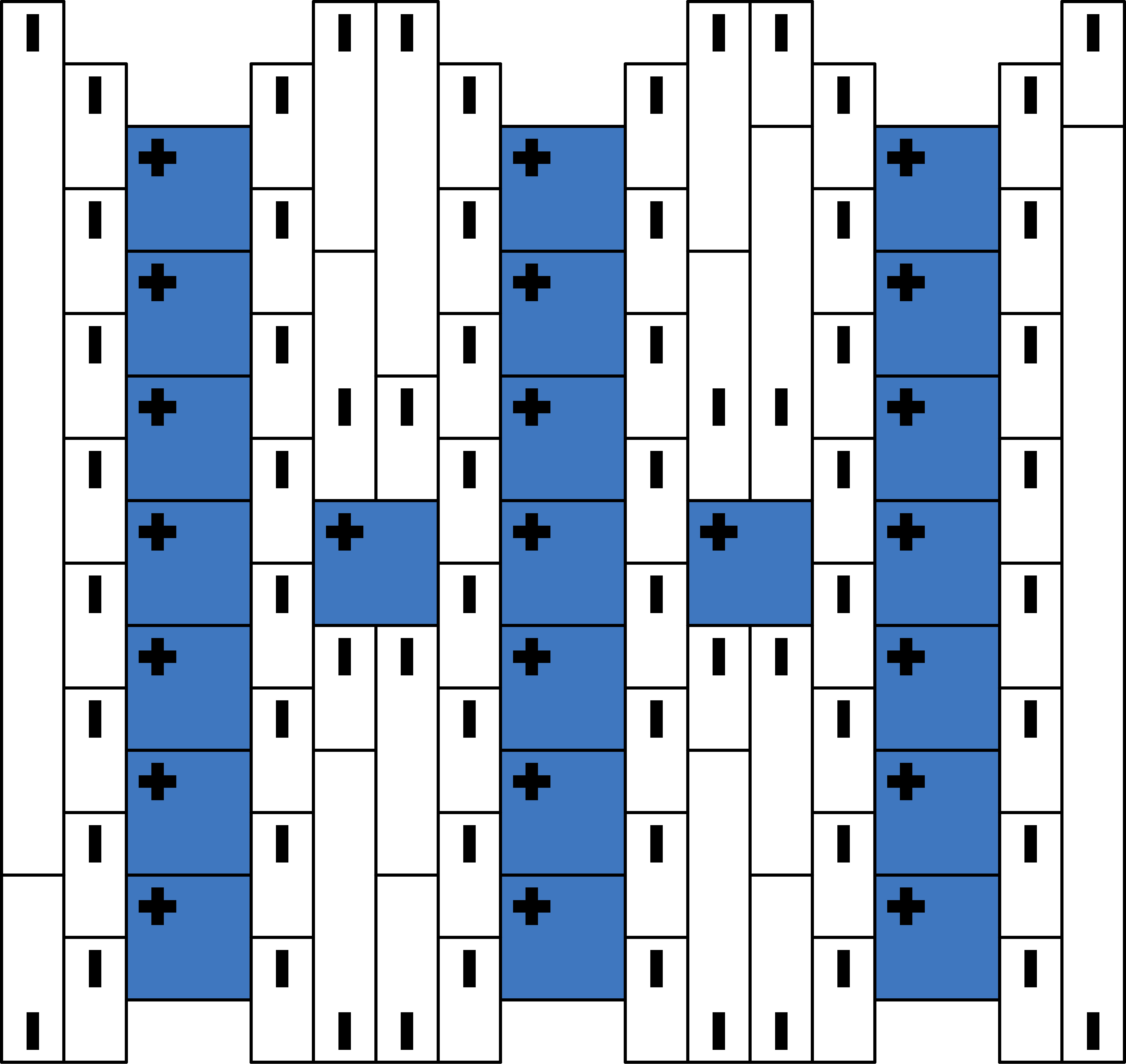}
\caption{A variable gadget widened to provide three wires, shown here set to true.}
\label{fig:variable-wide}
\end{figure}

\subsection{Clause Gadgets} \label{sec:clauses}

The clause gadget, shown in \cref{fig:clause-unsolved}, interfaces with three wire gadgets representing the three literals of this clause.  In the upper-left of the variable gadget is an internal wire, which we call the \defn{clause verification wire}.  The only way to cover the top two cells of that wire is using the wire's top \plusclue.  This is only possible when at least one of the wires is true, allowing a \defn{variable enforcement line} (drawn in figures as a purple horizontal bar) to provide a parity shift to the clause verification wire.  Otherwise, either those top two cells cannot be covered, or some other cell in the clause will not be covered, or there will be a four-corner violation.

\begin{figure}
\centering
\includegraphics[scale=\SCALE]{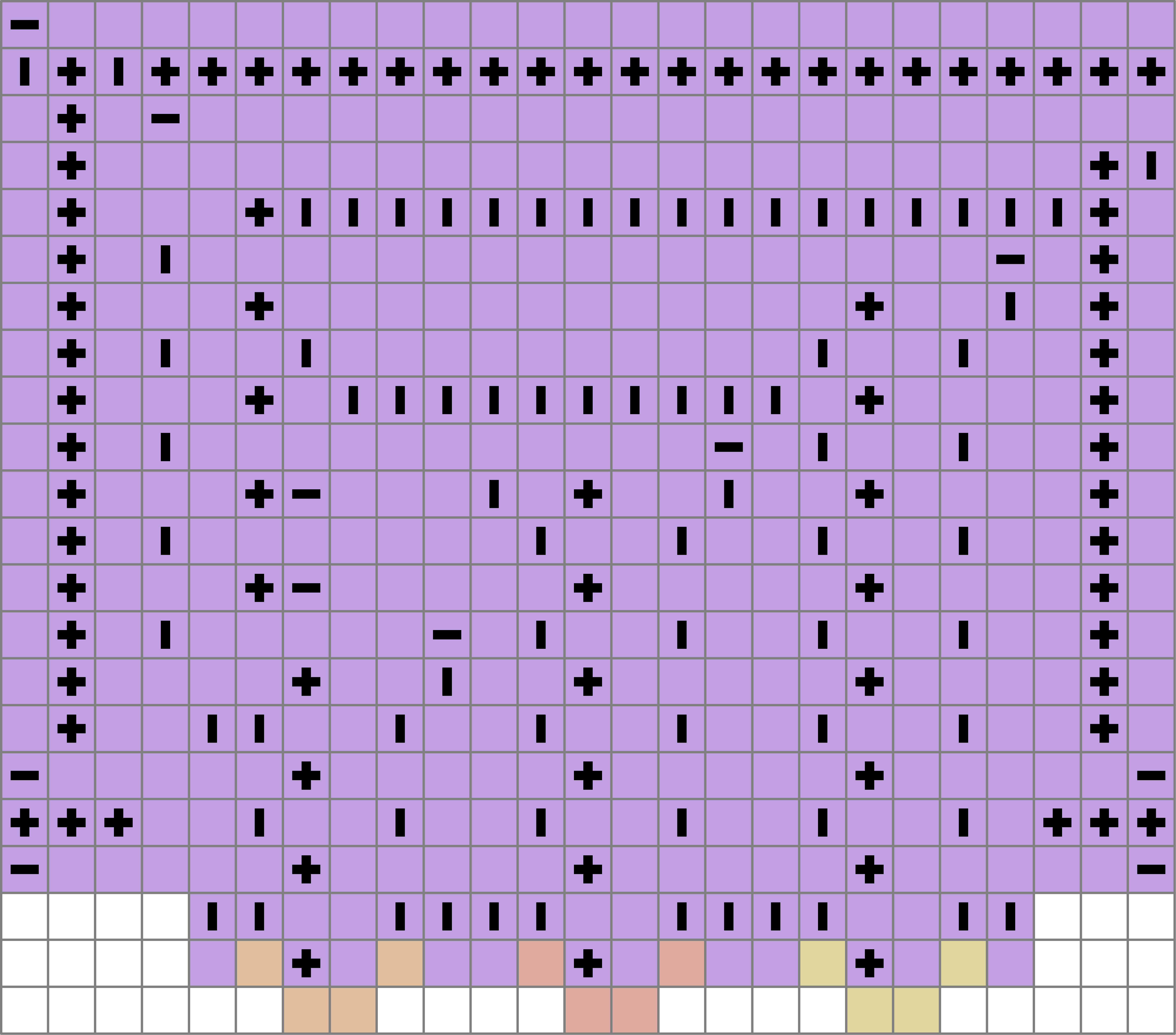}
\caption{An unsolved clause gadget.  Mandatory area is purple and optional areas are brown.}
\label{fig:clause-unsolved}
\end{figure}

The mandatory areas of the clause include all clues and cells shown in \cref{fig:clause}
and optional areas consisting of the row of cells at the bottom of the gadget,
specifically the set of cells under the \vertclue lines at the bottom of the gadget. \xxx{come back with colored figure}

Each of the three wires in this gadget has two intended solutions: true or false. In \cref{fig:clause},
the wire is blue if it represents true and red if it represents false.
The leftmost wire behaves somewhat differently from the others because
it is closest to the clause verification wire.

Importantly, the clause gadget can be expanded horizontally such that the variable
wires can be spaced an arbitrary amount beyond the width of the base gadget shown in
\cref{fig:clause}. The columns between the literal wires in the clause gadget can
 be expanded an arbitrary number of columns. Such an example expansion is shown in
 \cref{fig:wide-clause}. In this example, the columns have been expanded such that
 the entire gadget is wider by $4$ columns and the number of columns between
 each literal in the gadget has been expanded by $2$ columns.

 \begin{figure}[H]
    \centering
    \includegraphics[scale=\SCALE]{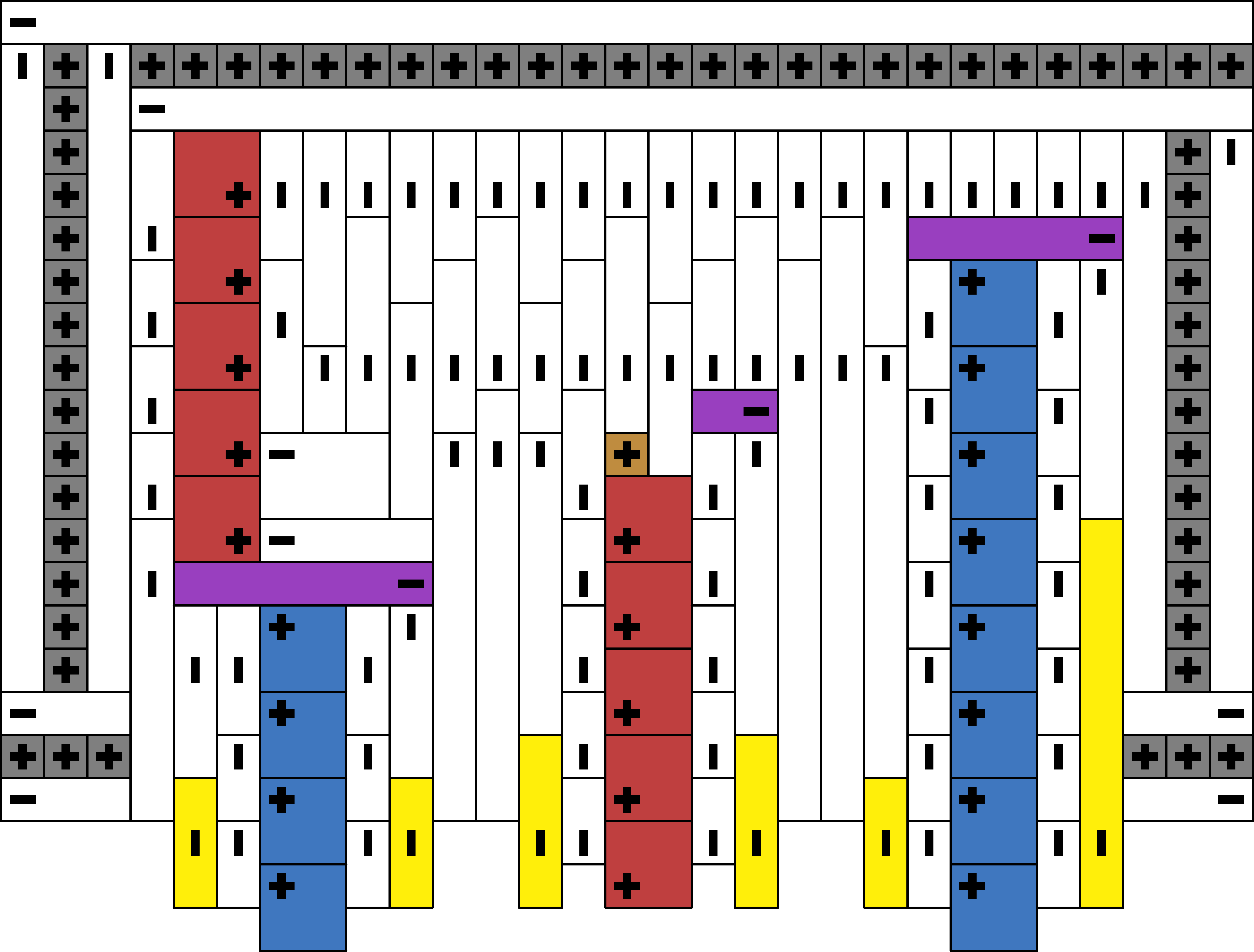}

    \vspace*{-2ex}

    {}\hspace*{\calc{10/29} \WIDTH}
    $\underbrace{\hspace*{\calc{2/29} \WIDTH}}_{\clap{repeatable}}$
    \hspace*{\calc{5.4/29}\WIDTH}
    $\underbrace{\hspace*{\calc{2/29} \WIDTH}}_{\clap{repeatable}}$
    \hspace*{\calc{9/29} \WIDTH}

	\caption{Example where the columns in between literal wires in the clause
    gadget have been expanded. The columns which are able to be repeated an arbitrarily
    number of times have been labeled as ``repeatable'' in the figure since
    they can be repeated an arbitrarily number of times to make the clause an
    arbitrary width.}\label{fig:wide-clause}
 \end{figure}

\begin{lemma}\label{lem:false}
  If any wire is in the \emph{false} configuration, then the variable enforcement line corresponding to the wire will not be able to go across the gadget.
\end{lemma}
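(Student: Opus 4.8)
The plan is to argue from the parity rigidity already established for wires and then show, by a short geometric case analysis, that the \emph{false} parity physically blocks the row used by the enforcement line. First I would fix coordinates on a single literal wire inside the clause gadget and recall from \cref{cor:wire-parity} that a false wire has \emph{all} of its $2\times2$ squares in the single orientation corresponding to false; in particular its topmost square reaches up into the row along which the variable enforcement line (a horizontal rectangle realized by a \horzclue in the clause) would have to run in order to cross the gadget, as depicted in \cref{fig:clause-unsolved} and \cref{fig:clause}. I would name this row $y^\ast$ and identify the specific cell $c$ in $y^\ast$ lying directly above the wire column that the enforcement line must cover if it is to ``go across.''

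The core step is then immediate from the disjointness constraint: in the false configuration the cell $c$ is already occupied by the wire's top $2\times2$ square, so the enforcement rectangle cannot also cover $c$, and a horizontal rectangle that instead stops short of $c$ does not reach the other side and hence does not go across. To make this airtight I would rule out the obvious escapes. Routing the line one row higher or lower is impossible because, by \cref{cor:wire-parity} and \cref{thm:sheathing-parity}, the wire squares and the inner sheathing tile their columns with a fixed parity, leaving no free cell there and forcing either an uncovered cell or an overlap. Threading the line past the square along $y^\ast$ fails for the same overlap reason, and any attempt to abut the horizontal rectangle against the corner of the top square (or the sheathing) produces three rectangle corners meeting at one grid point, i.e.\ a four-corner violation. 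These are only finitely many local configurations precisely because the parity of every nearby rectangle is already pinned down.

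Finally I would treat the leftmost wire separately, as flagged in the text: it sits adjacent to the clause verification wire, so the blocked row $y^\ast$ is the one shared with the verification wire's top \plusclue rather than an interior column, but the argument is the same---its false-parity square occupies the cell the enforcement line needs, and the disjointness and four-corner constraints again forbid passage. I expect the main obstacle to be exactly this geometric bookkeeping: correctly pinning down which cell each wire's top square occupies in the false parity and confirming there is no four-corner-avoiding stagger that sneaks the horizontal rectangle through. Leaning on the established rigidity of \cref{cor:wire-parity} and \cref{thm:sheathing-parity} keeps the case analysis finite, and I would cross-check each case against the cell positions shown in \cref{fig:clause}.
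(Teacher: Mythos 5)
Your central mechanism is inverted relative to how the gadget actually works, and this is a genuine gap rather than a stylistic difference. You claim that in the false parity the wire's own top $2\times2$ square ``reaches up into'' the enforcement line's row, so that the cell $c$ is \emph{already occupied} and disjointness alone blocks the line. But the paper's gadget does the opposite: a false wire leaves the top two cells of the wire \emph{uncovered} by its own squares, and the blocking of the enforcement line is \emph{indirect} --- it comes from the rectangles that are then forced to cover those two orphaned cells. If your direct-collision picture were right, there would be nothing further to prove; the actual work of the lemma is the case analysis you skip: enumerating the ways the two uncovered top cells can be covered and checking that every one of them intrudes on the enforcement row. That analysis is genuinely different for the leftmost wire (where the only option is a $2\times2$ square, \cref{fig:leftmost}) than for the other two wires (where the options are a $1\times1$ square on one cell plus a vertical rectangle descending from above onto the other, in either arrangement, \cref{fig:other-wires}), and in each case one must verify that no other covering survives the four-corner constraint. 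Your proposal acknowledges that the leftmost wire is special but then applies the same (incorrect) occupied-cell argument to it, so the distinction does no work.

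A secondary symptom of the same confusion: your appeal to \cref{cor:wire-parity} and \cref{thm:sheathing-parity} to rule out ``routing the line one row higher or lower'' is beside the point, since the enforcement line is anchored by its clue and its row is not in question; what is in question is which rectangles end up crossing that row. To repair the proof you would need to (i) correctly identify that the false parity leaves two top cells of the wire uncovered, (ii) enumerate the admissible coverings of those cells under the disjointness and four-corner constraints, separately for the leftmost and non-leftmost wires, and (iii) show each admissible covering occupies a cell of the enforcement row. That is precisely the structure of the paper's argument.
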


\begin{proof}
If a wire is in the false configuration, then there exists at two cells on the top of the wire that need to be covered.
These two cells can be covered in two different ways. We first prove this lemma for the leftmost
wire and then prove the lemma for the other wires since the leftmost wire is different from the others.
In this case, the only way to cover the two cells is with a $2 \times 2$ square (see \cref{fig:leftmost}), blocking the variable enforcement line from crossing the top of the wire.

\begin{figure}[H]
\centering
\includegraphics[scale=\SCALE]{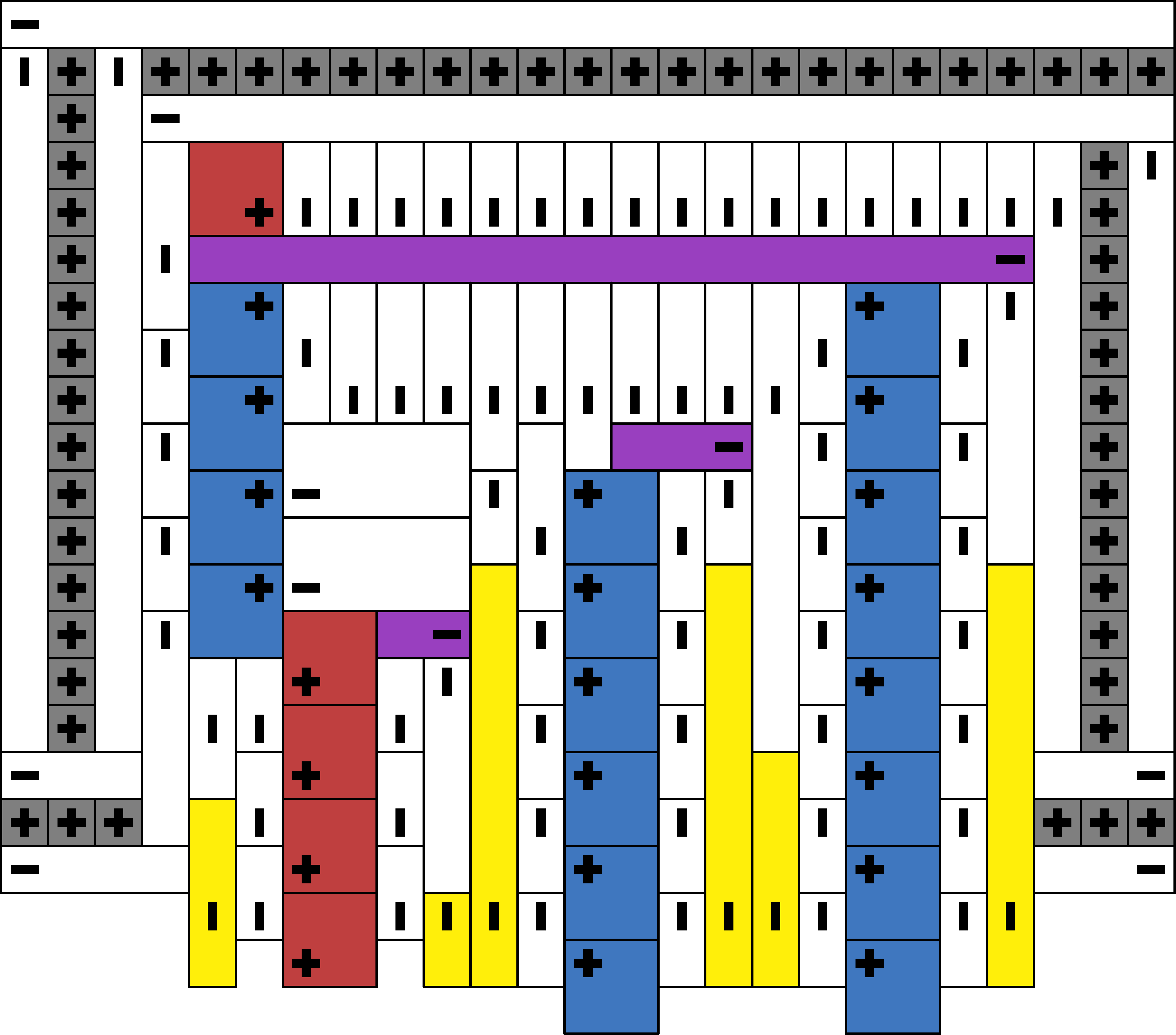}
\caption{When the leftmost wire is set in the false configuration, the only way to cover the top two cells
of the wire is with a $2 \times 2$ square that blocks the variable enforcement line.}\label{fig:leftmost}
\end{figure}

For the other two wires, the top two cells can be covered in only two ways.
Either a $1 \times 1$ square covers one of the two cells and a vertical line from the top
covers the other cell or vice versa (see \cref{fig:other-wires}).

\begin{figure}[H]
	\begin{subfigure}{0.48\linewidth}
	\centering
	\includegraphics[width=\linewidth]{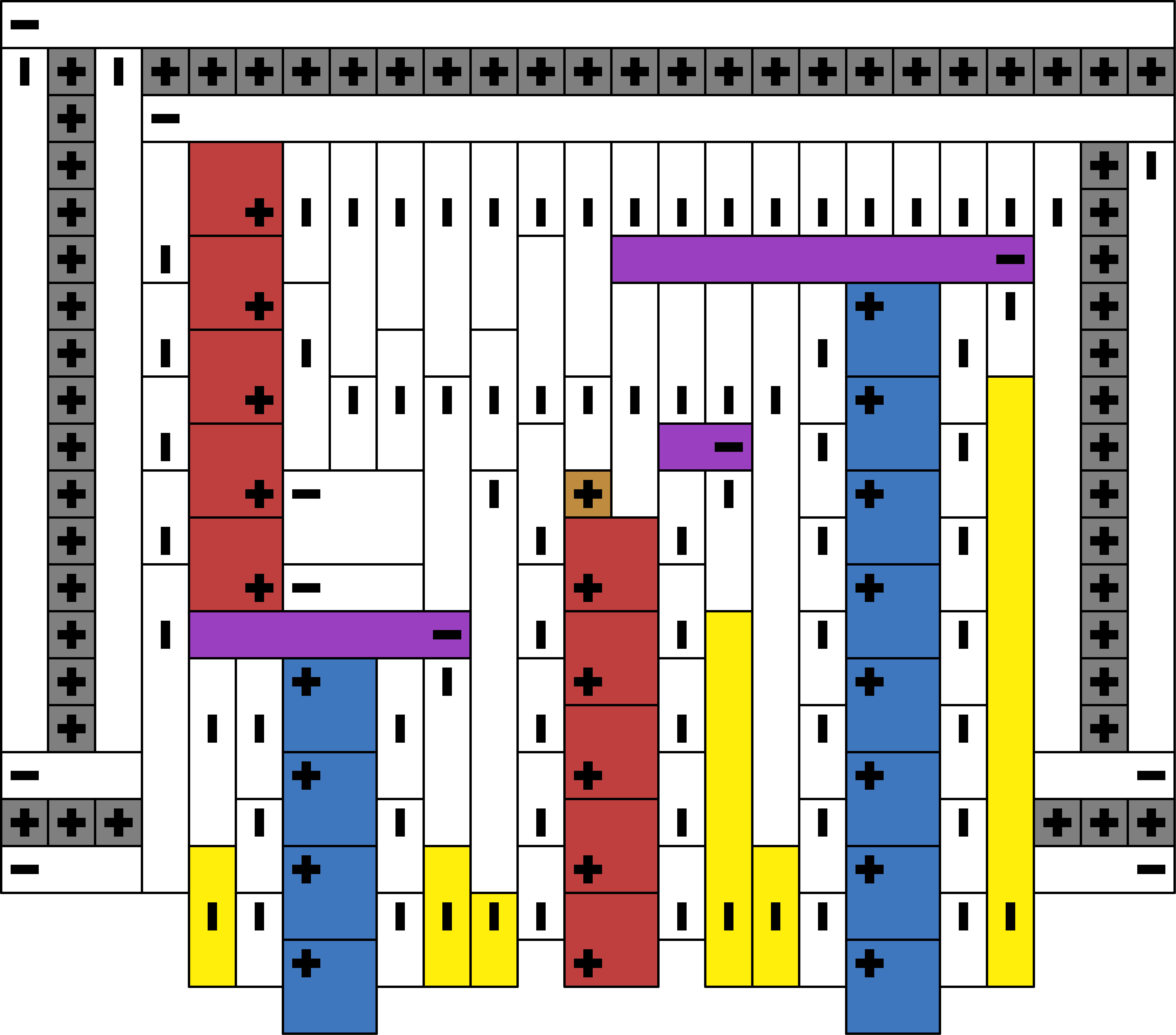}
	\caption{The middle wire is false.}\label{fig:wire-false-examples-1}
	\end{subfigure}\hfill
	\begin{subfigure}{0.48\linewidth}
	\centering
	\includegraphics[width=\linewidth]{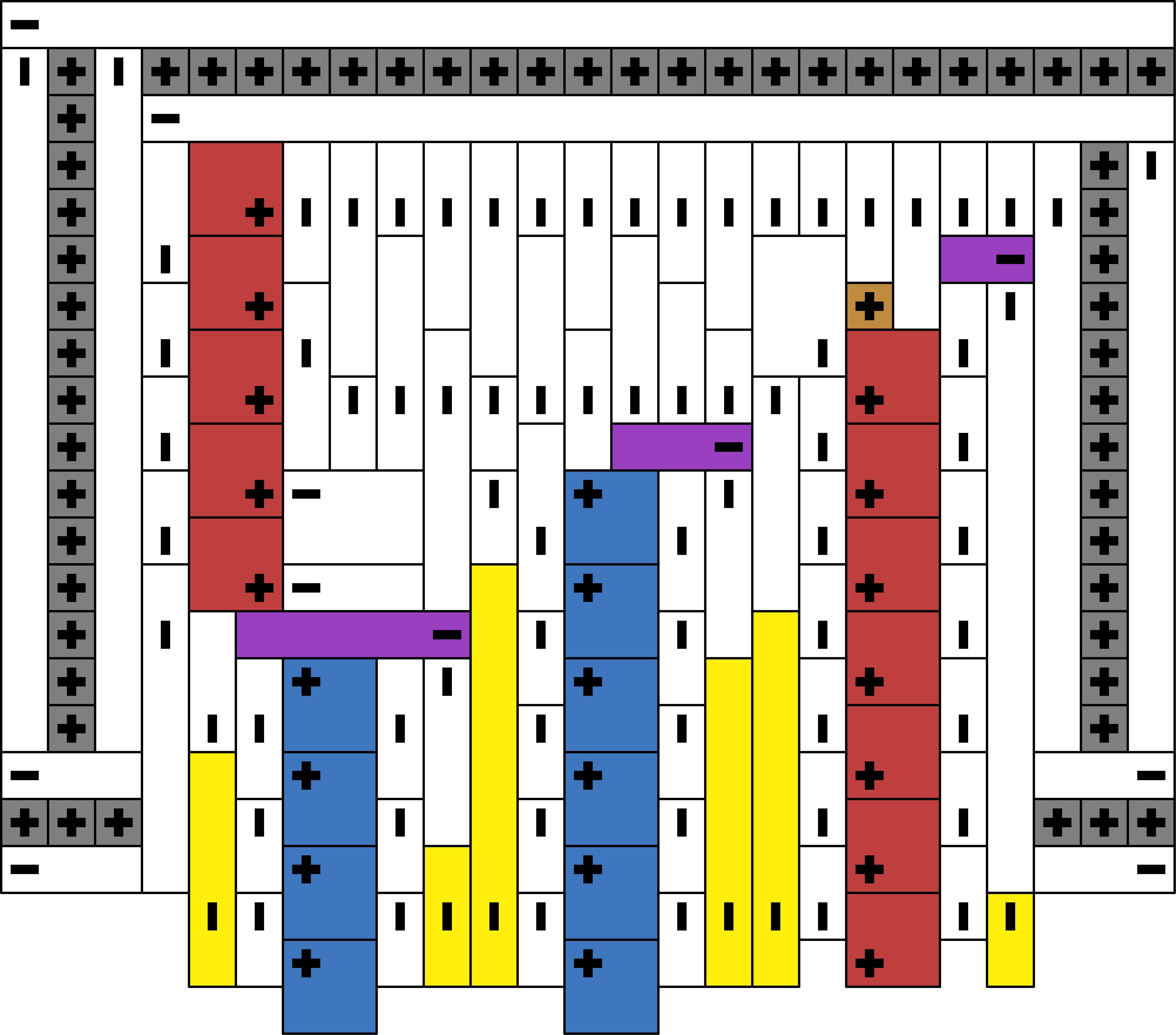}
	\caption{The right wire is false.}\label{fig:clause-false-examples-2}
	\end{subfigure}
	\caption{For the false wires, the only two configurations that will
	guarantee that the two cells at the top of the wires are covered are
	the cases where one $1 \times 1$ square covers one of the two cells and
	a long rectangle extending from the top covers the other cell. The $1\times 1$
	squares are shown in brown in the figures above.}\label{fig:other-wires}
\end{figure}

No other configurations are available that does not
violate the four-corner constraint. Thus, this configuration prevents the corresponding variable
enforcement lines from going across the gadget.
\end{proof}

\begin{corollary}\label{cor:all-false}
When all wires in the gadget are false, the puzzle does not have a solution.
\end{corollary}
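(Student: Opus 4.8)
The plan is to obtain the corollary as a direct consequence of \cref{lem:false} together with the behavior of the clause verification wire described just before that lemma. First I would apply \cref{lem:false} once to each of the three literal wires: since all three are in the \emph{false} configuration, none of the three variable enforcement lines can cross the gadget. The corollary then reduces to a single claim — that with every enforcement line blocked, there is no valid way to cover the top two cells of the clause verification wire, and hence no solution to the whole puzzle.

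Second, I would zoom in on those top two cells. As noted before \cref{lem:false}, the intended (and essentially only) way to cover them is with the verification wire's top \plusclue after a crossing enforcement line has supplied the needed parity shift; so the heart of the argument is to rule out every alternative when no such shift is available. I would enumerate the rectangles that could otherwise reach these cells: (i) the verification wire's own top \plusclue, (ii) a \vertclue rectangle descending from above, and (iii) a \horzclue rectangle entering horizontally. Using the alternating-parity reasoning of \cref{lem:wire2x2}, \cref{cor:wire-parity}, and \cref{thm:sheathing-parity}, option (i) is forced into the wrong parity (so it cannot cover both cells), and options (ii) and (iii) each either leave a cell of the verification wire uncovered or create a four-corner violation. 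Combining these, the mandatory area of the clause gadget cannot be covered while respecting the four-corner constraint, so the all-false profile is locally impossible and the puzzle has no solution.

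The step I expect to be the main obstacle is the case analysis in the second paragraph: pinning down exactly why, once all three enforcement lines are blocked, the parity that \cref{cor:wire-parity} and \cref{thm:sheathing-parity} force on the verification wire's sheathing is precisely the one that leaves its top two cells uncoverable, and why no rectangle imported from outside the verification wire can patch the resulting hole without a four-corner violation. This requires simultaneously tracking the $2 \times 2$ square that a false leftmost wire forces (\cref{fig:leftmost}) and the $1 \times 1$-plus-vertical-rectangle coverings that the other false wires force (\cref{fig:other-wires}), and checking that none of these configurations frees up room for the enforcement line or for an alternative covering of the verification wire's top. The rest of the corollary is bookkeeping on top of \cref{lem:false}.
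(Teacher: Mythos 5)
Your proposal matches the paper's proof: both invoke \cref{lem:false} to block all three enforcement lines and then argue that the clause verification wire cannot be fully covered, after ruling out coverage of its top two cells by the horizontal clue above or the vertical clues beside them. The only cosmetic difference is where the parity contradiction is located --- the paper lets the top \plusclue cover those two cells as a $2 \times 2$ and then derives that a middle cell or the bottom two cells of the verification wire are left uncoverable without a four-corner violation, whereas you place the obstruction at the top; these are the same parity-propagation argument read in opposite directions.
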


\begin{proof}
By \cref{lem:false}, no variable enforcement line can go across the gadget if all
wires are false. In order to solve the puzzle presented by the gadget, the top two cells
of the clause verification line must be covered. These two cells cannot be covered by the
horizontal line on top of them nor can they be covered by the vertical lines beside them.
Thus, they must be covered by the $2 \times 2$ square formed in the clause verification line.
However such a square will either leave a cell in the middle of the clause verification line
uncovered or will leave the bottom two cells of the line uncovered. In this case, no configurations
exist in covering these bottom two cells without violating the four-corner rule. See \cref{fig:clause-all-false}.
Thus, the gadget is unsatisfiable if all wires into the gadget are false.
\end{proof}

\begin{lemma} \label{lem:one-true}
If at least one of the wires entering the clause gadget is in the true
configuration, then the clause gadget is locally solvable.
\end{lemma}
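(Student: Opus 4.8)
The plan is to prove local solvability constructively: for every profile in which at least one wire carries the true value, I would exhibit an explicit valid local solution, i.e., a disjoint set of rectangles satisfying all of the clause gadget's clues with no four-corner violation. Since there are three wires and we require at least one to be true, this is a priori a case analysis over the seven satisfying combinations; I expect to collapse it by giving one uniform construction driven by a single ``active'' true wire rather than handling each combination separately. The cases with exactly two true wires are already displayed (leftmost, middle, and right false in \cref{fig:leftmost,fig:wire-false-examples-1,fig:clause-false-examples-2}), so the real content is covering the exactly-one-true and all-true profiles.

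The core of the construction is to designate one true wire as \emph{active} and route its variable enforcement line across the gadget into the clause verification wire. By (the proof of) \cref{lem:false}, a true wire leaves its top row free, so the horizontal enforcement rectangle can pass through it; this rectangle supplies the parity shift that lets the top two cells of the clause verification wire be covered by that wire's top square. With the verification wire thereby satisfiable, I would cover the rest as follows: every remaining true wire is covered with its branch oriented for true (as in \cref{fig:clause}); every false wire is covered by the blocking pattern from \cref{lem:false} (a $2\times2$ square for the leftmost wire, or a $1\times1$ square plus a vertical rectangle for the middle or right wire); the optional bottom row of cells beneath the \vertclue lines is assigned consistently; and the gadget's border \plusclues form their forced $1\times1$ squares. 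For the canonical profiles this solution can be read directly off the already-displayed figures, certifying their local solvability.

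To handle the telescoping to arbitrary width, I would show that the repeatable columns labeled in \cref{fig:wide-clause} are covered by a uniform, periodic pattern that extends the fixed-width solution without creating new clues or new coverage obligations, so width does not affect solvability. Finally, I would check the four-corner constraint at every place where the enforcement line, the wire squares, and the sheathing meet, reusing the staggering device from \cref{thm:sheathing-parity} and the variable-gadget argument: wherever three right angles would otherwise coincide, shift one rectangle boundary by a single cell to break the coincidence.

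The main obstacle is guaranteeing that a single true wire can always deliver its enforcement line to the clause verification wire regardless of the states of the wires between them, since by \cref{lem:false} a false wire blocks any line crossing its top. I would resolve this either by routing the enforcement lines of the three wires at distinct vertical offsets, so that each reaches the verification wire independently and a false wire obstructs only its own line, or by always choosing the active wire to be the true wire nearest the verification wire, so that no blocking wire lies on its route. Pinning down this routing, and confirming that it still leaves a globally consistent, four-corner-free covering for every combination of the other two wires, is the crux; the remainder is reading solutions off the figures and verifying the periodic fill.
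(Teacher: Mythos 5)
Your proposal matches the paper's approach: the paper's proof is exactly a constructive case analysis that designates a true wire whose variable enforcement line crosses the gadget to deliver the parity shift to the clause verification wire, and then reads the seven satisfying configurations off \cref{fig:clause-2,fig:clause-3,fig:clause-4,fig:clause-5,fig:clause-6,fig:clause-7,fig:clause-8}. The routing obstacle you flag is resolved in the gadget design by the first of your two options --- each wire has its \emph{own} enforcement line (\cref{lem:false} is stated per-wire), so a false wire obstructs only its own line --- and the special treatment you give the leftmost wire is likewise present in the paper's proof.
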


\begin{proof}
In any wire is in the true configuration, then the variable enforcement line corresponding to the gadget will be able to go across the
gadget. For the leftmost wire, the clause verification line will be in the configuration that ensures that all cells that need to be covered by the
line are covered. Otherwise, the variable enforcement line will be able to cause the clause verification line to cover all the necessary cells. See \cref{fig:clause-2,fig:clause-3,fig:clause-4,fig:clause-5,fig:clause-6,fig:clause-7,fig:clause-8}.
\end{proof}

Using the above lemmas, we are able to prove the following properties
of the profile table of the clause gadget.

\begin{corollary}\label{cor:table-proper}
    The profile table of the clause gadget is proper.
\end{corollary}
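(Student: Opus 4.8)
The plan is to verify, for each profile appearing in the clause gadget's profile table, the two defining conditions of a \emph{proper} profile: that the profile contains the gadget's mandatory area, and that each of the gadget's optional areas is either fully contained in or fully disjoint from it. Since a profile table lists exactly the locally solvable profiles, it suffices to check these two conditions for every locally solvable profile. Recall that here the mandatory area of the clause gadget consists of all its clues together with all cells shown in \cref{fig:clause}, while its optional areas are the cells of the bottom row lying under the \vertclue lines at the three points where the gadget meets its attached wires.

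For the first condition I would observe that it holds essentially for free: every locally solvable profile is by definition induced by a gadget solution, a gadget solution is in particular a gadget area assignment, and a gadget area assignment partitions the cells of the subpuzzle and therefore covers the mandatory area. Next, using \cref{lem:one-true} together with \cref{cor:all-false}, I would identify the locally solvable profiles as exactly those in which at least one attached wire is in the true configuration, and note that for each such profile an explicit gadget solution has already been exhibited (the leftmost-wire analysis in \cref{lem:false} and the solutions shown in \cref{fig:clause-2,fig:clause-3,fig:clause-4,fig:clause-5,fig:clause-6,fig:clause-7,fig:clause-8}). This reduces the whole corollary to checking the second (optional-area) condition on this finite family of exhibited solutions.

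For the second condition I would inspect these solutions and confirm that, at each wire connection point, the corresponding bottom optional area is covered \emph{atomically}: either every one of its cells is covered by a rectangle of the clause gadget, or none of them is, in which case those cells are left to be covered by the neighboring wire gadget. The structural fact making this work is that the \vertclue rectangles forming the bottom sheathing are forced, by the same parity reasoning as in \cref{thm:sheathing-parity}, to be $1 \times 2$ rectangles aligned with the adjacent wire's squares, so the clause gadget's rectangles enter a given optional area wholesale according to the parity of that wire and never split it. The hard part will be ruling out a \emph{partial} covering of a single optional area --- for instance a rectangle extending across the bottom boundary that covers some but not all cells of one optional area --- and the argument closing this is precisely the aligned-parity and four-corner reasoning already established for wire sheathing, now applied at the clause--wire interface; once no such partial covering survives, every listed profile is proper and the profile table is proper as claimed.
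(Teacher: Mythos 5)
Your proposal is correct, but it proves more than this corollary requires, and in doing so it relocates work that the paper assigns elsewhere. The paper in fact gives no proof of this corollary: under the framework of \cref{sec:framework}, properness of a profile table is essentially definitional, because each entry of the table is the profile induced by an exhibited gadget solution, a gadget solution is a gadget area assignment, and the defining properties of a gadget area assignment (its areas cover the mandatory area, and cover each optional area either fully or not at all) are precisely the two conditions for the induced profile to be proper. So the whole corollary reduces to inspecting the solutions in \cref{fig:clause-2,fig:clause-3,fig:clause-4,fig:clause-5,fig:clause-6,fig:clause-7,fig:clause-8} and confirming that each is a genuine gadget area assignment whose bottom-row optional areas are taken wholesale or not at all --- which is exactly what your first two paragraphs do, and they do it correctly. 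The additional effort in your final paragraph, ruling out a rectangle that crosses the bottom boundary and covers only \emph{part} of an optional area, is not needed here: such a configuration induces an \emph{improper} profile, which by definition never appears in the table, so it cannot threaten properness. What must be shown about it is that it is \emph{locally impossible}, and that is the content of the completeness claim (\cref{cor:table-complete}), where the parity and four-corner analysis at the clause--wire interface genuinely belongs. In short, your argument is sound and would establish the corollary, but the ``hard part'' you identify is the completeness lemma's burden, not this one's.
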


\begin{lemma}\label{cor:table-complete}
    The profile table of the clause gadget is complete.
\end{lemma}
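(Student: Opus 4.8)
The plan is to leverage \cref{cor:table-proper} together with the case analysis already carried out in \cref{lem:false}, \cref{cor:all-false}, and \cref{lem:one-true}, so that completeness reduces to a clean dichotomy on the three wire values plus a short argument ruling out improper profiles. Recall that a profile table is complete exactly when every profile \emph{not} in the table is locally impossible, so I must dispose of two kinds of profiles: proper profiles that fail to appear, and improper profiles (which, by properness, are automatically absent from the table).

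First I would characterize the proper profiles. The only optional areas of the clause gadget are the interface cells at the bottom, underneath the vertical sheathing lines, and each incoming wire is forced by \cref{cor:wire-parity} into exactly one of its two parities (true or false). Whether the clause clues or the neighboring wire covers a given bottom interface cell is therefore determined by that wire's parity, so each proper profile is determined by the triple of wire values. This puts the proper profiles in bijection with the eight truth assignments of the three literals. The profile table lists exactly the seven assignments in which at least one wire is true, and each of these is locally solvable by \cref{lem:one-true}; hence every listed proper profile is accounted for, and the unique proper profile missing from the table is the all-false one. That profile is locally impossible by \cref{cor:all-false}: with all three wires false, the top two cells of the clause verification wire cannot be covered without leaving a cell uncovered or creating a four-corner violation.

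Finally I would argue that no improper profile can be induced by a global solution, making every improper profile vacuously locally impossible. Here I would use that the clause gadget is sheathed on three sides by the line of \plusclues forced to be $1 \times 1$ squares, so no clause rectangle can extend past the top, left, or right boundary and no foreign rectangle can intrude from those sides; this leaves only the bottom as a possible interface. Along the bottom, \cref{lem:wire2x2} and \cref{cor:wire-parity} force each wire to be tiled by $2 \times 2$ squares of a single consistent parity, so the boundary between wire coverage and clause coverage is cell-aligned and every bottom optional area is either entirely taken by the wire or entirely left for the clause. Thus the profile induced by any global solution is always proper, and combined with the previous paragraph this establishes completeness. I expect the main obstacle to be precisely this last interface step: excluding stray \horzclue rectangles that could partially intrude into an optional area from an adjacent gadget. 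Controlling these via the sheathing, and, as in the terminator analysis, via the global placement that keeps foreign \horzclues away from the wire-facing side, is the delicate part of the argument.
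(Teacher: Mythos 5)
Your proposal is correct and follows essentially the same route as the paper: identify the all-false assignment as the only proper profile missing from the table, invoke \cref{cor:all-false}, and then argue that the border of single-cell squares on three sides together with the wire/variable constraints at the bottom isolate the gadget so that no outside clue can cover its cells. The only organizational difference is that you attribute local impossibility of the all-false profile directly to \cref{cor:all-false} (which strictly speaking only gives non-solvability as a local solution) and file the isolation argument under improper profiles, whereas the paper deploys that same isolation argument precisely to upgrade \cref{cor:all-false} from ``not locally solvable'' to ``locally impossible''---but since your final paragraph supplies exactly that argument, the content is all present.
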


\begin{proof}
The clause gadget's profile table contains all profiles shown in \cref{fig:clause} except for the all-false configuration shown in \cref{fig:clause-all-false}.  By \cref{cor:all-false}, the all-false configuration is not locally solvable.  It remains to show the all-false configuration is locally impossible.

To do this, we show that no solution to a clue outside of this profile
is able to solve any part of the all-false clause profile--essentially
that the clause gadget is fully isolated from the rest of the puzzle.
By design, no clue above, to the left of, or to the right of the clause can cover any of the cells that are left uncovered by the literals, because the row and columns of single-cell squares blocks any rectangles from reaching the uncovered cells.

We now prove that no clues from the bottom of the
gadget can help cover any of these cells. Such clues can only potentially
cover the optional areas at the bottom of the gadget. We show that
such clues cannot cover parts of the literal gadgets. By \cref{lem:coupler-configs},
there are only two possible configurations of the variable gadgets; thus,
no other outside fillers can cover any cells in the incoming wires. Hence,
no clues adjacent to the bottom of the gadget can help cover any part of
the incoming wires.

Thus the all-false profile is locally impossible, so the profile table is complete.
\end{proof}

\begin{figure}
\centering
\begin{subfigure}{0.45\linewidth}
\centering
\includegraphics[width=0.7\linewidth]{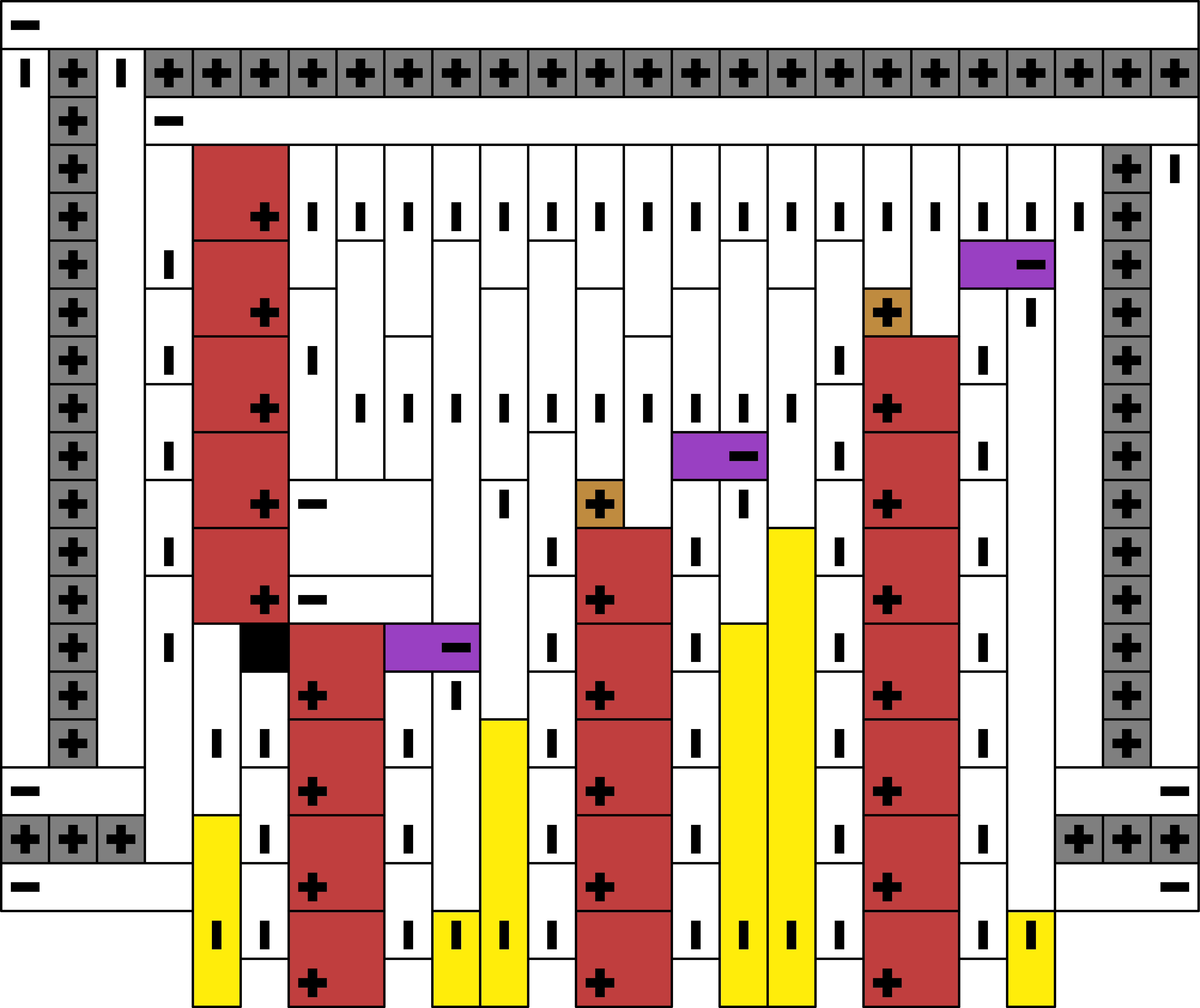}
\caption{The (false, false, false) configuration.}\label{fig:clause-all-false}
\end{subfigure}
\begin{subfigure}{0.45\linewidth}
\centering
\includegraphics[width=0.7\linewidth]{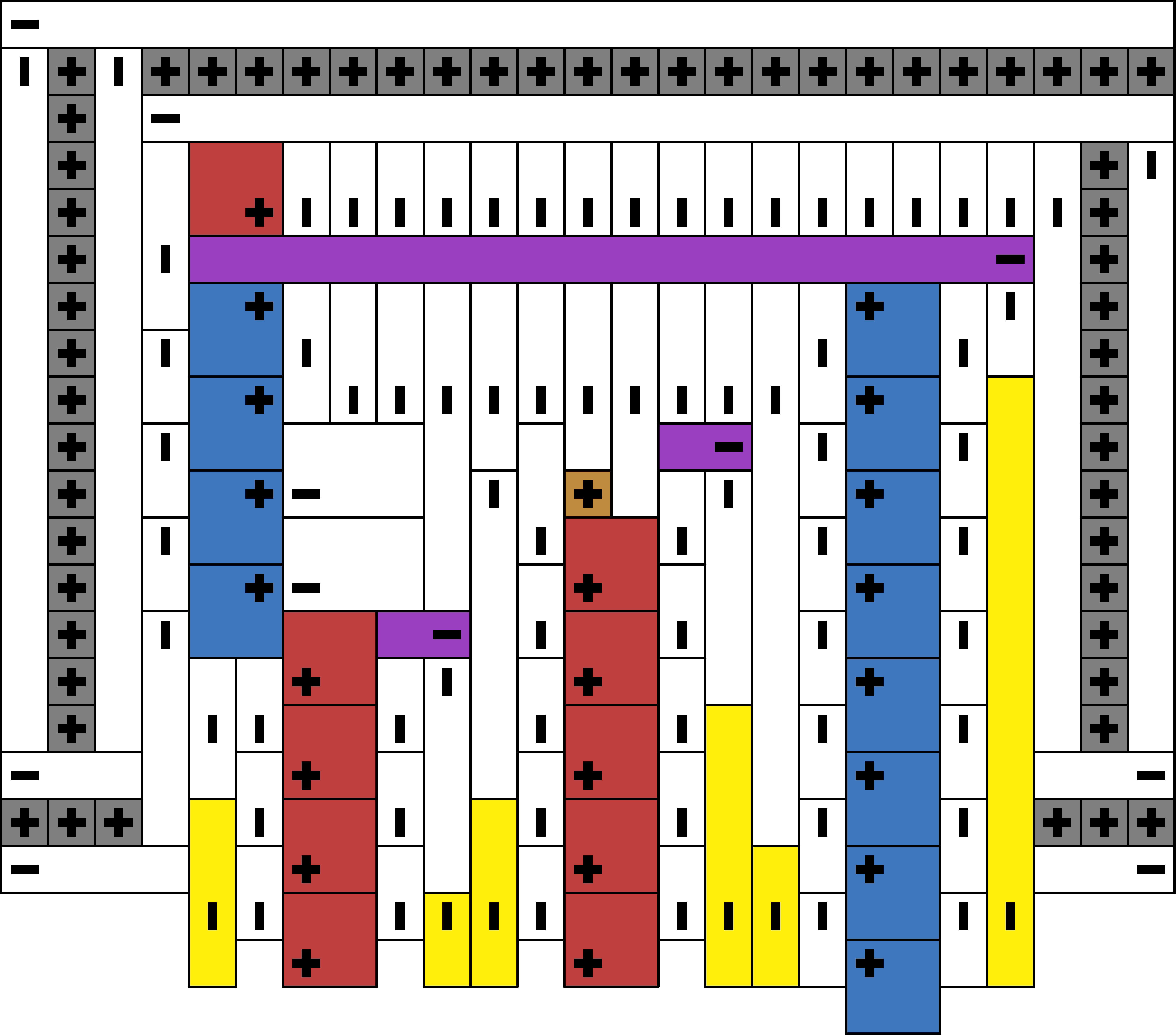}
\caption{The (false, false, true) configuration.}\label{fig:clause-2}
\end{subfigure}

\begin{subfigure}{0.45\linewidth}
\centering
\includegraphics[width=0.7\linewidth]{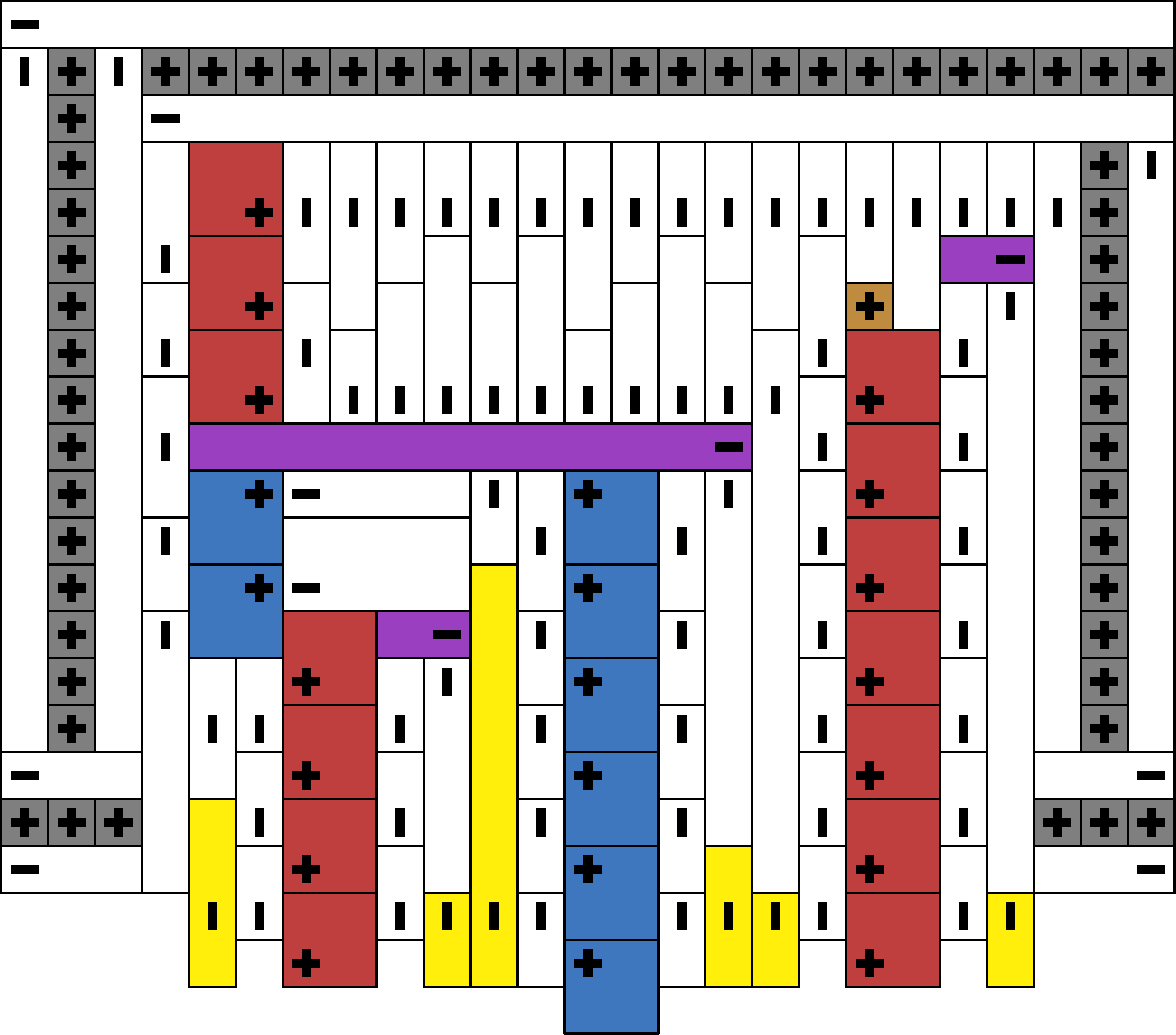}
\caption{The (false, true, false) configuration.}\label{fig:clause-3}
\end{subfigure}
\begin{subfigure}{0.45\linewidth}
\centering
\includegraphics[width=0.7\linewidth]{figures/clause-011}
\caption{The (false, true, true) configuration.}\label{fig:clause-4}
\end{subfigure}

\begin{subfigure}{0.45\linewidth}
\centering
\includegraphics[width=0.7\linewidth]{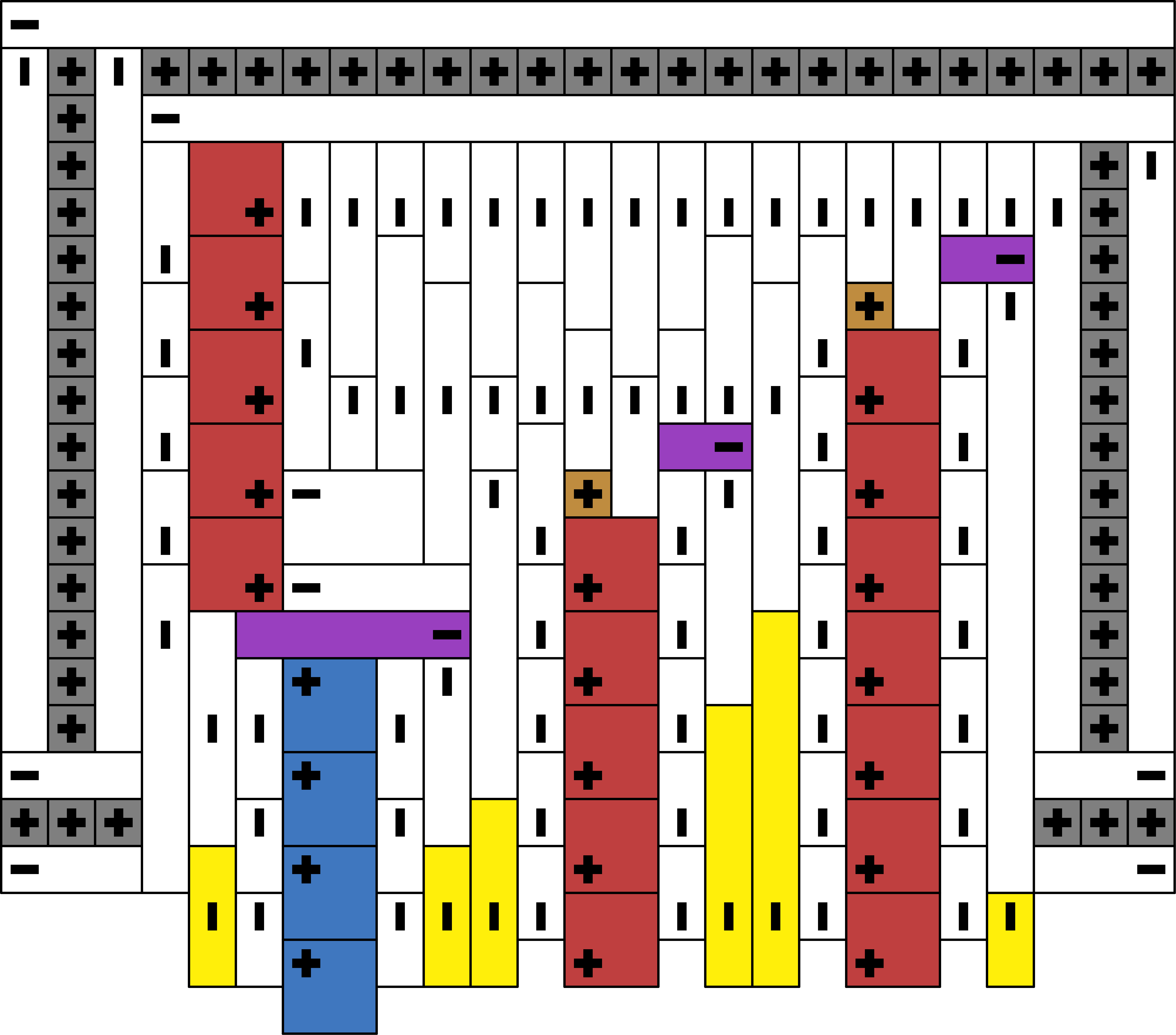}
\caption{The (true, false, false) configuration.}\label{fig:clause-5}
\end{subfigure}
\begin{subfigure}{0.45\linewidth}
\centering
\includegraphics[width=0.7\linewidth]{figures/clause-101}
\caption{The (true, false, true) configuration.}\label{fig:clause-6}
\end{subfigure}

\begin{subfigure}{0.45\linewidth}
\centering
\includegraphics[width=0.7\linewidth]{figures/clause-110}
\caption{The (true, true, false) configuration.}\label{fig:clause-7}
\end{subfigure}
\begin{subfigure}{0.45\linewidth}
\centering
\includegraphics[width=0.7\linewidth]{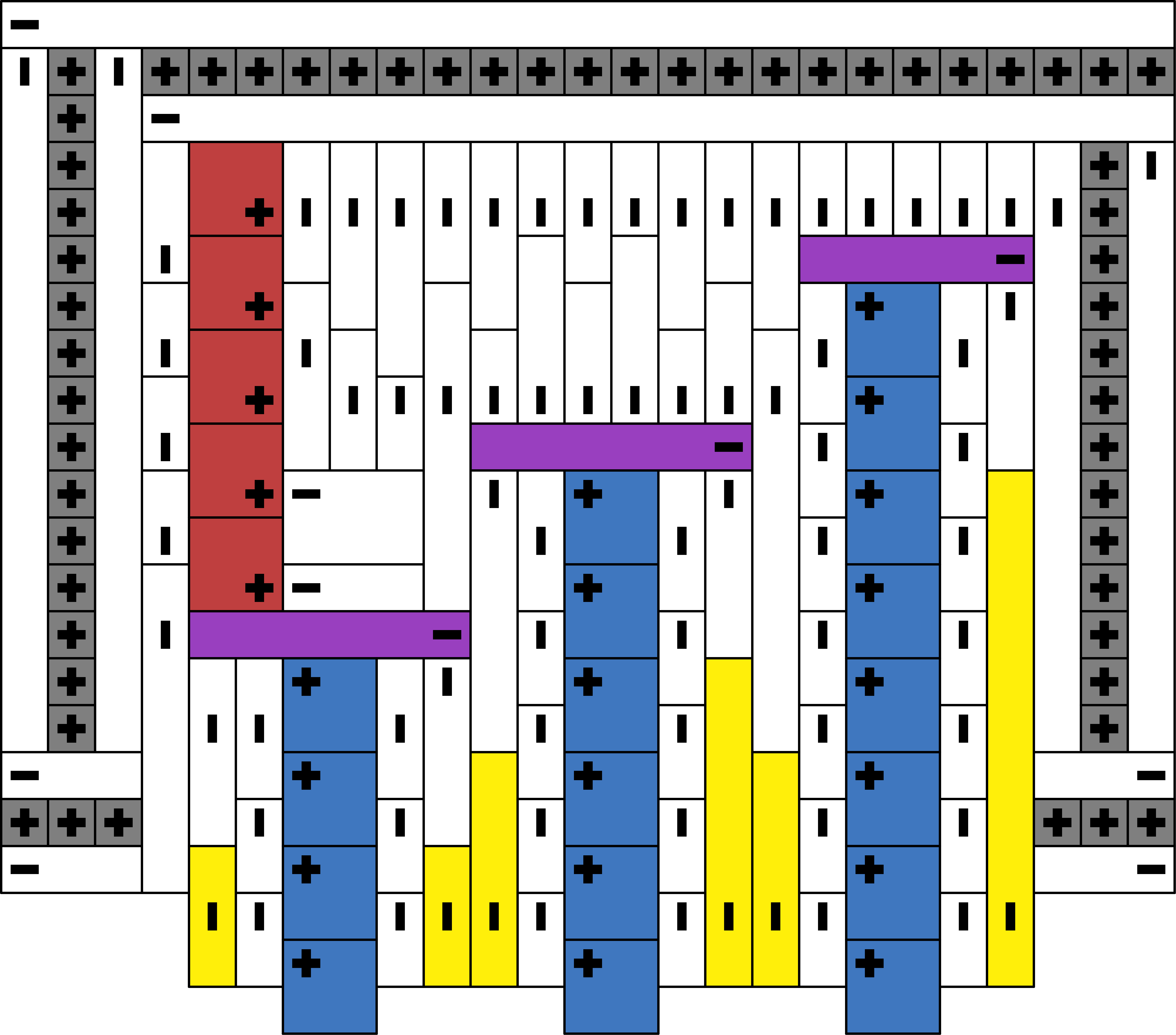}
\caption{The (true, true, true) configuration.}\label{fig:clause-8}
\end{subfigure}
\caption{The clause gadget.  All configurations shown here except the all-false configuration in \cref{fig:clause-all-false} are in the clause gadget profile table. Clues highlighted in yellow also function as the ``outer sheathing'' protecting the wires closest to them (see \cref{sec:filler})}\label{fig:clause}
\end{figure}

\subsection{Layout, Sheathing, and Filler} \label{sec:filler}

In order to build the full Tatamibari instance corresponding to a planar rectilinear monotone 3SAT instance, we lay out the gadgets as shown in \cref{fig:filler-diagram}: variable gadgets are positioned on a central line, while positive and negative clauses are positioned above and below respectively at heights corresponding with how many layers of clauses are nested below them, with wires running vertically from variables to clauses (both variable and clause gadgets can be extended arbitrarily far horizontally).  Variable and clause gadgets have rectangular profiles (except for where the wires ``plug in'' to them).  Variables and clauses have a uniform height, and for any two variable or clause gadgets, they are placed on exactly the same set of rows or they share no rows.

All wire gadgets in the puzzle produced by our reduction are safely placed; that is, no rectangle from a \vertclue can reach the cells to the right of the top and bottom \plusclues in the wire.  The only \vertclues in those columns are in clause gadgets.  The row of single-cell squares at the top of the clause gadget blocks any rectangles from extending upwards out of the clause gadget.  If a rectangle from a \vertclue in those columns of the clause gadget extends downward past the first \plusclue in the column to its left, the cell below that \plusclue cannot be covered by any clue, so rectangles cannot extend downward out of the clause gadget in those columns.  Thus \vertclues from clause gadgets cannot interact with wire gadgets, so the wires are safely placed.

Because we want the solvability of the Tatamibari instance to depend only on solving the gadgets, we need to add \defn{filler} clues that are always able to cover the areas outside the gadgets.

First, we set aside any cells horizontally adjacent to a wire gadget.  These cells will be covered by the outer sheathing clues described in described in \cref{sec:wires} and \cref{sec:variables} and highlighted in yellow in \cref{fig:wide-clause} and \cref{fig:clause}.  In the global solution, the areas assigned to the outer sheathing clues thus extend vertically outside their gadget.  For the purposes of the filler algorithm, we consider the cells covered by the outer sheathing to be part of the wire gadget.

Each filler clue corresponds to a rectangular area of space between gadgets, formed by breaking each row into maximal horizontally contiguous strips between (and bordered by) the gadgets, then joining vertically contiguous strips into a single rectangle if they have the same width.  The filler algorithm places a single clue in each of these rectangles (\CLUE|, \CLUE-, or \CLUE+ depending on the rectangle's aspect ratio), placed arbitrarily inside (say, in the upper-right corner).  See \cref{fig:filler-diagram} for an example. While it may be possible for the solver to use these clues differently than shown here, we only need to prove that if the solver does assign each rectangular area to its associated clue, it will cover the area.

The only potential problem lies in the possibility of a four-corner violation involving a filler rectangle. This can only happen where either (i) a corner of a filler rectangle meets a gadget and a wire coming from that gadget, or (ii) where two corners of filler rectangles meet along the edge of a gadget. If a corner of a filler section meets an edge of another filler section or the edge of the board there cannot be a four-corner violation.

\paragraph{Remark:} There is a potential third problem case, where two wires are directly adjacent with only the outer sheathing ($2$ columns) between them (see \cref{fig:clause-unsolved}, which has this property). This can be dealt with in either of two ways: ensuring that no wires are directly adjacent to each other by stretching the instance horizontally, or noting that the meeting points of the outer sheathing of the two adjacent wires can be adjusted to not produce a four-corner violation between them.

\begin{proposition} \label{prop: filler}
If the gadgets can all be satisfied, the filler clues can also be satisfied.
\end{proposition}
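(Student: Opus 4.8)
The plan is to check the four solution constraints in turn for the area assignment that gives each filler clue its associated rectangle while leaving every gadget area as fixed by the assumed gadget solutions. Constraints (1)--(3) are immediate from the filler algorithm: the filler rectangles are pairwise disjoint, each contains exactly its one clue, and together with the cells covered by the gadgets (charging the outer-sheathing cells to the wire gadgets, as in \cref{sec:filler}) they partition the entire board. The shape constraints (4)--(6) also hold by construction, since the algorithm assigns each rectangle a \CLUE+, \CLUE-, or \CLUE| clue according to that rectangle's aspect ratio, and the chosen clue then covers exactly its rectangle (placement in the upper-right corner is irrelevant here). Hence the only constraint that can fail is the four-corner constraint (7), and the whole content of \cref{prop: filler} is to show that (7) can always be achieved.

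First I would reduce (7) to a finite case analysis of the points where four rectangle corners coincide. The key structural observation is that every \emph{vertical} edge of a filler rectangle lies along the boundary of a gadget or of the board: the horizontal strips produced by the algorithm are maximal and terminate only at gadget cells or the board edge, and the vertical merging step (joining vertically contiguous strips of the same horizontal extent) does not move these left/right boundaries. Consequently filler cannot appear on both sides of a vertical filler edge over a common range of rows, since one side is always occupied by a gadget. Examining the four quadrant-rectangles at any interior crossing point, this forces at least one gadget rectangle on each side of the vertical line through the point, so a four-corner crossing always involves gadget rectangles. Crossings touching only the board boundary or meeting another rectangle's side interior (a T-junction) involve at most three rectangles and are harmless. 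This leaves exactly the two cases already isolated in \cref{sec:filler}: (i) a filler corner meeting a gadget corner where a wire plugs in, and (ii) two filler corners meeting at a point that is also a corner between two gadget boundary rectangles.

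The core of the proof is then to resolve cases (i) and (ii) using the slack in the gadget solutions rather than by modifying the filler. For each flagged boundary point I would use the fact that the gadget's boundary rectangles are not rigidly placed: in the outer-sheathing columns the two \vertclues meet at a point that can be slid up or down by one cell without changing the gadget's profile or violating any of the gadget's internal clues, and analogous one-cell shifts are available along the remaining gadget boundaries. Sliding such a meeting point offsets the gadget-side corner away from the coincident filler corner, converting the forbidden crossing into two permissible T-junctions. Because each adjustment is confined to a bounded neighborhood of a single gadget boundary and leaves the gadget's interior and profile untouched, the adjustments made at different points are independent and can be carried out simultaneously; the adjacent-wire situation noted in the Remark is handled identically, by shifting the outer-sheathing meeting points of the two abutting wires to different rows.

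I expect the main obstacle to lie in case (ii): verifying that at every point where two filler corners land on a gadget boundary there is always an available one-cell shift that removes the violation, and that these local shifts stay mutually compatible when several filler rectangles abut the same gadget edge in quick succession. The cleanest way to settle this is to observe that each shift only ever trades a potential violation at one point for a guaranteed T-junction and never introduces a new crossing elsewhere; since the shift touches only sheathing rectangles whose coverings are already free within the profile, no gadget constraint can be broken, and independence across points then yields a globally consistent choice. Establishing that the staggering freedom is always sufficient---not merely typically sufficient---is the delicate step, and is where the argument must be made carefully.
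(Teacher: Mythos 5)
Your proposal is correct and follows essentially the same route as the paper: reduce everything to the four-corner constraint, isolate the same two problem cases (filler corner meeting a gadget-plus-wire, and two filler corners meeting along a wire's side), and resolve both by sliding the meeting point of the two outer-sheathing \vertclue rectangles, which the paper likewise treats as the only needed degree of freedom. The extra justification you give for why only these two cases arise (vertical filler edges lying on gadget or board boundaries) is a mild elaboration of what the paper asserts, not a different argument.
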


\begin{proof}
Each filler clue will be satisfied by a rectangle covering its entire associated area; the cells horizontally adjacent to wires will be filled by two width-$1$ vertical rectangles from the outer sheathing clues, one coming from the clause gadget above and the other coming from the variable gadget below. The meeting point between the two outer sheathing rectangles can be adjusted as needed to avoid a four-corner violation. As mentioned, we have only two problem cases: (i) a corner of the filler rectangle meets a gadget and protruding wire; and (ii) corners of two sections meet on the side of a wire. Because both cases involve the side of a wire, we can avoid violations in either case by appropriately adjusting the meeting point of the sheathing clues.

(i)  To avoid having a corner where the corner of the filler section meets the wire and gadget, the meeting point of the two sheathing clues can be placed on the edge (not corner) of the filler section, thus avoiding a four-corner violation since the corner of the filler section meets the edge of one of the sheathing rectangles.

(ii) As long as the meeting point of the two sheathing rectangles of the wire is not at the point where the two filler sections meet, there is no four-corner violation. The meeting point can trivially be placed on the side of a filler section (while still respecting the parity of the wire as expressed by the inner sheathing).

Therefore, since the sheathing can always be adjusted to accommodate filler rectangles, the satisfiability of the Tatamibari instance depends only on the gadgets.
\end{proof}

\subsection{Finale} \label{sec:finale}

Now we can show that Tatamibari is NP-hard. Let $f$ be the reduction, which takes an instance $\Phi$ of planar rectilinear monotone 3SAT and returns a Tatamibari instance $f(\Phi)$; we want to show:
\begin{proposition}
If $\Phi$ has $n$ variables and $m$ clauses, then $f(\Phi)$
has size polynomial in $n+m$, and can be computed in time polynomial in $n+m$.
\end{proposition}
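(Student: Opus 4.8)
The plan is to establish two polynomial bounds: the dimensions $m' \times n'$ of the output grid $f(\Phi)$, and the time needed to emit it. Since every clue occupies its own cell, the number $k$ of clues is at most $m' \cdot n'$, so a polynomial bound on the grid area immediately bounds the description size of $f(\Phi)$; and once we know where each gadget is placed, the gadget templates plus the filling algorithm let us write out the grid in time linear in its area. Thus the whole proposition reduces to bounding the grid area polynomially in $n+m$ and checking that the placement computation is itself cheap.

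First I would bound the size of the planar rectilinear drawing that accompanies $\Phi$. The clause--variable graph has $n+m$ vertices and exactly $3m$ edges (after the duplication in \cref{sec:prm-3sat} that forces every clause to have three literals), and the layout of \cite{DK10} places the $n$ variable segments along the $x$-axis and the $m$ clause segments at $O(m)$ distinct nesting heights above and below, with each of the $3m$ edges drawn as a single vertical segment. Hence the supplied drawing occupies an integer grid of width $O(n+m)$ and height $O(m)$, i.e.\ area polynomial in $n+m$; this is the only place the input geometry enters, and it is either given with the instance or computable in polynomial time.

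Next I would translate drawing coordinates to Tatamibari cells by scaling up by a \emph{fixed} constant factor $c$ (depending only on the gadget designs, not on $\Phi$), so each unit of drawing length becomes $c$ cells. Every gadget has a constant-size template except the variable and clause gadgets, which telescope (\cref{fig:variable-wide}, \cref{fig:wide-clause}); telescoping inserts only $O(1)$ columns per unit of drawing width, so a gadget filling a drawing segment of length $\ell$ uses $O(\ell)$ cells, and a wire bridging a vertical span $h$ uses $O(h)$ cells. Because the scaled drawing has width $O(n+m)$ and height $O(m)$, every wire, variable, clause, and terminator gadget fits inside the corresponding $O(n+m) \times O(m)$ bounding grid, whose area is polynomial in $n+m$. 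The filler step (\cref{sec:filler}) then partitions the remaining empty cells into rectangles and places one clue in each; the number of filler rectangles is at most the number of maximal horizontal empty strips, which is bounded by the grid area, so the filler contributes only polynomially many additional clues.

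Finally, for the running time: computing the gadget placement from the drawing, stamping each (possibly telescoped) template into its region, and running the filler's strip-merging decomposition are each doable in time at most low-degree polynomial in the grid area, which we have already bounded polynomially; hence $f(\Phi)$ is computable in time polynomial in $n+m$. I expect the only point requiring genuine care to be the scaling/telescoping argument of the third paragraph --- verifying that matching the telescoped variable and clause gadgets and the wire heights to the fixed drawing needs only a constant blow-up per drawing unit, so that the polynomial drawing size is preserved, and in particular that no pair of wires is forced into super-constant horizontal separation beyond the $O(1)$ spacing already built into the gadgets (cf.\ the adjacency remark in \cref{sec:filler}).
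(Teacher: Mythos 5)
Your proposal is correct and follows essentially the same route as the paper: the drawing of the planar rectilinear monotone 3SAT instance has height $O(m)$ and width $O(n+m)$, and the construction expands it by only a constant factor, so the resulting grid (and hence the instance description and the time to emit it) is polynomial in $n+m$. The paper states this in three sentences; your additional detail on telescoping, filler clues, and running time is consistent elaboration of the same argument rather than a different approach.
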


\begin{proof}
Our construction expands the given planar rectilinear monotone 3SAT instance
by a constant factor.
Therefore it suffices to prove that planar rectilinear monotone 3SAT
is strongly NP-hard when given the coordinates of the rectilinear drawing.
Indeed, the height of the drawing is $O(m)$ and the width of the drawing
is $O(e)$ if the graph has $e$ edges, which is $O(m+n)$ by planarity.
\end{proof}

\begin{proposition}
If $\Phi$ has a solution, then $f(\Phi)$ also has a solution.
\end{proposition}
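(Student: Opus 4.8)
The plan is to start from a satisfying truth assignment $\tau$ of $\Phi$, translate it into a valid profile assignment for the gadgets of $f(\Phi)$, and then invoke the composition machinery of our framework (together with \cref{prop: filler}) to turn that profile assignment into an actual Tatamibari solution. First I would fix the variable and wire profiles. For each variable $x_i$, assign its variable gadget the profile encoding the value $\tau(x_i)$; by \cref{lem: variable valid} this profile is locally solvable, and it fixes the parity of every wire leaving the variable. By the reduction's mirroring convention, a wire running up to a positive clause carries the true parity exactly when $\tau(x_i)$ is true, while a (vertically mirrored) wire running down to a negative clause carries the satisfying parity exactly when $x_i$ is false, i.e.\ exactly when the negative literal $\lnot x_i$ is satisfied. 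Each wire gadget is then assigned the profile matching the parity dictated by its variable endpoint, which lies in the wire's profile table (it is the parity fixed by \cref{cor:wire-parity}), and every unused variable wire ends at a terminator, which is locally solvable for either parity by \cref{terminal-wire independence}.

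Next I would handle the clauses. Since $\tau$ satisfies $\Phi$, every clause has at least one literal evaluating to true, so at least one wire entering the corresponding clause gadget arrives with the satisfying parity; because negative clause gadgets are reflections of positive ones, \cref{lem:one-true} applies equally after mirroring, with ``true'' read as ``the satisfying parity.'' Hence each clause gadget is locally solvable, and I assign it the profile realized by the local solution guaranteed by \cref{lem:one-true}. Collecting these choices over all variables, wires, terminators, and clauses yields a profile assignment in which every gadget is assigned a profile drawn from its own (proper, complete) profile table; by definition this is a \emph{valid} profile assignment, and the profiles agree on their shared optional areas because each optional area is resolved by the single gadget whose assigned profile covers it.

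Finally I would extend this valid profile assignment to a global solution. The local solutions supplied by the profile tables cover every gadget's mandatory area, and \cref{prop: filler} covers all regions exterior to the gadgets with satisfied filler rectangles. What remains is the four-corner constraint along gadget boundaries and where gadgets abut filler; these are discharged exactly by the staggering arguments already given, since every problematic corner lies on the side of a wire or a filler section where the meeting point of the outer sheathing rectangles is free to be shifted (as in the proof of \cref{prop: filler} and the corresponding remarks for the variable and clause gadgets). The hard part is precisely this global four-corner bookkeeping: ensuring that the many independent staggering choices can all be made simultaneously without conflict. This is manageable because each adjustable meeting point is local to one wire boundary and is otherwise unconstrained, so the choices do not interact, and all violations can be removed at once, producing a valid solution to $f(\Phi)$.
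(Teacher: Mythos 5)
Your proposal is correct and follows essentially the same route as the paper's proof: set the variable gadgets' profiles from the satisfying assignment (\cref{lem: variable valid}), observe each clause gadget is locally solvable because some incoming wire carries the satisfying parity (\cref{lem:one-true}), and invoke \cref{prop: filler} plus the sheathing/staggering adjustments to complete the global solution. Your version simply spells out more explicitly the wire/terminator profiles, the mirroring for negative clauses, and the four-corner bookkeeping that the paper leaves to the framework's composition algorithm.
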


\begin{proof}
We begin by taking the solution to $\Phi$ and setting the variable gadgets' profiles according to those values; by \cref{lem: variable valid}, they will all be locally solvable. By \cref{lem:one-true}, since each clause gadget is connected to wires representing variables which satisfy the clause, there must be a solution to the clause gadget. Furthermore, by \cref{prop: filler}, if the gadgets are satisfiable then the rest of the space can be filled without contradiction, producing a solution to $f(\Phi)$.
\end{proof}

\begin{proposition}
If $\Phi$ has no solution, then $f(\Phi)$ also has no solution.
\end{proposition}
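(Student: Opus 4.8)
The plan is to prove the contrapositive: from any solution to $f(\Phi)$ I would extract a satisfying assignment of $\Phi$. The whole reduction is engineered so that a global Tatamibari solution is forced, gadget by gadget, into one of the configurations recorded in the profile tables, and those configurations encode truth values; so the argument is essentially an assembly of the completeness results already proved for each gadget, rather than any new combinatorial analysis.

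First I would show that any solution $S$ of $f(\Phi)$ induces a \emph{valid profile assignment}. For each gadget $g$, let $P_g$ be the union of the rectangles of $S$ assigned to the clues lying in $g$'s mandatory area (recall that gadgets do not overlap in mandatory areas, so every clue belongs to a unique $g$). The safe-placement guarantee and the isolation arguments used in the completeness proofs---e.g. the alternating rows and sheathing for wires in \cref{lem:wire2x2}, the top row of single-cell squares and the uncoverable-cell argument for clauses in \cref{cor:table-complete}, and the coupler analysis of \cref{lem:coupler-configs}---confine each such rectangle to $g$'s entire area and force each optional area to be covered all-or-nothing; hence $P_g$ is a \emph{proper profile}. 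Since $S$ realizes $P_g$ as the union of areas assigned to $g$'s clues, $P_g$ cannot be locally impossible, so by completeness of every gadget's profile table it lies in that table. The profiles $\{P_g\}$ are pairwise disjoint (the rectangles of $S$ partition the grid and distinct gadgets own distinct clues, and each shared optional area is claimed by exactly one of its two gadgets) and cover the union of all entire areas (since $S$ covers every cell). Thus $\{P_g\}$ is a valid profile assignment.

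Next I would read off a truth assignment. By \cref{lem: variable valid}, each variable gadget's profile, being in its table, corresponds to a single consistent orientation of all its wires, i.e. a well-defined value for the variable; I set the variable accordingly. It then remains to verify that every clause is satisfied. By completeness of the clause profile table (\cref{cor:table-complete}, via \cref{cor:all-false}), the profile assigned to each clause gadget is not the all-false profile, so at least one incoming wire carries the clause's satisfying parity. Finally, \cref{cor:wire-parity} forces a wire to carry a single parity along its whole length, and the couplers (\cref{lem: variable gadget}) force that parity to equal the value of the variable the wire connects to, with the vertical mirroring of negative clauses and wires translating ``satisfying parity'' into ``the literal is true''. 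Hence the clause contains a true literal under my assignment, so every clause is satisfied and $\Phi$ is satisfiable, contradicting the hypothesis.

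The main obstacle is the first step: carefully arguing that a global solution really does induce a valid profile assignment, i.e. that no rectangle assigned to one gadget's clue leaks across a gadget boundary and that every shared optional area is claimed by exactly one side. This is exactly the content already packaged into the completeness (and safe-placement) proofs of the individual gadgets, so the work here is to invoke them in the right order rather than to prove anything genuinely new; once the profile assignment is in hand, the passage to a satisfying assignment is immediate from \cref{lem: variable valid,cor:table-complete,cor:all-false}.
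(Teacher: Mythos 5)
Your proof is correct and follows essentially the same route as the paper's: prove the contrapositive, read off a variable assignment from the variable gadgets via \cref{lem: variable gadget}, and use \cref{cor:wire-parity} together with the impossibility of the all-false clause configuration (\cref{cor:all-false}, \cref{lem:one-true}) to conclude every clause is satisfied. Your explicit intermediate step of packaging the global solution into a valid profile assignment is a more formal rendering of the framework machinery the paper already relies on, not a genuinely different argument.
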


\begin{proof}
We prove the equivalent statement that if $f(\Phi)$ has a solution, then $\Phi$ must also have a solution.

First, we prove that any solution to $f(\Phi)$ must correspond to some setting of the variables $x_1, \dots, x_n$ of $\Phi$. This is a consequence of \cref{lem: variable gadget}, which shows that all wires in a single variable gadget must carry the same value, which is then taken as the setting for that variable.

Next, we have to show that these settings of the variables $x_i$ are a solution of $\Phi$; to do this, note that by \cref{cor:wire-parity} each wire ending in a clause gadget must carry its value into this clause gadget; and by \cref{cor:all-false} and \cref{lem:one-true} there is a solution to the clause gadget if and only if the wires represent values which satisfy the clause.

Thus, the values of the variable gadgets must be a solution to $\Phi$.
\end{proof}

The above three propositions imply our desired result: %
\begin{theorem}
Tatamibari is (strongly) NP-hard.
\end{theorem}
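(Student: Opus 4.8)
The plan is to observe that the desired theorem is now an immediate consequence of the three preceding propositions, packaged as a single polynomial-time many-one reduction. First I would recall that planar rectilinear monotone 3SAT is (strongly) NP-hard~\cite{DK10}, and that the map $f$ built in the preceding subsections sends each such instance $\Phi$ to a bona fide Tatamibari puzzle $f(\Phi)$: every gadget is a valid subpuzzle over the three permitted clue symbols, the gadgets tile a rectangular grid, and the filler algorithm assigns a clue to every remaining cell, so $f(\Phi)$ is a legal $m' \times n'$ Tatamibari instance. This well-posedness is what lets us speak of ``$f(\Phi)$ has a solution'' as a genuine Tatamibari question.

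Next I would assemble the three propositions into the two ingredients a reduction requires. The size-and-time proposition gives that $f(\Phi)$ has dimensions polynomial in $n+m$ and is computable in time polynomial in $n+m$, so $f$ is a polynomial-time map. The remaining two propositions supply the correctness biconditional: if $\Phi$ is satisfiable then $f(\Phi)$ is solvable, and (via the stated contrapositive form) if $\Phi$ is unsatisfiable then $f(\Phi)$ is unsolvable. Together these say that $\Phi$ is satisfiable if and only if $f(\Phi)$ has a solution. Hence $f$ is a valid polynomial-time many-one reduction from planar rectilinear monotone 3SAT to Tatamibari, and NP-hardness of Tatamibari follows from NP-hardness of the source problem.

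Finally I would justify the \emph{strong} qualifier. A Tatamibari instance is purely combinatorial: its input is an $m' \times n'$ grid together with a symbol from a three-element alphabet in some cells, with no large numeric parameters to encode. The only numbers in play are the grid coordinates, and since the construction expands the rectilinear drawing of $\Phi$ by only a constant factor, these coordinates are bounded by a polynomial in $n+m$ (the drawing has height $O(m)$ and width $O(m+n)$ by planarity). Thus the unary and binary encodings of $f(\Phi)$ have polynomially related sizes, so weak and strong NP-hardness coincide for Tatamibari, and the strong NP-hardness of the source transfers.

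The main obstacle is not located in this final theorem at all: its entire substance lives in the gadget-correctness and composition results that feed the three propositions. The only points demanding care when writing this step are confirming that $f$ genuinely outputs valid Tatamibari puzzles and that the coordinate blow-up is polynomial rather than exponential, since the latter is precisely what secures the strong-hardness claim; both are already established, so the theorem reduces to citing the three propositions and the hardness of the source.
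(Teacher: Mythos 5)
Your proposal matches the paper's approach exactly: the paper derives this theorem immediately from the three preceding propositions (polynomial size/time of $f$, forward correctness, and reverse correctness), which is precisely the assembly you describe. Your additional remarks on well-posedness and on why the ``strongly'' qualifier is harmless are consistent with the paper's first proposition, which already notes that the drawing coordinates are polynomially bounded, so nothing is missing or different in substance.
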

Because a given Tatamibari solution can be trivially checked in polynomial time, this theorem implies that Tatamibari is NP-complete.

\section{Font}
\label{sec:font}

\cref{puzzle font} shows a series of twenty-six $10 \times 10$
Tatamibari puzzles that we designed,
whose unique solutions shown in \cref{solved font}
reveal each letter A--Z.
To represent a bitmap image in the solution of a Tatamibari puzzle,
we introduce two colors for clues, light and dark, and similarly shade
the regions corresponding to each clue.
As shown in \cref{solved font},
the letter is drawn by the dark regions from dark clues.
These puzzles were designed by hand, using our SAT-based solver \cite{github}
to iterate until we obtained unique solutions.
The font is also available online.%
\footnote{\url{http://erikdemaine.org/fonts/tatamibari/}}

\def\scale{1.2}

\begin{figure}
  \centering
  \includegraphics[scale=\scale]{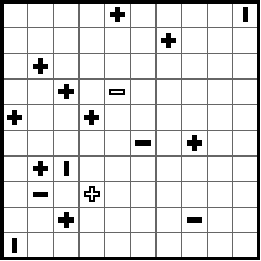}
  \includegraphics[scale=\scale]{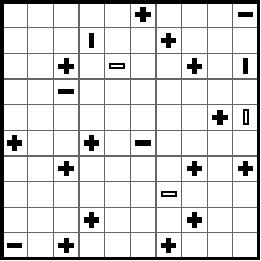}
  \includegraphics[scale=\scale]{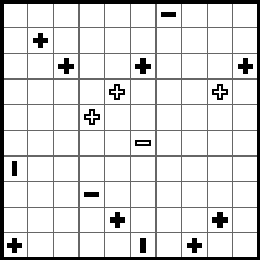}
  \includegraphics[scale=\scale]{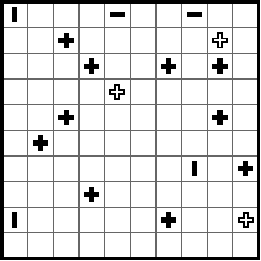}
  \includegraphics[scale=\scale]{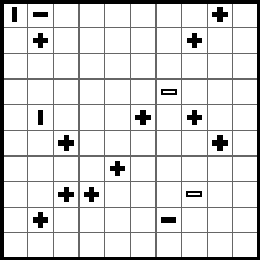}

  \smallskip
  \includegraphics[scale=\scale]{font/puzzle-F}
  \includegraphics[scale=\scale]{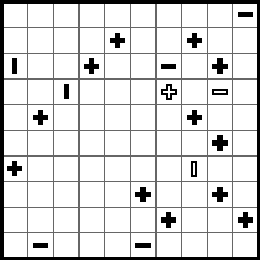}
  \includegraphics[scale=\scale]{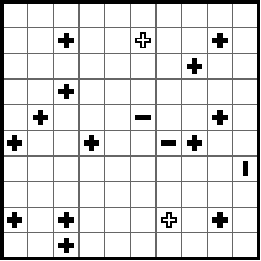}
  \includegraphics[scale=\scale]{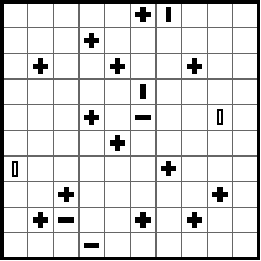}
  \includegraphics[scale=\scale]{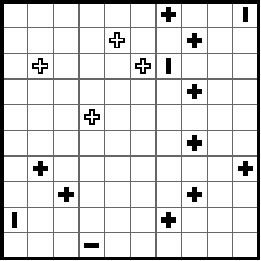}

  \smallskip
  \includegraphics[scale=\scale]{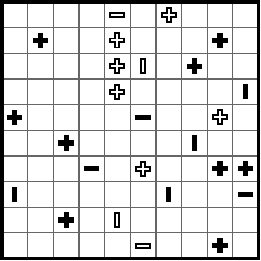}
  \includegraphics[scale=\scale]{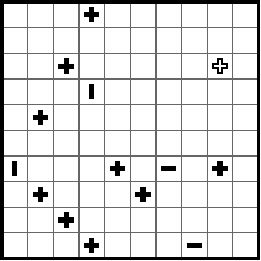}
  \includegraphics[scale=\scale]{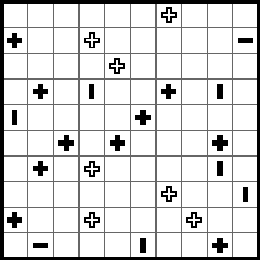}
  \includegraphics[scale=\scale]{font/puzzle-N}
  \includegraphics[scale=\scale]{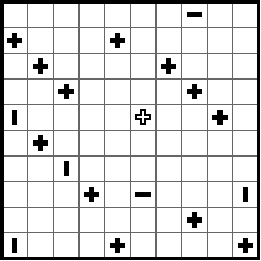}

  \smallskip
  \includegraphics[scale=\scale]{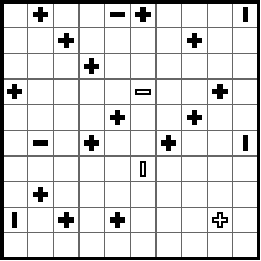}
  \includegraphics[scale=\scale]{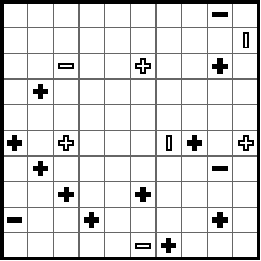}
  \includegraphics[scale=\scale]{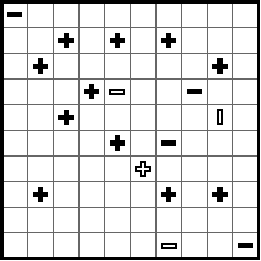}
  \includegraphics[scale=\scale]{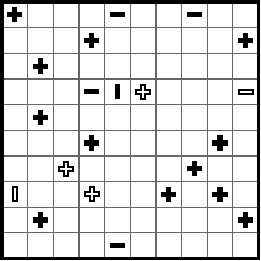}
  \includegraphics[scale=\scale]{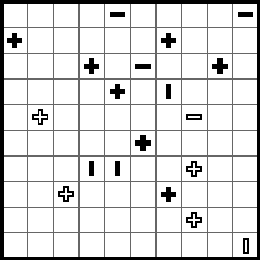}

  \smallskip
  \includegraphics[scale=\scale]{font/puzzle-U}
  \includegraphics[scale=\scale]{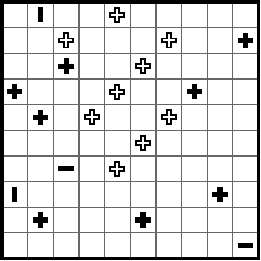}
  \includegraphics[scale=\scale]{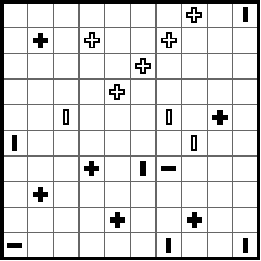}

  \smallskip
  \includegraphics[scale=\scale]{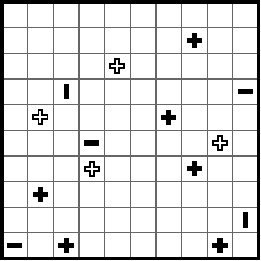}
  \includegraphics[scale=\scale]{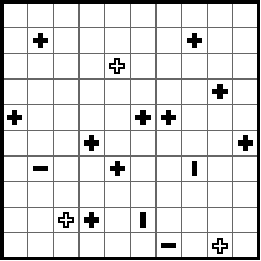}
  \includegraphics[scale=\scale]{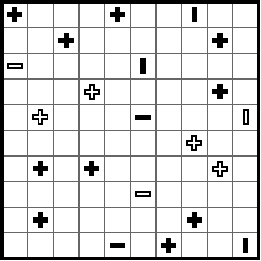}

  \caption{Puzzle font: each puzzle has a unique solution whose regions for
    dark clues (shown in \protect\cref{solved font})
    form the shape of a letter.}
  \label{puzzle font}
\end{figure}

\begin{figure}
  \centering
  \includegraphics[scale=\scale]{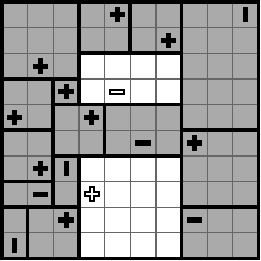}
  \includegraphics[scale=\scale]{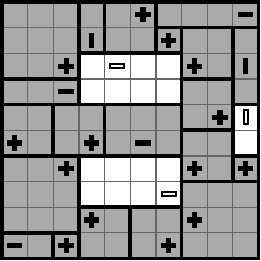}
  \includegraphics[scale=\scale]{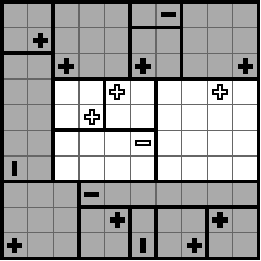}
  \includegraphics[scale=\scale]{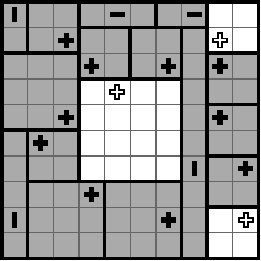}
  \includegraphics[scale=\scale]{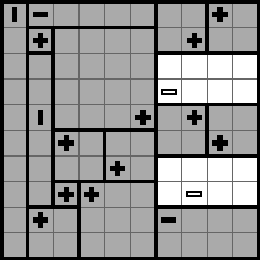}

  \smallskip
  \includegraphics[scale=\scale]{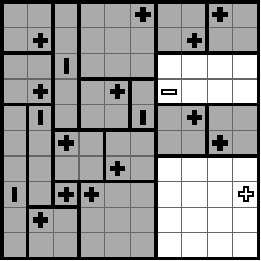}
  \includegraphics[scale=\scale]{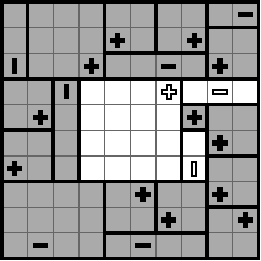}
  \includegraphics[scale=\scale]{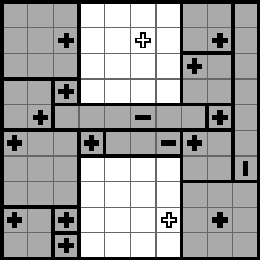}
  \includegraphics[scale=\scale]{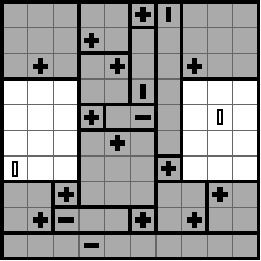}
  \includegraphics[scale=\scale]{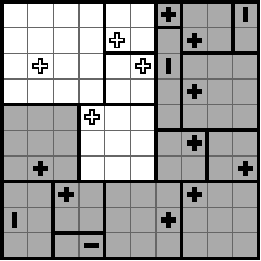}

  \smallskip
  \includegraphics[scale=\scale]{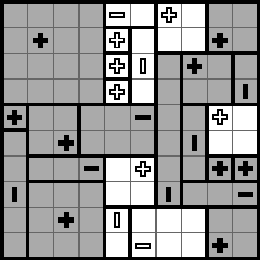}
  \includegraphics[scale=\scale]{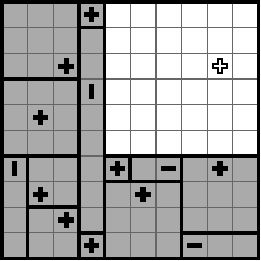}
  \includegraphics[scale=\scale]{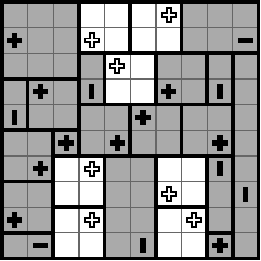}
  \includegraphics[scale=\scale]{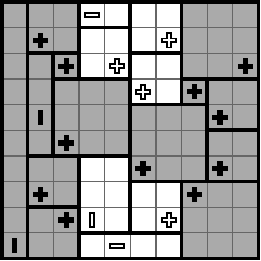}
  \includegraphics[scale=\scale]{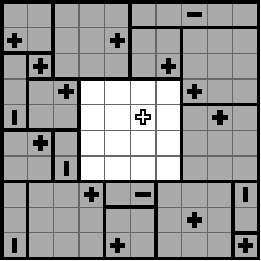}

  \smallskip
  \includegraphics[scale=\scale]{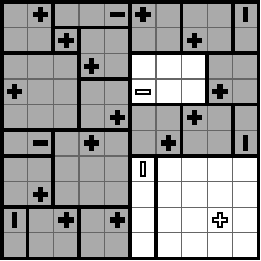}
  \includegraphics[scale=\scale]{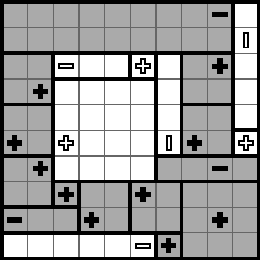}
  \includegraphics[scale=\scale]{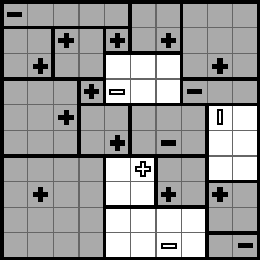}
  \includegraphics[scale=\scale]{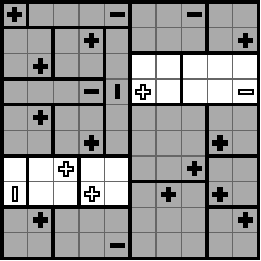}
  \includegraphics[scale=\scale]{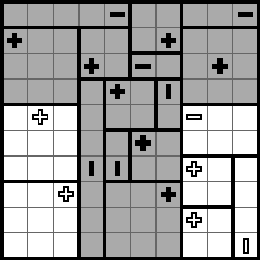}

  \smallskip
  \includegraphics[scale=\scale]{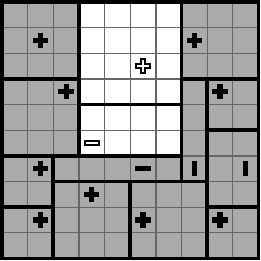}
  \includegraphics[scale=\scale]{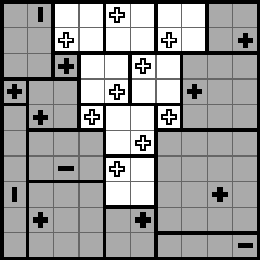}
  \includegraphics[scale=\scale]{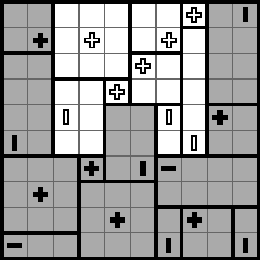}

  \smallskip
  \includegraphics[scale=\scale]{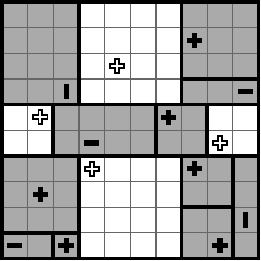}
  \includegraphics[scale=\scale]{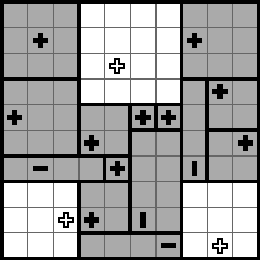}
  \includegraphics[scale=\scale]{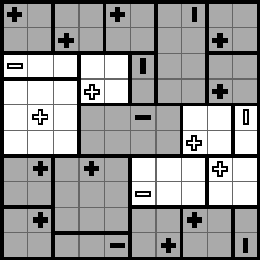}

  \caption{Solved font: unique solutions to the puzzles in \protect\cref{puzzle font}.}
  \label{solved font}
\end{figure}

\begin{figure}
  \centering
  \def\scale{1.4}
  \includegraphics[scale=\FUNscale]{font/solved-F}\qquad
  \includegraphics[scale=\FUNscale]{font/solved-U}\qquad
  \includegraphics[scale=\FUNscale]{font/solved-N}
  \caption{Solution to \cref{fig:FUN}.}
  \label{fig:FUN solved}
\end{figure}

\section{Open Problems} \label{sec:open-problems}
There remain interesting open questions regarding the computational complexity of Tatamibari.
When designing puzzles, it is often desired to have a single unique solution.
We suspect that Tatamibari is ASP-hard (NP-hard to determine whether it has another solution, given a solution), and that counting the number of solutions is \#P-hard.
However, our reduction is far from parsimonious.
Some rework of the gadgets, and a unique filler between gadgets, would be required to preserve the number of solutions.

We could ask about restrictions or natural variations of Tatamibari.
For example, we are curious whether a Tatamibari puzzle with only \plusclues, or only \vertclues, remains hard.
We have also wondered about the version of the puzzle without the four-corner constraint.
Although initially we thought of the four-corner constraint as a nuisance to be overcome in our reduction, our final proof uses it extensively and centrally.

\appendix
\section{Example: Spiral Galaxies} \label{sec:spiral-galaxies}

As an example of the gadget area hardness framework, we show how the NP-hardness proof for Spiral Galaxies from~\cite{SpiralGalaxies} can be described using the framework.  A Spiral Galaxies puzzle is a rectangular grid with clues in some grid cells or on some grid lines.  The goal is to partition the puzzle into areas with a single clue per area such that the area is rotationally symmetric about the clue.

We reduce from planar\footnote{Friedman's proof \cite{SpiralGalaxies} provides a crossover gadget, but it is not necessary because AND and NOT build a crossover~\cite{10.1145/1008354.1008356}.} Boolean circuit satisfiability.  We have a wire gadget, a variable gadget, NOT and AND gadgets, a fanout (wire duplicator) gadget, and a vertical shift gadget.  We lay out these gadgets to overlap in their optional areas (only), and communicate a truth value in whether the optional area is covered or not.

\newcommand\spiralscale{25}
\begin{figure}
  \centering
  \subcaptionbox{Unsolved wire gadget}{\includegraphics[scale=\spiralscale]{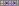}}\quad
  \subcaptionbox{Wire carrying true signal: $3 \times 2$ rectangles}{\includegraphics[scale=\spiralscale]{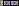}}\quad
  \subcaptionbox{Wire carrying false signal: alternating $1 \times 2$ and $5 \times 2$ rectangles}{\includegraphics[scale=\spiralscale]{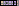}}
  \caption{Spiral Galaxies wire and its profile table (true and false solutions)}
  \label{fig:spiral-wire}
\end{figure}

\paragraph{Wire.}  The wire gadget consists of repeating pairs of clues three grid units apart.  There are two gadget solutions, shown in \cref{fig:spiral-wire}: repeating $3 \times 2$ rectangles, in which case the wire covers the right optional area, and alternating $1 \times 2$ and $5 \times 2$ rectangles, in which case the wire covers the left optional area.  The wire carries a true signal when it covers the right optional area and false when it covers the left optional area.  The wire gadget can be extended to arbitrary length in units of two clues.  (The proof in \cite{SpiralGalaxies} does not explicitly state this parity requirement, but the gate gadgets assume the true signal protrudes into the gadget to cover the optional input area and the false signal does not.)

Boolean circuit satisfiability requires the circuit produce a true output.  We can force a wire to be true simply by terminating it.  Because the wire has height two, any filler clues to the right of the wire cannot cover area in the wire gadget, so the wire must end in a $3\times 2$ rectangle to cover the right optional area, forcing the rest of the wire to also carry a true signal.

\begin{figure}
  \centering
  \subcaptionbox{Unsolved variable gadget}{\includegraphics[scale=\spiralscale]{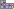}}\quad
  \subcaptionbox{Variable set to true}{\includegraphics[scale=\spiralscale]{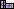}}\quad
  \subcaptionbox{Variable set to false}{\includegraphics[scale=\spiralscale]{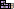}}
  \caption{Spiral Galaxies variable and its profile table (true and false solutions)}
  \label{fig:spiral-variable}
\end{figure}

\paragraph{Variable.}  The variable gadget is shown in \cref{fig:spiral-variable}.  There are two gadget solutions, one leaving the optional area uncovered (so the adjacent wire is set to true) and the other covering it (so the adjacent wire is set to false).  Choosing one solution or the other corresponds to assigning true or false to the variable.

\begin{figure}
  \centering
  \subcaptionbox{Unsolved NOT gadget}{\includegraphics[scale=\spiralscale]{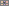}}\quad
  \subcaptionbox{NOT gadget converting true to false}{\includegraphics[scale=\spiralscale]{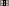}}\quad
  \subcaptionbox{NOT gadget converting false to true}{\includegraphics[scale=\spiralscale]{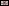}}
  \caption{Spiral Galaxies NOT gadget and its profile table}
  \label{fig:spiral-not}
\end{figure}

\paragraph{NOT.}  The NOT gadget is shown in \cref{fig:spiral-not}.  If the left optional area is covered by the input wire (carrying a true signal), the clue in the NOT gadget must cover a $1 \times 2$ rectangle, so the right optional area must be covered by the output wire carrying a false signal.  If the left optional area is uncovered (when the input wire is false), the clue in the NOT gadget covers both optional areas, so the output wire must carry a true signal.  Thus the NOT gadget inverts the wire's signal.

\begin{figure}
  \centering
  \subcaptionbox{Unsolved AND gadget (inputs at left)}{\includegraphics[scale=\spiralscale]{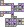}}

  \subcaptionbox{AND with true and true inputs}{\includegraphics[scale=\spiralscale]{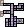}}\quad
  \subcaptionbox{AND with true and false inputs}{\includegraphics[scale=\spiralscale]{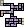}}

  \subcaptionbox{AND with false and true inputs}{\includegraphics[scale=\spiralscale]{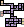}}\quad
  \subcaptionbox{AND with false and false inputs}{\includegraphics[scale=\spiralscale]{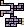}}
  \caption{Spiral Galaxies AND gadget and its profile table}
  \label{fig:spiral-and}
\end{figure}

\paragraph{AND.}  The AND gadget is shown in \cref{fig:spiral-and}.  When both inputs are true, both of the left optional areas are covered by the wire, so the clues to the right of the optional area are rectangles and the clue at the center of the gadget is a long vertical rectangle, allowing the right optional area to be covered, propagating a true signal from the gadget.  When either or both of the inputs are false, one or both of the left optional areas must be covered by the clue(s) to the right of the areas, blocking the central clue from covering a vertical rectangle, preventing the right optional area from being covered, thus propagating a false signal from the gadget.

\begin{figure}
  \centering
  \subcaptionbox{Unsolved fanout gadget (input at top)}{\includegraphics[scale=\spiralscale]{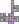}}

  \subcaptionbox{Fanout gadget duplicating true wire}{\includegraphics[scale=\spiralscale]{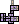}}\quad
  \subcaptionbox{Fanout gadget duplicating false wire}{\includegraphics[scale=\spiralscale]{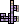}}
  \caption{Spiral Galaxies fanout gadget and its profile table}
  \label{fig:spiral-fanout}
\end{figure}

\paragraph{Fanout.}  Like the AND gadget, the fanout gadget (\cref{fig:spiral-fanout}) is also based around forming or not forming a vertical rectangle.  The upper optional area is the input.  When it is covered by the input wire (a true signal), the central clue cannot form a vertical rectangle, so the upper-right optional area must be covered by the clue to its left, and because the bottom cell in the central column is covered by the clue to its upper-left, the lower-right optional area must also be covered by the clue to its left.  When the upper optional area is not covered by the input wire, it must be covered by the central clue forming a vertical rectangle, so the output optional areas cannot be covered by the clues to their left.

\begin{figure}
  \centering
  \subcaptionbox{Unsolved shift gadget}{\includegraphics[scale=\spiralscale]{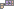}}\quad
  \subcaptionbox{Shift gadget shifting a true wire}{\includegraphics[scale=\spiralscale]{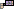}}\quad
  \subcaptionbox{Shift gadget shifting a false wire}{\includegraphics[scale=\spiralscale]{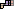}}
  \caption{Spiral Galaxies upward shift gadget and its profile table; the downward shift gadget is this gadget flipped vertically}
  \label{fig:spiral-shift}
\end{figure}

\paragraph{Shift.}  Because variable and gate outputs are on the right and gate inputs are on the left, we do not need a turn gadget, but we do need to shift wires vertically, which is done using the shift gadget.  An upward shift gadget is shown in \cref{fig:spiral-shift}; the downward shift gadget is that gadget's reflection across the horizontal axis.  When the input wire is true, the input wire covers the left optional area, so the left clue is covered by a single cell and the right clue covers the right optional area, propagating true on the output.  When the input wire is false, the left clue covers the left optional area and forces the right clue to be a $1 \times 2$ rectangle, leaving the right optional area uncovered, propagating false on the output.

\paragraph{Layout.}  Friedman's proof in \cite{SpiralGalaxies} omits discussion of layout, but we sketch a layout algorithm here.  We start with a grid embedding of the input planar Boolean circuit.  We scale the grid by at least 6 so that our wire gadget fits for unit-length wires, but possibly by a greater factor if the grid embedding has long vertical segments, because our shift gadget consumes horizontal distance to move vertically.

\paragraph{Filling algorithm.}  The filling algorithm places a clue in the center of every cell that isn't part of a gadget, forcing them to be covered by single-cell areas.  Filler clues could only cover area in a gadget if two cells in the gadget area are separated by one filler clue and those cells do not themselves have clues.  This is avoided in all gadgets by ensuring all gadget cells that are separated by filler are separated by two or more filler cells, so only local gadget solutions are possible.

\paragraph{Composition algorithm.}  The local gadget solutions are already consistent with each other, so to form an area assignment for the entire puzzle, the composition algorithm simply takes the local gadget solutions and assigns each filler clue to the cell containing it.

\paragraph{Proper and complete profile tables.}  The profile tables are proper because they contain only proper profiles.  Because the filler clues cannot cover area in the gadgets, we can verify by case analysis that the profile tables are complete (all other profiles are locally impossible).  This completes the proof.

\section*{Acknowledgments}

This work was initiated during open problem solving in the MIT class on
Algorithmic Lower Bounds: Fun with Hardness Proofs (6.890)
taught by Erik Demaine in Fall 2014.
We thank the other participants of that class
for related discussions and providing an inspiring atmosphere.

\bibliographystyle{alpha}
\bibliography{ref}
\end{document}